\newcommand{\pushright}[1]{\ifmeasuring@#1\else\omit\hfill$\displaystyle#1$\fi\ignorespaces}
\newcommand{\pushleft}[1]{\ifmeasuring@#1\else\omit$\displaystyle#1$\hfill\fi\ignorespaces}
\newcommand{\R}{\mathbb{R}}
\newcommand{\Zd}{\mathbb{Z}^d}
\newcommand{\Ztwo}{\mathbb{Z}^2}
\newcommand{\Ed}{\mathbb{E}^d}
\newcommand{\Etwo}{\mathbb{E}^2}
\newcommand{\Tc}{T_{\mathrm{\scriptscriptstyle c}}}
\newcommand{\betac}{\beta_{\mathrm{\scriptscriptstyle c}}}
\newcommand{\normI}[1]{\left\|#1\right\|_{\scriptscriptstyle 1}}
\newcommand{\normsup}[1]{\left\|#1\right\|_{\scriptscriptstyle\infty}}
\newcommand{\setof}[2]{\{#1\,:\,#2\}}
\newcommand{\bsetof}[2]{\bigl\{#1\,:\,#2\bigr\}}
\newcommand{\given}{\,|\,}
\newcommand{\bgiven}{\bigm\vert}
\newcommand{\PottsM}{\mu}
\newcommand{\PottsZ}{\mathcal{Z}}
\newcommand{\icl}{\xi_\beta}
\newcommand{\iclx}{\xi_{x'}}
\newcommand{\comp}{{\scriptscriptstyle\mathrm{c}}}
\newcommand{\eone}{\vec e_1}
\newcommand{\etwo}{\vec e_2}
\newcommand{\Jc}{J_{\mathrm{\scriptscriptstyle c}}}
\newcommand{\xpc}{x'_{\mathrm{\scriptscriptstyle c}}}
\newcommand{\xc}{x_{\mathrm{\scriptscriptstyle c}}}
\newcommand{\cone}{\mathcal{Y}}
\newcommand{\fcone}{\mathcal{Y}^\blacktriangleleft}
\newcommand{\bcone}{\mathcal{Y}^\blacktriangleright}
\newcommand{\diam}{D}
\newcommand{\nfe}{\theta_0}
\newcommand{\pfe}{\theta_1}
\newcommand{\e}{\mathbb{E}}
\newcommand{\ebf}{\mathbf{E}}
\newcommand{\p}{\mathbb{P}}
\newcommand{\pbf}{\mathbf{P}}
\newcommand{\Z}{\mathbb{Z}}
\newcommand{\tree}{\mathcal{T}}
\newcommand{\Ham}{\mathcal{H}}
\newcommand{\Line}{\mathcal{L}}
\newcommand{\bc}{\mathcal{B}}
\newcommand{\alp}{\mathcal{A}}
\newcommand{\cut}{\textnormal{Cuts}}
\newcommand{\calS}{\mathcal{S}}
\newcommand{\compl}[1]{\overline{#1}}
\newcommand{\subProb}{\Psi}
\newcommand{\bfx}{\mathbf{x}}
\newcommand{\bfy}{\mathbf{y}}
\newcommand{\bfB}{\mathbf{B}}
\newcommand{\bfX}{\mathbf{X}}
\newcommand{\clusterSet}{\mathfrak{C}}
\newcommand{\statMes}{\mathbf{p}}
\newcommand{\calI}{\mathcal{I}}
\newcommand{\calR}{\mathcal{R}}
\newcommand{\bbA}{\mathbb{A}}
\newcommand{\bbB}{\mathbb{B}}
\newcommand{\bbC}{\mathbb{C}}
\newcommand{\decouplLaw}{\Xi}
\newcommand{\displace}{V}
\newcommand{\trajApp}{\tilde{\displace}}
\newcommand{\trajAppFact}{\bar{\displace}}
\newcommand{\traj}[1]{#1^{\mathrm{\scriptscriptstyle traj}}}
\newcommand{\barK}{\overline{K}}
\newcommand{\innerBox}{\Delta}
\newcommand{\outerBox}{\overline{\Delta}}
\newcommand{\shade}{\mathfrak{sh}}
\newcommand{\etaDob}{\eta^{\rm\scriptscriptstyle Dob}}
\newcommand{\bawa}{b}
\newcommand{\fowa}{f}
\newcommand{\wired}{\mathrm{\scriptscriptstyle w}}
\newcommand{\free}{\mathrm{\scriptscriptstyle f}}
\newcommand{\IF}[1]{\mathds{1}_{\{#1\}}}
\renewcommand{\nleftrightarrow}{\mathrel{\ooalign{$\leftrightarrow$\cr\hidewidth$/$\hidewidth}}}
\renewcommand{\emptyset}{\varnothing}
\newcommand{\calB}{\mathcal{B}}
\theoremstyle{plain}
\newtheorem{theorem}{Theorem}[section]
\newtheorem{lemma}[theorem]{Lemma}
\newtheorem{proposition}[theorem]{Proposition}
\newtheorem{definition}{Definition}[section]
\newtheorem{remark}{Remark}[section]
\newtheorem{claim}{Claim}[section]
\theoremstyle{definition}
\newtheorem{obs}{Observation}
\author{S\'{e}bastien Ott}
\address{Section de Mathématiques, Université de Genève, CH-1211 Genève, Switzerland}
\email{sebastien.ott@unige.ch}
\author{Yvan Velenik}
\address{Section de Mathématiques, Université de Genève, CH-1211 Genève, Switzerland}
\email{yvan.velenik@unige.ch}
\title{Potts models with a defect line}
\begin{document}
	
\begin{abstract}
	We provide a detailed analysis of the correlation length in the direction parallel to a line of modified coupling constants in the ferromagnetic Potts model on \(\Zd\) at temperatures \(T>\Tc\). We also describe how a line of weakened bonds pins the interface of the Potts model on \(\Ztwo\) below its critical temperature. These results are obtained by extending the analysis in~\cite{Friedli+Ioffe+Velenik-2013} from Bernoulli percolation to FK-percolation of arbitrary parameter \(q\geq 1\).
\end{abstract}

\maketitle


\section{Introduction and results}
\label{sec:Intro}

In 1980--81, Abraham published two papers~\cite{Abraham-1980,Abraham-1981} on the effect of a row of modified coupling constants on the interface of the two-dimensional Ising model, discussing what would later be recognized as pinning and wetting transitions. Being based on exact computations, these results provided precise information but little understanding on the underlying mechanisms. The desire to obtain a better understanding immediately led to an an intense activity (see~\cite{vanLeeuwen+Hilhorst-1981,Burkhardt-1981,Chalker-1981,Chui+Weeks-1981,Kroll-1981,Vallade+Lajzerowicz-1981} for some examples published in 1981 and~\cite{Fisher-1984} for a well-known early review). In all these papers, the same problems were tackled in the much simpler setting of \emph{effective interface models}: basically, modeling the interface as the trajectory of some random walk in suitable potentials. This approach provided not only a better understanding, but also allowed to consider various generalizations: one-dimensional paths in higher dimension (modeling a polymer, for example), higher-dimensional interfaces, random potentials, etc.
Note that there is still interest in such issues in the physics community (see, for example, \cite{Delfino-2016} for a recent exact approach, based on more sophisticated field theoretical techniques).
The analysis of effective models has also generated a lot of interest among mathematical physicists and probabilists: see, for instance, \cite{Velenik-2006,Giacomin-2007} for reviews. In the meantime, new techniques to analyze nonperturbatively various lattice spin systems have been developed~\cite{Campanino+Ioffe+Velenik-2003,Campanino+Ioffe+Velenik-2008}, making it potentially possible to import back the results about effective interface models to the ``genuine'' spin systems that originally motivated their analysis. This is precisely the purpose of the present paper, in which we provide a detailed description of the longitudinal correlation length of the Potts model on \(\Zd\) above the critical temperature in the presence of a line of modified coupling constants, as well as an analysis of the pinning of a Potts interface by a line of defects in the two-dimensional model below its critical temperature. (More generally, our results apply to all random-cluster models with parameter \(q\geq 1\).)
The results we obtain are in full agreement with the predictions by effective models.

\subsection{Correlation length of the Potts model on \(\Zd\) above \(\Tc\)}

Thanks to the self-duality of the \(2d\) Ising model, the problems analyzed in~\cite{Abraham-1980,Abraham-1981} admit equivalent reformulations in terms of the inverse correlation length of a \(2d\) Ising model above its critical temperature, in the presence of a line along which the coupling constants are modified. Such an analysis, based on exact computations, was undertaken by McCoy and Perk~\cite{McCoy_Perk-1980}, independently of the previously mentioned works and at the same time. An advantage of this dual version is that it admits immediate generalizations to higher-dimensional lattices. In this section, we investigate this problem in the more general case of Potts models on \(\Zd\). The low temperature setting for the Potts model on \(\Ztwo\) will be discussed in Section~\ref{sec:Potts:LT}.

Given \(i=(i_1,\ldots,i_d)\in\Zd\), we write \(i^\parallel = i_1\) and 
\(i^\perp = (i_2,\ldots,i_d)\) and we set \(\Line = 
\setof{i\in\Zd}{i^\perp=0}\). Moreover, let 
\(\Ed=\setof{\{i,j\}\subset\Zd}{\normI{j-i}=1}\).

Let \(\Omega_q^d=\{1,\ldots,q\}^{\Zd}\) be the set of configurations of the 
\(q\)-state Potts model on \(\Zd\). Given \(\Lambda\Subset\Zd\)
(that is, \(\Lambda\subset\Zd\) and finite)
and \(J\geq 
0\), we associate to \(\omega\in\Omega_q^d\) the energy
\[
\Ham_{\Lambda;J}(\omega) = -\sum_{\{i,j\}\cap\Lambda\neq\emptyset} J_{i,j} \delta_{\omega_i,\omega_j} ,
\]
where the coupling constants \((J_{i,j})_{i,j\in\Zd}\) are given by
\[
J_{i,j} =
\begin{cases}
1	&	\text{if \(\{i,j\}\in\Ed\) with
	\(\{i,j\}\not\subset\Line\)
	,} \\
J	&	\text{if \(\{i,j\}\in\Ed\) with \(i,j\in\Line\),} \\
0	&	\text{otherwise.}
\end{cases}
\]
The Gibbs measure in \(\Lambda\Subset\Zd\), with boundary condition \(\eta\in\Omega_q^d\) and at inverse temperature \(\beta=1/T\), is the probability measure on \(\Omega_q^d\) given by
\[
\PottsM_{\Lambda;\beta,J}^\eta (\omega) =
\begin{cases}
\bigl( \PottsZ_{\Lambda;\beta,J}^\eta \bigr)^{-1}\, 
e^{-\beta\Ham_{\Lambda;J}(\omega)}		&	\text{if }\omega_i=\eta_i\;\forall 
i\not\in\Lambda, \\
0	&	\text{otherwise.}
\end{cases}
\]
Finally, the associated infinite-volume Gibbs measures are all probability measures \(\PottsM\) on \(\Omega_q^d\) satisfying
\[
\PottsM(\,\cdot \given \mathcal{F}_{\Lambda^\comp})(\eta) = \PottsM_{\Lambda;\beta,J}^\eta (\,\cdot\,) \qquad\forall\Lambda\Subset\Zd
\]
for \(\PottsM\)-almost every \(\eta\in\Omega_q^d\). Here, \(\mathcal{F}_{\Lambda^\comp}\) is the \(\sigma\)-algebra generated by the random variables \((\omega_i)_{i\not\in\Lambda}\).

\medskip
We first recall a few results concerning the \emph{homogeneous} model, in which 
\(J=1\). In this case, it is well-known that, for any \(d\geq 2\), there exists 
\(\betac=\betac(d)\in(0,\infty)\) such that there is a unique infinite-volume 
Gibbs measure when \(\beta<\betac=1/\Tc\), but infinitely many infinite-volume Gibbs 
measures when \(\beta>\betac\). Assume that \(\beta<\betac\) and denote by 
\(\PottsM_{\beta}\) the (unique) infinite-volume Gibbs measure. 
Then, the inverse correlation length is 
positive~\cite{Duminil-Copin+Raoufi+Tassion-2017}:
\[
\icl = \lim_{n\to\infty} -\frac1n \log \bigl( \PottsM_{\beta}(\omega_0=\omega_{n\eone}) - \frac1q \bigr) > 0 .
\]
More precisely, the following \emph{Ornstein--Zernike} asymptotics 
hold~\cite{Campanino+Ioffe+Velenik-2008}: there exists \(C_\beta=C_\beta(q,d)>0\) such that, 
as \(n\to\infty\),
\begin{equation}
\label{eq:OZPotts}
\PottsM_{\beta}(\omega_0=\omega_{n\eone}) = \frac1q + \frac{C_\beta}{n^{(d-1)/2}}\, e^{-\icl n}\, (1+o(1)).
\end{equation}

\medskip
Let us now consider general values of \(J\geq 0\). We still assume that 
\(\beta<\betac\) (with the \(\betac\) defined above). It turns out\footnote{This follows, for example, from our analysis below; see Remark~\ref{rem:uniqueness}.} that there is still a unique infinite-volume Gibbs measure in this case,
which we denote by \(\PottsM_{\beta,J}\). We define the longitudinal inverse correlation length 
as follows: for any \(x\in\Zd\),
\begin{equation}
\label{eq:InvCorLenPotts}
\icl(J) = \lim_{n\to\infty} -\frac1n \log \bigl( 
\PottsM_{\beta;J}(\omega_x=\omega_{x+n\eone}) - \frac1q \bigr).
\end{equation}
We first claim that
\begin{theorem}
\label{thm:BasicPropPotts}
For any \(\beta<\betac\), the following properties hold:
\begin{enumerate}[label=(\roman*)]
\item The limit in~\eqref{eq:InvCorLenPotts} exists and is independent of \(x\).
\item \(\icl(J) > 0\) for all \(J\geq 0\).
\item \(J \mapsto \icl(J)\) is Lipschitz-continuous and nonincreasing.
\item There exist \(c_+,c_->0\), depending on \(\beta\), \(q\) and \(d\), such that
\[
c_+ e^{-\beta J} \geq \icl(J) \geq c_- e^{-\beta J}
\]
for all \(J\) sufficiently large.
\item \(\icl(J) = \icl(1) = \icl\) for all \(J\leq 1\).
\end{enumerate}
\end{theorem}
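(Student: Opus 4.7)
The plan is to base everything on the FK/random-cluster representation. Let $\phi_{\beta,J}$ denote the infinite-volume random-cluster measure on $\Zd$ with cluster parameter $q$, edge-weight $p=1-e^{-\beta}$ on bulk edges and $p_J=1-e^{-\beta J}$ on edges of $\Line$; the Edwards--Sokal coupling gives
\[
\PottsM_{\beta,J}(\omega_x=\omega_{x+n\eone})-\tfrac{1}{q}=(1-\tfrac{1}{q})\,\phi_{\beta,J}(x\leftrightarrow x+n\eone),
\]
so all five items reduce to statements about this connection function.

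For (i), since $q\geq 1$ the FKG inequality holds, and combined with translation invariance along $\eone$ it gives the superadditivity
\[
\phi_{\beta,J}(0\leftrightarrow(m+n)\eone)\geq\phi_{\beta,J}(0\leftrightarrow m\eone)\,\phi_{\beta,J}(m\eone\leftrightarrow(m+n)\eone),
\]
so Fekete's lemma yields existence of the limit for $x=0$. For arbitrary $x$, FKG with short bulk connectors from $x$ to $x^\parallel\eone$ and from $x+n\eone$ to $(x^\parallel+n)\eone$ sandwiches the rate, giving independence of $x$. For (ii), any open path from $0$ to $n\eone$ either uses at least $n/2$ edges of $\Line$ (probability $\leq p_J^{n/2}\leq e^{-cn}$ since $J<\infty$) or at least $n/2$ bulk edges, to which the exponential decay of bulk FK-connections at $\beta<\betac$~\cite{Duminil-Copin+Raoufi+Tassion-2017} applies via stochastic comparison. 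Monotonicity in (iii) is direct from stochastic monotonicity in $p_J$; Lipschitz continuity follows from a Radon--Nikodym estimate: replacing $J$ by $J'$ on a finite segment $S\subset\Line$ reweights $\phi_{\beta,J}$ by a density bounded by $e^{C\beta|J-J'|\cdot|S|}$, and applying this with $|S|\sim n$ around a connection yields $|\icl(J)-\icl(J')|\leq C\beta|J-J'|$ in the limit $n\to\infty$.

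The upper bound in (iv) is obtained by forcing all $n$ line edges between $0$ and $n\eone$ to be open: finite energy (available for $q\geq 1$) gives $\phi_{\beta,J}(0\leftrightarrow n\eone)\geq c\,p_J^n$, hence $\icl(J)\leq -\log p_J$, which is of order $e^{-\beta J}$ for large $J$. For (v), monotonicity from (iii) already gives $\icl(J)\geq\icl$ when $J\leq 1$. For the reverse inequality, one routes the connection one step off the line: FKG yields
\[
\phi_{\beta,J}(0\leftrightarrow n\eone)\geq\phi_{\beta,J}(0\leftrightarrow\etwo)\,\phi_{\beta,J}(\etwo\leftrightarrow n\eone+\etwo)\,\phi_{\beta,J}(n\eone+\etwo\leftrightarrow n\eone),
\]
and the middle factor is essentially a bulk two-point function governed by~\eqref{eq:OZPotts}, giving $\icl(J)\leq\icl$.

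The main obstacle is the lower bound $\icl(J)\geq c_-e^{-\beta J}$ in (iv). To establish it, I expect one must show that on the connection event $\{0\leftrightarrow n\eone\}$ a linear-in-$n$ number of open line edges are typically required, each carrying a multiplicative cost of order $e^{-\beta J}$ over the closed alternative. The natural tool is a renewal/Ornstein--Zernike decomposition of the connection path in the spirit of~\cite{Campanino+Ioffe+Velenik-2008,Friedli+Ioffe+Velenik-2013}, which is precisely the technology that the paper advertises extending to $q\geq 1$.
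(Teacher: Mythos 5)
Your reduction to the FK representation, the Fekete/FKG argument for (i), the monotonicity part of (iii), and the finite-energy upper bound in (iv) all match the paper. However, there are genuine gaps precisely where the real work lies. First, item (ii): your dichotomy fails on both branches. For the branch with many bulk edges, stochastic comparison goes the \emph{wrong} way: for $J>1$ the inhomogeneous measure dominates the homogeneous one, so the exponential decay of~\cite{Duminil-Copin+Raoufi+Tassion-2017} gives no upper bound on increasing connection events under $\p_{x'}$; the possible enhancement of bulk connections by the strong line is exactly the difficulty. For the branch with many line edges, the bound $p_J^{n/2}$ applies only to a prescribed set of edges; the event that \emph{some} open path uses at least $n/2$ line edges requires an entropy/union-bound control you do not provide (and there is no BK inequality for $q>1$), and the two branches cannot be treated separately anyway, since a single path mixes line and bulk edges. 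The paper handles this via a coupling of $\p$ and $\p_{x'}$ agreeing outside the cluster of $\Line$, combined with a multi-scale coarse-graining (trees of $K$-boxes, covered/uncovered boxes along $\Line$, and a finite-energy argument forcing uncovered boxes to be non-empty) to prove uniform exponential decay (Lemma~\ref{lem:UniformExpDecayLine}); this is the heart of the theorem and is absent from your proposal. The same coarse-graining also produces the lower bound in (iv) (a positive density of pivotal line edges, each costing a factor $x'/(1+x')$), which you explicitly leave open.

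Second, item (v): routing the connection one lattice step off the line does not work, because the middle factor $\p_{x'}(\etwo\leftrightarrow n\eone+\etwo)$ is still computed under the weakened-line measure, and for $J<1$ monotonicity bounds it \emph{above}, not below, by the homogeneous two-point function; asserting it is ``essentially'' governed by~\eqref{eq:OZPotts} is precisely what must be proved. The paper lifts the connection to height $n^{\alpha}$ with $\alpha>1/2$, cuts it into pieces of length $n^{\beta}$, and uses a coupling in which the two measures coincide outside the cluster of $\Line$, together with homogeneous exponential decay, to show that with probability $1-o(1)$ the connecting cluster never touches $\Line$, so the homogeneous estimate transfers at subexponential cost; one step of separation is insufficient since the connecting cluster has unbounded width and would hit the line. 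Finally, your Lipschitz argument in (iii) also needs repair: one cannot reweight the whole (infinite) line by a density, and restricting to a segment of length of order $n$ must itself be justified; the paper instead conditions on $C_0$ and cancels the contribution of the line outside $C_0$ (see~\eqref{eq:upBndRep}), then controls $|C_0\cap\Line|$ using the exponential decay from (ii) --- so the continuity statement is downstream of the positivity you have not established.
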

It follows in particular that
\[
\Jc = \Jc(\beta,q,d) = \sup\setof{J\geq 0}{\icl(J) = \icl}
\]
is well defined, for any \(\beta<\betac\), and satisfies \(\infty > \Jc \geq 1\). (See Figure~\ref{fig:icl_Ising} for an illustration in the case of the two-dimensional Ising model.)
\begin{remark}
The word ``longitudinal'' above refers to the fact that we consider the correlation length in a direction parallel to the defect line. One could, in a similar fashion, define the \emph{transverse} correlation length, by replacing \(\eone\) by \(\etwo\) in the definition. However, it is not difficult to show that this quantity always coincides with the corresponding quantity in the homogeneous model.
\end{remark}
Our next result provides 
information on the value of \(\Jc\):
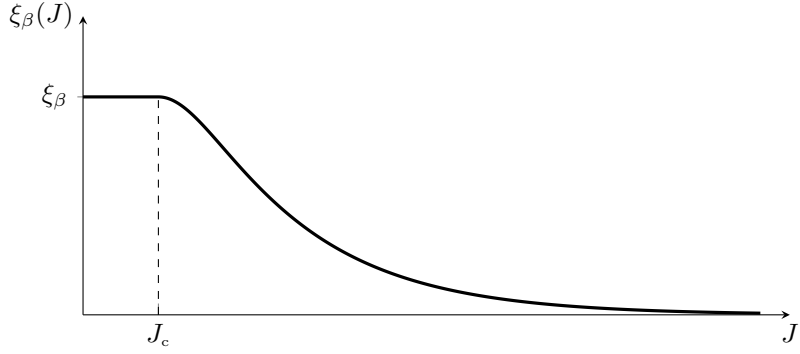
\begin{figure}
\begin{tikzpicture}
    \begin{axis}[
            clip,
            x=3cm,y=6cm,
            xmin=0, xmax=3.1,
            ymin=0, ymax=.66,
            axis lines = middle,
            xtick={0.3305},
            xtick align=inside,
            xticklabel={$\Jc$},
            ytick={.481586},
            yticklabel={$\icl$},
            xlabel={$J$},
            xlabel style={at=(current axis.right of origin), anchor=north},
            ylabel={$\icl(J)$},
            ylabel style={at=(current axis.above origin), anchor=east},
            no markers
        ] 
        \addplot+[domain=0:3,black,solid,very thick] file {images/xi_Ising.txt};
        \addplot+[mark=none,dashed,black] coordinates {(0.3305,0) (0.3305, 0.481586)};
    \end{axis}
\end{tikzpicture}
\caption{
The graph of \(\icl(J)\) for the two-dimensional Ising model at \(\beta=.75\betac\), as computed in~\cite{McCoy_Perk-1980}. As is proved for general Potts models in Theorem~\ref{thm:Jc_Potts}, \(\Jc=1\) in this case.
}
\label{fig:icl_Ising}
\end{figure}

\begin{theorem}\label{thm:Jc_Potts}
\(\Jc = 1\) when \(d=2\) or \(d=3\), but \(\Jc > 1\) when \(d\geq 4\).
\end{theorem}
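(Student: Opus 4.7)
The plan is to translate to the FK (random-cluster) representation and apply the Ornstein--Zernike machinery of~\cite{Campanino+Ioffe+Velenik-2008}, extended in the spirit of~\cite{Friedli+Ioffe+Velenik-2013} to account for the defect line $\Line$. A cluster connecting $0$ to $n\eone$ is split along its $\eone$-cone points into a concatenation of \emph{irreducible} animals, which we further classify as \emph{bulk} (interior disjoint from $\Line$) or \emph{defect} (interior meeting $\Line$). The modified coupling $J$ affects only the weights of defect animals, and trivially so when $J=1$.

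After tilting by $e^{\lambda\, i^\parallel}$ and resumming the geometric series over alternating bulk/defect pieces, the decomposition yields a renewal-type criterion of the form
\[
\alpha(J)\,\Phi(\lambda) = 1,
\]
where $\alpha(J)$ is continuous and nondecreasing in $J$ with $\alpha(J)>0$ iff $J>1$, and
\[
\Phi(\lambda) = \sum_{n\geq 1} e^{\lambda n}\,\phi^q_\beta\bigl(0\leftrightarrow n\eone,\ \text{no interior visit to } \Line\bigr)
\]
is the tilted ``bulk-return'' generating function in the homogeneous FK model. The longitudinal inverse correlation length $\icl(J)$ is the smallest $\lambda\in[0,\icl]$ solving this criterion, and $\icl(J)=\icl$ otherwise. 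Hence $\icl(J)<\icl$ iff $\alpha(J)\,\Phi(\icl)\geq 1$, and the entire theorem reduces to the single question of whether $\Phi(\icl)$ is finite or infinite.

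This is settled by the OZ asymptotic~\eqref{eq:OZPotts}, which (with the same $n^{-(d-1)/2}$ prefactor) should persist for the bulk-return connectivity above, since forbidding interior visits to the codimension-$(d-1)$ set $\Line$ merely rescales constants in the OZ Gaussian fluctuations. One therefore gets $\Phi(\icl) \asymp \sum_n n^{-(d-1)/2}$, which is divergent when $d\in\{2,3\}$ and convergent when $d\geq 4$. In the divergent case, every $J>1$ gives $\alpha(J)\,\Phi(\icl)=\infty$, forcing $\icl(J)<\icl$, hence $J_c=1$. In the convergent case, $\alpha(1)=0$ together with continuity of $\alpha$ yield $\alpha(J)\,\Phi(\icl)<1$ for $J$ in a right-neighbourhood of $1$, so $\icl(J)=\icl$ there and $J_c>1$.

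The main obstacle is to make the irreducible-with-defect decomposition and the renewal equation rigorous at the level of FK-percolation, and to upgrade the OZ asymptotic of~\cite{Campanino+Ioffe+Velenik-2008} to the bulk-return connectivity defining $\Phi$. Both steps amount to adapting the cone-point surgery, the coupling constructions and the cluster-expansion estimates of~\cite{Campanino+Ioffe+Velenik-2008,Friedli+Ioffe+Velenik-2013} from Bernoulli percolation to the $q\geq 1$ random-cluster model in the presence of a straight-line defect; once these technical foundations are set up, the finite/infinite dichotomy above immediately delivers Theorem~\ref{thm:Jc_Potts}.
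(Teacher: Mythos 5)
Your overall strategy (pass to the FK representation, decompose along cone-points into an effective directed walk, and decide pinning by a renewal/return criterion governed by $\sum_n n^{-(d-1)/2}$) is in the same spirit as the paper, but the specific reduction you propose contains a genuine error at its central step. You claim that the ``bulk-return'' weights entering $\Phi$, i.e.\ $e^{\icl n}\,\p(0\leftrightarrow n\eone,\ \text{no interior visit to }\Line)$, retain the Ornstein--Zernike prefactor $n^{-(d-1)/2}$ because forbidding visits to $\Line$ ``merely rescales constants''. This is false precisely in the dimensions that matter: conditioning the effective $(d-1)$-dimensional transverse walk to avoid the origin strictly between its endpoints is an entropic-repulsion constraint that changes the exponent. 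In $d=2$ the excursion (first-return) weight decays like $n^{-3/2}$, not $n^{-1/2}$ (the paper itself records this $n^{-3/2}$ prefactor for $J<\Jc$ in the Open problems section), and in $d=3$ one expects $n^{-1}(\log n)^{-2}$; in both cases the series is \emph{summable}. Consequently $\Phi(\icl)<\infty$ in every dimension, your dichotomy ``$\Phi(\icl)=\infty$ for $d=2,3$ versus $<\infty$ for $d\geq4$'' collapses, and the argument as written proves nothing. The correct renewal-type dichotomy is recurrence versus transience of the transverse walk (equivalently, divergence of the Green's-function series built from the \emph{unconstrained} connectivity, which is indeed $\asymp\sum_n n^{-(d-1)/2}$), i.e.\ whether the excursion mass at $J=1$ equals $1$ or is strictly less than $1$ --- not whether the excursion generating function is finite.

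There is a second, related gap: the identity $\alpha(J)\,\Phi(\lambda)=1$ with a factor $\alpha(J)$ per ``defect animal'', decoupled from the excursions, is asserted rather than established, and for FK percolation the reward is genuinely non-local: the tilt is $e^{\lambda\,\ouv_\Line}$, controlled through $|\Line\cap C_0|$, which is not a per-renewal-visit quantity (clusters can run along the line inside a single irreducible piece). This is exactly why the paper does not use a clean renewal identity at criticality: for $d\geq4$ it bounds $\e\bigl[e^{\lambda|\Line\cap C_0|}\bigm|0\leftrightarrow n\eone\bigr]$ by reducing to a random-walk pinning estimate with the crossing indicator $\IF{X_i^{\parallel}\geq|S_{i-1}^{\perp}|}$ (Section~\ref{sec:UB}), where the convergence of $\sum_\ell \ell^{-(d-1)/2}$ enters; and for $d=2,3$ it proves a matching \emph{lower} bound by an energy--entropy argument (Section~\ref{sec:LB}): a Russo-type inequality (Lemma~\ref{lem:RussoGen}) converts cone-points on $\Line$ into a definite energetic gain, while local-time estimates for the transverse walk supply the entropy cost. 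Your proposal, as it stands, would need to be repaired along one of these two lines; the finiteness-of-$\Phi(\icl)$ criterion, justified by an unchanged OZ exponent for the constrained connectivity, cannot be the basis of a correct proof.
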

When \(J>\Jc\), more precise information is available.
\begin{theorem}\label{thm:PinnedRegime}
The following properties hold, for any \(\beta<\betac\):
\begin{enumerate}[label=(\roman*)]
\item \(J \mapsto \icl(J)\) is real-analytic and strictly decreasing on \((\Jc,\infty)\).
\item\label{item:Critical2d} When \(d=2\), there exist \(c_+,c_-,\epsilon>0\), depending on 
\(\beta\), \(q\) and \(d\), such that, for all \(J\in(\Jc,\Jc+\epsilon)\),
\[
c_+ (J-\Jc)^2
\geq
\icl(\Jc)-\icl(J)
\geq
c_- (J-\Jc)^2 .
\]
\item\label{item:Critical3d} When \(d=3\), there exist \(c_+,c_-,\epsilon>0\), depending on 
\(\beta\), \(q\) and \(d\), such that, for all \(J\in(\Jc,\Jc+\epsilon)\),
\[
e^{-c_+/(J-\Jc)}
\geq
\icl(\Jc)-\icl(J)
\geq
e^{-c_-/(J-\Jc)} .
\]
\item\label{Item:quatre} For all \(J>\Jc\), there exists \(C_{\beta,J}=C_{\beta,J}(q,d)>0\) such that, as 
\(n\to\infty\),
\[
\PottsM_{\beta,J}(\omega_0=\omega_{n\eone}) = \frac1q + C_{\beta,J}\, e^{-\icl(J) n}\, (1+o(1)).
\]
\end{enumerate}
\end{theorem}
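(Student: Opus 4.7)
The plan is to extend the Ornstein--Zernike / renewal analysis carried out for Bernoulli percolation in~\cite{Friedli+Ioffe+Velenik-2013} to the random-cluster representation of the Potts model, using the sharp asymptotics for FK-percolation from~\cite{Campanino+Ioffe+Velenik-2008}. By Edwards--Sokal, \(\PottsM_{\beta,J}(\omega_0=\omega_{n\eone})-1/q\) is a positive constant times the two-point function of the FK measure with parameters \((p,q)\), modified on bonds of \(\Line\) to carry the parameter \(p_J=1-e^{-\beta J}\). Connections \(0\leftrightarrow n\eone\) admit the irreducible cone-point decomposition of~\cite{Campanino+Ioffe+Velenik-2008}, which yields a renewal representation of the two-point function with step distribution \(\Psi\). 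Separating irreducible pieces that avoid \(\Line\) from those with endpoints on \(\Line\) produces a two-type renewal with Laplace transforms \(\hat\Psi_f(\lambda)\) and \(\hat\Psi_\ell(\lambda,J)\), both jointly analytic in their convergence domains, \(\hat\Psi_\ell\) being strictly increasing in \(J\) because of the modified bond parameter.

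Summing the resulting series, the Laplace transform (in the \(\eone\) direction) of the two-point function takes the form
\[
\hat G(\lambda,J) = \frac{N(\lambda,J)}{1-\hat\Psi_\ell(\lambda,J)},
\]
with \(N\) jointly analytic and strictly positive near \((\icl,\Jc)\). The critical threshold is characterised by \(\hat\Psi_\ell(\icl,\Jc)=1\), and for \(J>\Jc\) the inverse correlation length \(\icl(J)\) is the unique real \(\lambda\in(0,\icl)\) with \(\hat\Psi_\ell(\lambda,J)=1\). Item~(i) then follows from the implicit function theorem, since \(\partial_\lambda\hat\Psi_\ell>0\) and \(\partial_J\hat\Psi_\ell>0\) on the relevant domain, yielding real-analyticity and strict monotonicity of \(\icl(J)\). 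Item~\ref{Item:quatre} follows by residue extraction at this simple pole of \(\hat G(\cdot,J)\); the absence of other singularities on the critical line \(\mathrm{Re}\,\lambda=\icl(J)\) is ensured by strict convexity of the bulk inverse correlation length and aperiodicity of the effective step distribution, both consequences of finite-energy and mixing for FK-percolation with \(q\geq 1\).

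For items~(ii)--(iii) one must analyse the joint behaviour of \(\hat\Psi_\ell\) near \((\icl,\Jc)\). A second renewal, splitting each line piece into an excursion that leaves and returns to \(\Line\) followed by a step along \(\Line\), reduces the analysis to the first-return Laplace transform \(F(\lambda)\) of the effective \((d-1)\)-dimensional transverse Ornstein--Zernike random walk, giving a factorisation \(\hat\Psi_\ell(\lambda,J)=c(J)F(\lambda)+R(\lambda,J)\) with \(c(J)\) analytic and strictly increasing, \(R\) analytic across \(\lambda=\icl\), and \(c(\Jc)F(\icl)=1\). The singular part of \(F\) at \(\icl\) is then dictated by the recurrence properties of the transverse walk: for \(d=2\) fluctuation theory in dimension one yields \(F(\icl)-F(\lambda)\sim a(\icl-\lambda)^{1/2}\), and inversion of \(c(J)F(\icl(J))=1\) near \(\Jc\) produces \(\icl-\icl(J)\asymp (J-\Jc)^2\); for \(d=3\) marginal recurrence of the two-dimensional transverse walk gives \(F(\icl)-F(\lambda)\sim a/\log\bigl(1/(\icl-\lambda)\bigr)\), inversion producing \(\icl-\icl(J)\asymp e^{-c/(J-\Jc)}\).

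The main technical obstacle is producing matching upper and lower asymptotics for \(F\) in the \emph{dependent} FK setting with \(q\geq 1\). Unlike the Bernoulli case of~\cite{Friedli+Ioffe+Velenik-2013}, the effective transverse walk emerging from the cone-point decomposition is not an honest independent random walk, and extracting the fluctuation-theoretic asymptotics with the required precision --- in particular the logarithmic precision needed in the marginally recurrent case \(d=3\) --- requires adapting the local limit theorem for irreducible pieces together with the finite-energy and mixing estimates of~\cite{Campanino+Ioffe+Velenik-2008} to configurations conditioned to interact frequently with \(\Line\). Once these sharp inputs are in place, items~(i)--\ref{Item:quatre} follow by the generating-function manipulations described above.
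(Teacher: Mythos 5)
There are two genuine gaps in your plan, both at the places where the dependence of the FK measure and the presence of the defect actually matter. First, for items (i) and \ref{Item:quatre} your whole mechanism (implicit function theorem applied to \(1-\hat\Psi_\ell(\lambda,J)=0\), residue extraction at a simple pole) presupposes that the irreducible ``line pieces'' of the cluster under the \emph{inhomogeneous} measure with \(J>\Jc\) have mass decaying strictly faster than \(e^{-\icl(J)\ell}\), i.e.\ that \(\hat\Psi_\ell(\cdot,J)\) is analytic in a neighbourhood of \(\lambda=\icl(J)\). This is equivalent to the statement that, conditionally on \(0\leftrightarrow n\eone\), the cluster has a positive density of cone-points \emph{on} \(\Line\) with exponential error bounds; it cannot be imported from~\cite{Campanino+Ioffe+Velenik-2008}, which treats only the translation-invariant model. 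In the paper this is exactly Theorem~\ref{thm:CPdensity}, proved by a new coarse-graining in which \(\Line\)-free coarse-grained boxes cost \(e^{-\xi K}\) while \(\Line\)-boxes cost only \(e^{-\icl(J)K}\) -- so \(\Line\)-free boxes are exponentially rare precisely because \(\xi>\icl(J)\) when \(J>\Jc\) -- followed by a finite-energy surgery turning ``illuminated'' segments of \(\Line\) into cone-points. Without an argument of this type, the claim that the zero of \(1-\hat\Psi_\ell\) lies strictly inside the domain of analyticity (hence analyticity, strict monotonicity via \(\partial_J\hat\Psi_\ell>0\), and pure exponential decay) has no foundation; likewise your characterisation \(\hat\Psi_\ell(\icl,\Jc)=1\) of \(\Jc\) is asserted but never established, and the paper neither proves nor needs it.

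Second, for items \ref{item:Critical2d}--\ref{item:Critical3d} you correctly identify, but do not close, the key difficulty: matching \emph{lower} asymptotics for the first-return transform \(F\) of the effective transverse walk in the dependent setting, with logarithmic precision when \(d=3\). The paper never produces such asymptotics; it sidesteps them. The upper bounds are obtained entirely under the \emph{homogeneous} measure, writing \(\xi-\icl(J)\) as a limit of ratios of exponential moments of the number of open edges on \(\Line\), bounding this by \(\e\bigl[e^{\lambda|C_0\cap\Line|}\bigm|0\leftrightarrow n\eone\bigr]\), and then using the decoupling construction of Appendix~\ref{sec:RenLongRange} to replace the (dependent) irreducible-piece process by a genuine i.i.d.\ walk, for which the homogeneous pinning estimates of~\cite[Theorem~A.2]{Giacomin-2007} give the \(\lambda^2\) and \(e^{-c/\lambda}\) behaviour. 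The lower bounds come from a separate energy/entropy argument: a Russo-type differential inequality for FK percolation (Lemma~\ref{lem:RussoGen}, exploited in Lemma~\ref{lem:Ene}) shows that forcing \(\delta n\) cone-points on \(\Line\) yields an energy gain \(\exp\bigl(c(J-\Jc)\delta n\bigr)\) because each such cone-point produces a pivotal edge on the line, while local-time large deviations for the transverse walk (Lemma~\ref{lem:Ent}, via Corollary~B.3 of~\cite{Friedli+Ioffe+Velenik-2013}) bound the entropic cost by \(e^{-c\delta^2 n}\) for \(d=2\) and \(e^{-c\delta n/|\log\delta|}\) for \(d=3\); optimising over \(\delta\) gives the matching lower bounds with only crude, one-sided inputs. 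You should either adopt such a two-sided strategy or supply the sharp renewal analysis you defer, since as written ``once these sharp inputs are in place'' is precisely the missing proof.
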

The behavior in the last statement should be contrasted with~\eqref{eq:OZPotts}.

\subsection{Pinning of the interface of the \(2d\) Potts model below \(\Tc\)}
\label{sec:Potts:LT}
We now restrict our attention to the lattice \(\Ztwo\). Let
\[
\Line^* =
\bsetof{\{(x,0),(x,1)\}\in\Etwo}{x\in\Z}.
\]
We now consider the \(q\)-state Potts model on \(\Ztwo\) with coupling 
constants \((J_{i,j})_{i,j\in\Ztwo}\) given by
\[
J_{i,j} =
\begin{cases}
1	&	\text{if\(\{i,j\}\in\Etwo\setminus\Line^*\),} \\
J	&	\text{if \(\{i,j\}\in\Line^*\),} \\
0	&	\text{otherwise.}
\end{cases}
\]
Let \(\Lambda_n = \{-n,\ldots,n\}\times\{-n+1,\ldots,n\}\) and let 
\(\etaDob\in\Omega_q^2\) be the Dobrushin-type boundary condition defined by
\[
\etaDob_i =
\begin{cases}
1	&	\text{if }i^\perp \geq 1,\\
2	&	\text{if }i^\perp \leq 0.
\end{cases}
\]
We denote by \(\PottsM_{n;\beta,J}^\pm\) the Gibbs measure in \(\Lambda_n\) with boundary condition 
\(\etaDob\) at inverse temperature \(\beta\).

In the remainder of this section, we assume that \(\beta>\betac\). In 
that 
case, there is long-range order and it is convenient to describe configurations 
in terms of their Peierls contours. First, given \(\{i,j\}\in\Etwo\), denote by 
\(\{i,j\}^* = \setof{x\in\mathbb{R}^2}{\normsup{x-i}=\normsup{x-j}=\tfrac12}\) the 
dual edge separating \(i\) and \(j\). The contours of a configuration \(\omega\in\Omega_q^2\)
are the maximal connected components of
\[
\bigcup_{\{i,j\}\in\Etwo:\,\omega_i\neq\omega_j} \{i,j\}^*.
\]
When \(\omega_i=\etaDob_i\) for all \(i\not\in\Lambda_n\), there is a unique 
unbounded contour. We call its intersection with 
\([-n-\tfrac12,n+\tfrac12]\times\mathbb{R}\) the \emph{interface} 
and denote it by \(\Gamma_n\).
Note that \(\Gamma_n\) is a two-dimensional object, but with a macroscopic extension only along the first coordinate axis and an (essentially) bounded width, as we explain now.

\medskip
Consider first the homogeneous case \(J=1\). It can then be 
shown~\cite{Campanino+Ioffe+Velenik-2008} that, under 
\(\PottsM_{n;\beta,1}^\pm\), the interface has a width of order \(\log n\). 
Namely, for each \(x\in[-n-\tfrac12,n+\tfrac12]\), define
\[
\Gamma_n^+(x) = \max\setof{y\in\mathbb{R}}{(x,y)\in\Gamma_n},\quad
\Gamma_n^-(x) = \min\setof{y\in\mathbb{R}}{(x,y)\in\Gamma_n}.
\]
\begin{figure}[t]
\resizebox{5cm}{!}{\input{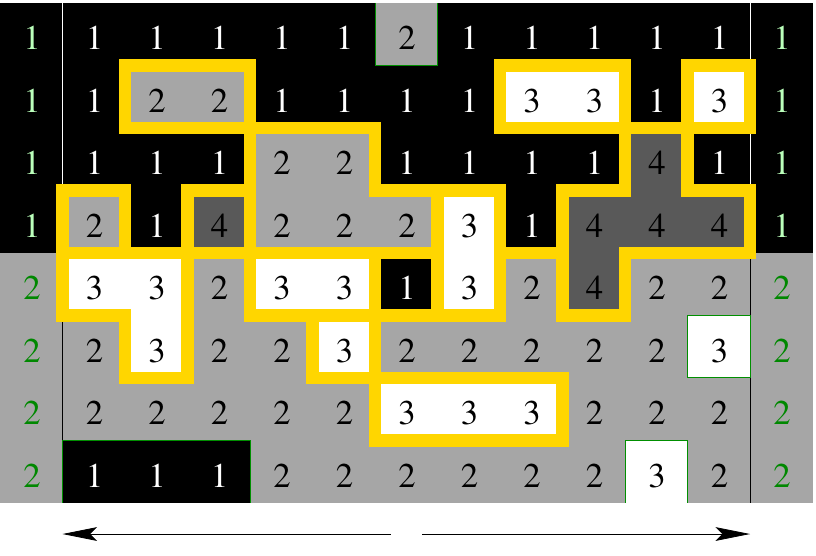_t}}
\hspace{5mm}
\resizebox{5cm}{!}{\input{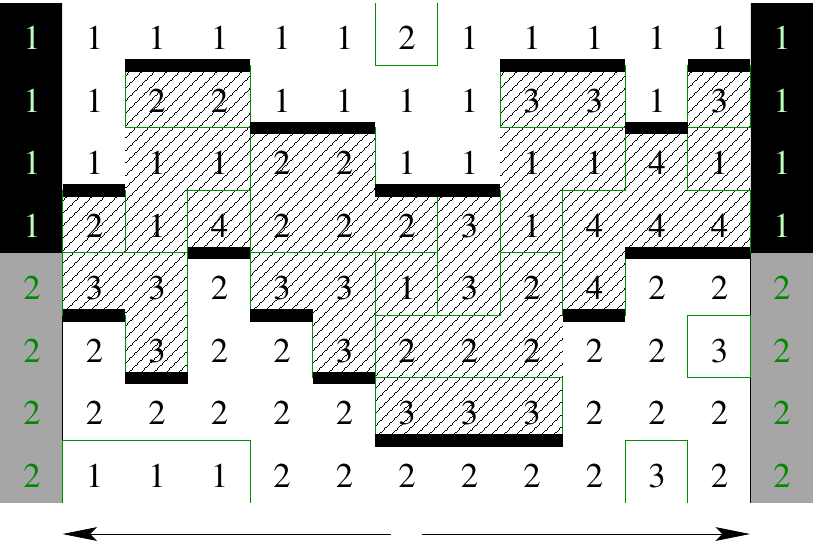_t}}
\caption{Left: The interface of the \(2d\) Potts model. Right: the functions \(\Gamma^+_n\) and \(\Gamma^-_n\); note that the interface is contained in the regions they delimit (hatched area).}
\end{figure}
Then, there exists \(C(\beta,q)\) such that
\[
\lim_{n\to\infty} \PottsM_{n;\beta,1}^\pm \bigl( 
\max_x(\Gamma_n^+(x)-\Gamma_n^-(x)) \leq C(\beta)\log n \bigr) = 1.
\]
Moreover, under diffusive scaling, the interface weakly converges to a Brownian 
bridge~\cite{Campanino+Ioffe+Velenik-2008}: for any \(\beta>\betac\), there exists \(\kappa_\beta>0\) 
such that, as \(n\to\infty\),
\[
\frac1{\sqrt{n}} \Gamma_n^+ (n\,\cdot) \Rightarrow \kappa_\beta B_\cdot,
\]
where \((B_t)_{-1\leq t\leq 1}\) denotes the standard Brownian bridge on 
\([-1,1]\).

\medskip
The main result of this section is that, whenever \(J<1\), the interface ceases 
to behave diffusively and instead localizes along the defect line:
\begin{theorem}\label{thm:PottsScalingInterface}
For any \(\beta>\betac\) and any \(0\leq J<1\), there exists \(C_{\beta,J}=C_{\beta,J}(q)\) such that
\[
\lim_{n\to\infty} \PottsM_{n;\beta,1}^\pm \bigl( 
\max_x \Gamma_n^+(x) \leq C_{\beta,J}\log n,
\min_x \Gamma_n^-(x) \geq -C_{\beta,J}\log n \bigr) = 1.
\]
\end{theorem}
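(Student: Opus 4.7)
The plan is to translate Theorem~\ref{thm:PottsScalingInterface} into a statement about a long open dual path in the subcritical random-cluster model with a strengthened defect line, and then apply the pinned-regime machinery developed in the proofs of Theorems~\ref{thm:Jc_Potts}--\ref{thm:PinnedRegime}.

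First, I would invoke the Edwards--Sokal coupling together with planar random-cluster duality. Under $\PottsM_{n;\beta,J}^\pm$, the interface $\Gamma_n$ coincides, up to microscopic decorations on the defect line, with a long open dual path in the random-cluster model that separates the top and bottom boundary segments carrying the Dobrushin colors. Two-dimensional duality sends $\beta > \betac$ to a subcritical dual inverse temperature $\beta^* < \betac$, and it maps the weakened primal defect coupling $J < 1$ on $\Line^*$ to a \emph{strictly stronger} dual coupling $J^* > 1$ on the dual horizontal line through $\Line^*$. Since Theorem~\ref{thm:Jc_Potts} gives $\Jc = 1$ in $d=2$, the dual system sits strictly inside the pinned regime $(\Jc,\infty)$, and in particular $\icl(J^*) < \icl$ strictly by Theorem~\ref{thm:PinnedRegime}(i).

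Next, I would invoke the structural content that underlies Theorem~\ref{thm:PinnedRegime}: in the pinned regime, typical long dual connections admit a renewal description where the path performs ``trunk'' steps along the defect line (with transverse extension of order one on average) separated by independent ``bubble'' excursions into the bulk, whose height has exponential tails. Heuristically, an excursion of height $h$ costs an extra factor of order $e^{-2(\icl - \icl(J^*))h}$ relative to staying on the line, because such an excursion requires a segment of length at least $2h$ to be bulk-typical (decaying at rate $\icl$) rather than line-typical (decaying at rate $\icl(J^*)$). Transferring this back to the primal picture, the Peierls interface $\Gamma_n$ performs order-$n$ excursions around $\Line^*$, each of exponentially decaying height.

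Finally, a union bound over $x \in \{-n,\ldots,n\}$ and over heights $h \geq C_{\beta,J} \log n$ shows that the probability of some excursion exceeding $C_{\beta,J} \log n$ is at most $n \cdot e^{-c C_{\beta,J} \log n}$, which vanishes for $C_{\beta,J}$ large enough; the control on $\min_x \Gamma_n^-(x)$ follows by the symmetry exchanging the two Dobrushin colors and reflecting across $\Line^*$. The technical heart of the argument, and the main obstacle, is to make the renewal decomposition rigorous with tail estimates uniform in $n$ and in the Dobrushin boundary condition. The two-point-function asymptotics of Theorem~\ref{thm:PinnedRegime}(iv) are only the output of this analysis; the geometric input needed here is the lift of the Ornstein--Zernike machinery of~\cite{Campanino+Ioffe+Velenik-2008} to random-cluster paths pinned by a line of defects---precisely the extension of~\cite{Friedli+Ioffe+Velenik-2013} from Bernoulli to FK-percolation announced in the abstract.
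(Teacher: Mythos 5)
Your proposal follows essentially the same route as the paper: Edwards--Sokal plus planar duality turn the interface into (a subset of) the cluster of the two boundary dual vertices in a subcritical random-cluster model whose line weights are \emph{strengthened}, hence in the pinned regime since \(\Jc=1\) for \(d=2\), and localization then follows from the cone-point/renewal structure with exponentially tailed irreducible pieces plus a union bound, exactly as you describe. The ``technical heart'' you flag is precisely Theorem~\ref{thm:CPdensity} and the renewal analysis of Section~\ref{sec:pureExpDec}, which the paper observes carry over essentially verbatim to the finite-volume conditioned dual measure, so that the cone-points of the dual cluster are cone-points of the interface and the \(\log n\) bound follows.
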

\begin{figure}[t]
\centering
\includegraphics[width=5cm]{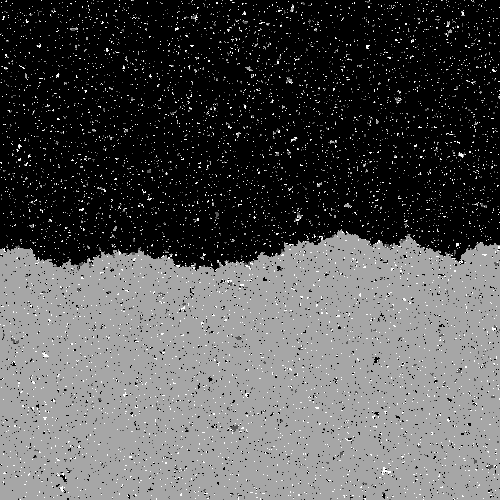}
\hspace{5mm}
\includegraphics[width=5cm]{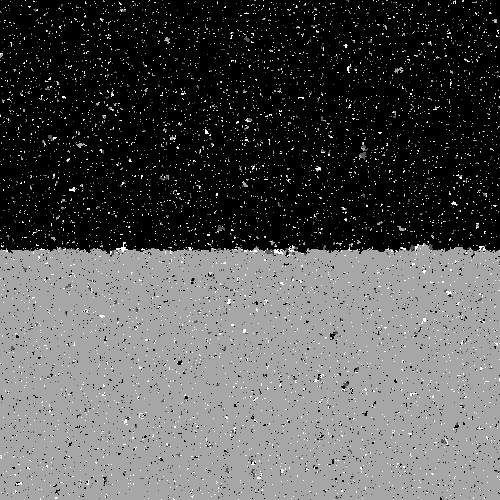}
\caption{Interface in the \(2d\) 4-state Potts model. Left: \(J=1\). Right: \(J=1/2\). The same inverse temperature \(\beta>\betac\) is used in both cases.}
\end{figure}
Note that, under diffusive scaling, the limit is then identically \(0\): an 
arbitrary weakening of the coupling constants along \(\Line^*\) pins the 
interface.
Actually, the claim in the theorem will follow from a detailed description of the structure of the interface (see Theorem~\ref{thm:CPdensity}), which provides a much stronger claim than what is stated above. In particular, the width of the interface is typically bounded, with only rare deformations of order \(\log n\).
(In fact, Theorem~\ref{thm:PottsScalingInterface} will essentially be a corollary of Item~\ref{Item:quatre} of Theorem~\ref{thm:PinnedRegime}.)

\bigskip
Before closing this introduction, let us briefly mention that although we restricted our attention to a defect along a line of the lattice, this is by no means necessary. Straightforward adaptation of our arguments would allow the analysis, for example, of a defect along the lattice approximation of any line with rational ``slope'', or other periodic structures. Similarly, the restriction to nearest-neighbor interactions is only necessary for the statement of Theorem~\ref{thm:PottsScalingInterface} (the proof of which relies on duality); for the other claims, any finite-range, translation-invariant, reflection-symmetric interaction would do.

\subsection{Open problems}

In view of the results presented above, there remain a few interesting open 
problems:
\begin{itemize}
\item Determine the behavior of \(\icl(J)\) in the neighborhood of \(\Jc\) in 
dimensions \(d\geq 4\). By analogy with the results for effective models (see~\cite[Theorem~2.1]{Giacomin-2007}), we conjecture that the qualitative behavior of \(\icl(\Jc)-\icl(J)\) as \(J\downarrow\Jc\) is as follows: \(\Theta((J-\Jc)^2)\) when \(d=4\), \(\Theta((J-\Jc)/|\log(J-\Jc)|)\) when \(d=5\) and \(\Theta(J-\Jc)\) when \(d\geq 6\).
\item Determine the sharp asymptotics of the \(2\)-point function when \(J\leq\Jc\). Only the case \(J=1\) has been treated in complete generality up to now. For the two-dimensional Ising model, the asymptotic behavior was explicitly computed in~\cite{McCoy_Perk-1980} and found to be of the form
\[
\PottsM_{\beta,J}(\omega_0=\omega_{n\eone}) = \frac12 + \frac{C_\beta}{n^{3/2}}\, e^{-\icl n}\, (1+o(1))
\]
when \(J<\Jc=1\). Note the exponent of the prefactor, which is not of the usual Ornstein--Zernike form.
Again, by analogy with what happens in effective models (see~\cite[Theorem~2.2]{Giacomin-2007}), we expect the prefactor to be of order \(n^{-(d-1)/2}\) when \(d\geq 4\) and \(n^{-1}(\log n)^{-2}\) when \(d=3\).
\item Closely related to the previous problem, determine the scaling limit of the interface in the two-dimensional model 
when \(J<\Jc\). We expect the latter to be given by a Brownian excursion after diffusive scaling, as a consequence of entropic repulsion away from \(\Line\). This is fully compatible with the exponent in the prefactor mentioned in the previous point.
\end{itemize}
Moreover, there are a number of natural generalizations, to which we plan to return in future works:
\begin{itemize}
\item What happens when the defect is located along the boundary of the system? In dimension \(2\), this amounts to studying the wetting problem for the Potts model.
\item What happens when the defect is of dimension \(d'\in (1,d)\)? Note that, 
in this case, the system may display long-range order along the defect even when 
the bulk is disordered. In particular, the longitudinal inverse correlation length vanishes for finite values of \(J\).
\item Is it possible to adapt some of the technology used to deal with pinning 
of a random walk by a disordered potential to cover the case of random 
(quenched, ferromagnetic) coupling constants along the defect?
\end{itemize}


\section{Random cluster representation, notations and strategy of the proof}

In this section, we introduce a few notations which will be recurrent throughout this article, we recall briefly the random-cluster (or Fortuin Kastelyn) representation of the Potts model and we give a short outline of the proofs of the theorems of Section~\ref{sec:Intro}.

\subsection{Random-cluster representation of the Potts model}

The Potts model on a finite graph $G=(V_G,E_G)$ can be mapped to a percolation model defined on $\{0,1\}^{E_G}$
(identifying the value \(1\) with the presence of an edge and the value \(0\) with its absence)
in the following way. 
For any edge configuration $\omega\subset E_G$, we denote by $\kappa(\omega)$ the number of connected components in $(V_G,\omega)$.
Writing \(\vec{x} = (x_e)_{e\in E_G}\), with
$x_e = e^{\beta J_e}-1$ for each $e\in E_G$, we associate to $\omega\subset E_G$ the probability
\[
\p_{\vec{x},q}(\omega) = (Z_{\vec{x},q})^{-1}\, q^{\kappa(\omega)}\prod_{e\in\omega} x_e ,
\]
where $Z_{\vec{x},q} = \sum_{\omega\subset E_G} q^{\kappa(\omega)} \prod_{e\in\omega}x_e$. The corresponding expectation will be denoted by $\e_{\vec{x},q}$.
We say that an edge $e$ with $\omega_e=1$ is open and denote by $|\omega|$ or $o(\omega)$ the number of open edges.
We say that $u,v\in G$ are connected, which we write $u\leftrightarrow v,$ if they lie in the same connected component.
For $A\subset E_G$, denote by $\omega_A$ the configuration $\omega$ restricted to $A$ and, for $e\in E_G$, by $\omega_{\setminus e}$ the configuration $\omega_{E_G\setminus\{e\}}$.

The random-cluster measures with $q\geq 1$ enjoy the following properties.
\begin{itemize}
	\item[]\textbf{Finite energy:} For any \(e\in E_G\) and any configuration \(\omega_{\setminus e}\),
	\[
	\frac{x_e}{x_e+q}\leq\p_{\vec{x},q}(\omega_e=1|\omega_{\setminus e})\leq\frac{x_e}{x_e+1}.
	\]
	\item[]\textbf{Positive association:} Let $f,g$ be two nondecreasing functions (w.r.t.\ the partial order induced by $0\leq 1$ on $\{0,1\}^{E_G}$). Then the FKG inequality holds:
	\[
	\e_{\vec{x},q}\left[fg\right]\geq\e_{\vec{x},q}\left[f\right]\e_{\vec{x},q}\left[g\right].
	\]
	\item[]\textbf{Stochastic monotonicity:} Assume that $x_e\leq y_e$ for all $e\in E_G$. Then $\p_{\vec{x},q}\preceq \p_{\vec{y},q}$.
\end{itemize}
The random-cluster model does not enjoy the usual spatial Markov property but an analogue can be used: for $\Lambda\subset G$, the random-cluster measure in $\Lambda$ with boundary condition $\omega_{G\setminus \Lambda}$ depends only on the connectivity properties of the vertices in the inner boundary of $\Lambda$, thus a boundary condition is a partition of those vertices (every set of the partition is a connected component).
In particular, the measure with \emph{wired boundary condition} (denoted $\p_{\vec{x},q,\Lambda}^\wired$) is obtained by setting \(\omega_{G\setminus \Lambda}\equiv 1\), while the measure with \emph{free boundary condition} ($\p_{\vec{x},q,\Lambda}^\free$) is obtained using \(\omega_{G\setminus \Lambda}\equiv 0\).
Stochastic monotonicity then implies that these two measures are extremal with respect to stochastic ordering.

In the sequel, we will work with the random-cluster measure on $\Zd$ induced by the weights
\(
x_e =
\begin{cases}
	x=e^{\beta}-1 		& \text{if } e\in\Line^\comp\\
	x'=e^{\beta J}-1 	& \text{if } e\in\Line
\end{cases}
\).
We denote the corresponding law $\p_{x'}$; it corresponds to the random-cluster measure associated with the Potts measure described in the previous section. In particular, the $2$-point correlation function of the Potts model can be rewritten as (see, for example, \cite[(1.16)]{Grimmett-2006})
\begin{align}
\label{eq:PottsFK}
	\PottsM_{\beta;J}(\sigma_u=\sigma_v) = \frac{1}{q} + \frac{q-1}{q} \p_{x'}(u\leftrightarrow v) .
\end{align}
From this, it immediately follows that the inverse correlation length $\icl(J)$ is equal to
\begin{equation}\label{eq:icl_FK}
\iclx =  \lim_{n\to\infty} -\frac1n \log \p_{x'}(0\leftrightarrow n\eone) .
\end{equation}

We will write $\p\equiv\p_x$ and $\e\equiv\e_x$ for the law and expectation of the homogeneous model; the corresponding measure in a finite volume \(\Lambda\Subset\Zd\) with boundary condition \(\#\in\{\mathrm{f,w}\}\) will be denoted \(\p_{\Lambda}^{\#}\).
Everywhere in the analysis below, except in the proof of Theorem~\ref{thm:PottsScalingInterface}, we will implicitly assume that \(x<\xc=e^{\betac}-1\) and \(q\geq 1\) are fixed and we will thus omit them from the notation.

We also write $\xi\equiv\xi_x$ for the corresponding inverse correlation length. The following exponential decay of connectivities under \(\p\), established in~\cite{Duminil-Copin+Raoufi+Tassion-2017}, plays a crucial role in our analysis.
\begin{lemma}
	\label{lem:expoDecFKwired}
	Let $x<\xc$. Then there exists $\Cl[expoThm]{decayFK}>0$ such that, for $n$ large enough,
	\[
		\p_{\Lambda_n}^{\wired}\left(0\leftrightarrow\Lambda_n^\comp\right)\leq e^{-\Cr{decayFK}n}.
	\]
\end{lemma}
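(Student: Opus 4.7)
Since the statement is explicitly attributed to~\cite{Duminil-Copin+Raoufi+Tassion-2017}, the plan is to invoke their sharpness theorem and verify that it yields the precise bound claimed. Their main result establishes that for any $q \geq 1$ and any $x < \xc$, connection probabilities in FK-percolation decay exponentially; moreover, the OSSS-based proof in~\cite{Duminil-Copin+Raoufi+Tassion-2017} produces such a bound uniformly over reasonable boundary conditions, wired included.

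Concretely, I would apply the OSSS inequality of~\cite{Duminil-Copin+Raoufi+Tassion-2017} to the indicator $\IF{0 \leftrightarrow \Lambda_n^\comp}$ under $\p_{\Lambda_n}^{\wired}$, using a randomized decision tree that reveals the cluster of the origin starting from a uniformly chosen point of $\partial \Lambda_n$. The resulting differential inequality in the parameter $x$ produces a dichotomy: either $\p_{\Lambda_n}^{\wired}(0 \leftrightarrow \Lambda_n^\comp)$ decays exponentially in $n$, or its derivative in $x$ blows up fast enough to force the system to percolate, i.e.\ $x \geq \xc$. The standing hypothesis $x < \xc$ rules out the second alternative, and the first is exactly the claim, with some constant $\Cr{decayFK} > 0$.

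The main point to verify is that the OSSS machinery is indifferent to the choice of boundary conditions. This is indeed so: the argument only requires representing $\p_{\Lambda_n}^{\wired}$ as a deterministic function of i.i.d.\ Uniform$[0,1]$ variables (available via the finite-energy property) together with the FKG inequality and stochastic monotonicity, all of which hold for the wired measure. Hence no genuine adaptation of~\cite{Duminil-Copin+Raoufi+Tassion-2017} is required; the principal ``obstacle'', such as it is, amounts to no more than translating their notation.
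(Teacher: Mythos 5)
Your proposal matches the paper exactly: the paper does not prove this lemma at all, but imports it directly from~\cite{Duminil-Copin+Raoufi+Tassion-2017}, whose sharpness theorem (proved via the OSSS inequality for monotonic measures) is stated precisely for the wired finite-volume measure, so the citation suffices. Your sketch of the underlying OSSS argument slightly misdescribes the decision tree used there (one averages over a uniformly chosen radius $k$ and explores the clusters of the sphere $\partial\Lambda_k$, rather than exploring the cluster of the origin from a uniformly chosen point of $\partial\Lambda_n$), but since you are invoking rather than reproving that result, this is immaterial.
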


We will prove all the results of Section~\ref{sec:Intro} in the random-cluster representation. They can then be translated straightforwardly to the Potts model language via~\eqref{eq:PottsFK}.

\begin{remark}
	\label{rem:finVol}
	Since \(x<\xc\), we can always work in large but finite boxes. Indeed, for any event $A$ depending on a finite number of edges, we can find a finite box $\Lambda\subset\Z^d$ such that
	\[
	\frac{1}{2}\p_{\Lambda}^{\free}(A)\leq\p(A)\leq 2\p_{\Lambda}^{\free}(A).
	\]
	This will be done in several instances for technical reasons, but we will keep the same notation as for the infinite-volume measure for readability purposes. The choice of boundary condition does not matter, thanks to the uniqueness of the infinite volume measure in the sub-critical regime.
\end{remark}

\subsection{Notations}
For $u,v\in\Z^d$, we denote $d(u,v) = \normI{v-u}$ the graph distance between $u$ and $v$; for $A,B\subset V(\Z^d)$, we set $d(A,B)=\min_{u\in A,v\in B}d(u,v)$.

For $a<b\in\mathbb{R}$, the notation $\Line_{[a,b]}$ denotes the subgraph of $\Z^d$: $\llbracket a,b \rrbracket\times\{0_{d-1}\}$ where $0_{d}$ is the origin of $\Z^d$ and $\llbracket a,b \rrbracket=[a,b]\cap\Z$; we also use the notations \(\Line_{<n}\equiv \Line_{(-\infty,n-1]}\) and \(\Line_{>n}\equiv \Line_{[n+1,\infty)}\).

Let $A\subset\Z^d$. We denote $u\stackrel{A}{\leftrightarrow} v$ the event in which $u\leftrightarrow v$ using only edges originally present in $A$. We will use the following notion of boundaries: \(\partial A = \setof{i\in A}{\exists j\in A^\comp, j\sim i}\) and \(\partial^{\rm ext} A = \setof{i\in A^\comp}{\exists j\in A, j\sim i}\). We will also use the notation \(\partial A\) to denote the set of edges having exactly one endpoint in $A$.

Sums of the form $\sum_{i=a}^{b}$ for $a,b$ not integers are to be understood as the corresponding sums with \(a,b\) replaced by the appropriate integers; for example, if this notation is used in the course of proving an upper bound, and the summand is nonnegative, then \(\sum_{i=a}^b = \sum_{i=\lfloor a\rfloor}^{\lceil b\rceil}\) (taking integer part would not change our estimates, so we chose not to write them explicitly for readability purposes).

In the following proofs, we will say that a quantity $f_r(K)$ is $o_K(1)$ if the following is true: for every $m>0$, one can find $r>0$ and $K_0>0$ such that, for every $K>K_0$, $f_r(K)\leq K^{-m}$ (the quantities $r,K$ will make sense later and the notation will become clear from the context; we define this here for easy reference, since this appears in several places in the following sections).

We will also use the notation \(\pfe = \min_{e\in\Zd}\min_{\omega_{\setminus e}} \p_{x'}(\omega_e=1|\omega_{\setminus e}) = \frac{x}{x+q} \wedge \frac{x'}{x'+q}\) and 
\(\nfe = \min_{e\in\Zd}\min_{\omega_{\setminus e}} \p_{x'}(\omega_e=0|\omega_{\setminus e}) = \frac{1}{1+(x\vee x')}\).

Finally, all constants appearing in the proofs below depend a priori on \(q,x\) and \(d\), but this will not be mentioned explicitly every time.

\newcommand{\ouv}{\mathsf{o}}
For a set \(E\subset E_G\) and a random-cluster configuration \(\omega\), we write \(\ouv_E(\omega)\) for the number of open edges of \(\omega\) in \(E\).

Given \(x\in\Zd\) and a random-cluster configuration \(\omega\), we denote by \(C_x=C_x(\omega)\) the cluster of \(x\) in \(\omega\).

\subsection{Outline of the paper}
In the next section, we provide the proof of Theorem~\ref{thm:BasicPropPotts}. In the process, we introduce some tools and calculations that will reappear in the proofs of Theorems~\ref{thm:Jc_Potts}, \ref{thm:PinnedRegime} and~\ref{thm:PottsScalingInterface}.

The procedure leading to the main claims is as follows: in Section~\ref{sec:RWrep}, we reinterpret long connections in the homogeneous model in terms of a random walk with i.i.d.\ increments. This is done combining the coarse-graining procedure of~\cite{Campanino+Ioffe+Velenik-2008} with a variant of the construction of~\cite{Comets+Fernandez+Ferrari-2002} (see the comments at the beginning of Appendix~\ref{sec:RenLongRange}), which is described in a self-contained way in Appendix~\ref{sec:RenLongRange}. The statement of Theorem~\ref{thm:Jc_Potts} and the second and third points of Theorem~\ref{thm:PinnedRegime} follow, on the one hand, by studying a pinning problem for the random walk obtained in Section~\ref{sec:RWrep} (see Section~\ref{sec:UB}) and, on the other hand, by an energy/entropy argument induced by the Russo-like formula described in Appendix~\ref{app:Russo} (see Section~\ref{sec:LB}). Finally, the first and fourth points of Theorem~\ref{thm:PinnedRegime} are established in Section~\ref{sec:PuExpDecAnalytStrDecr} by studying the localization of the random walk trajectory in a small neighborhood of $\Line$ via a coarse-graining argument. The claim of Theorem~\ref{thm:PottsScalingInterface} follows from the same analysis combined with self-duality, as explained in Section~\ref{sec:LocalizationLT}.


\section{Basic properties and estimates}
\label{sec:BasicProp}

In this section, we prove Theorem~\ref{thm:BasicPropPotts}.
We assume throughout that \(x<\xc\).
Using the correspondence described in the previous section, it is sufficient to establish the following lemma. 

\begin{lemma}[Basic properties of $\iclx$]
\label{lem:BasicProp}
	The limit in~\eqref{eq:icl_FK} exists and defines a function $\iclx$ with the following properties.
	\begin{enumerate}[label=\textit{\alph*)}]
		\item \label{it:unifZd} $\forall u\in\Zd$,
		\[
		\lim_{n\to\infty} -\frac{1}{n} \log \p_{x'}(u\leftrightarrow u + n\eone) = \iclx.
		\]
		Moreover, $\forall u,v\in\Zd$,
		\begin{equation}\label{eq:decay_xprime}
		\p_{x'}(u\leftrightarrow v) \leq e^{-\iclx|v_1-u_1|}.
		\end{equation}
		\item \label{it:eq} $\iclx = \xi$ for all $x'\leq x$ and $\iclx < \xi$ for $x'$ sufficiently large.
		\item \label{it:monot} $x'\mapsto\iclx$ is locally Lipschitz continuous, nonincreasing on $[0,\infty]$ and strictly positive for $x'\in[0,\infty)$.
		\item \label{it:asymBehav} There exist $c_+,c_- > 0$ such that $\frac{c_-}{x'}\leq\iclx\leq\frac{c_+}{x'}$ for $x'$ large enough.
	\end{enumerate}
	In particular, there exists $x'_c\in [x,\infty)$ such that $\iclx = \xi$ for all $x'\leq x'_c$ and $\iclx < \xi$ for all $x'>x'_c$.
\end{lemma}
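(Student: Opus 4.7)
I would prove the statements in the order (a), then (b)--(c), then (d), since each of (b)--(d) builds on the existence and the a priori bounds established in (a). The main technical obstacle will be the matching lower bound in (d), i.e.\ upper-bounding $\p_{x'}(0\leftrightarrow n\eone)$ by $e^{-c_-n/x'}$ in the presence of the non-product FK cluster-count weight.

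\emph{Item (a).} For any $u\in\Zd$, set $a_n(u) := -\log\p_{x'}(u\leftrightarrow u+n\eone)$. The FKG inequality applied to the two increasing events $\{u\leftrightarrow u+m\eone\}$ and $\{u+m\eone\leftrightarrow u+(m+n)\eone\}$, together with invariance of $\p_{x'}$ under translations by $\eone$, gives the subadditivity $a_{m+n}(u) \leq a_m(u) + a_n(u)$. Fekete's lemma then produces the limit $\iclx(u) = \lim_n a_n(u)/n$, together with $\p_{x'}(u\leftrightarrow u+n\eone)\leq e^{-\iclx(u) n}$. To see that $\iclx(u)$ is independent of $u$, I use finite energy to force open a short deterministic bulk path joining $u$ to its projection on $\Line$; FKG then sandwiches $\p_{x'}(u\leftrightarrow u+n\eone)$ between constant multiples (depending only on $|u^\perp|$) of $\p_{x'}(0\leftrightarrow n\eone)$, giving $\iclx(u)=\iclx$. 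The bound~\eqref{eq:decay_xprime} for $v$ with $v^\perp\neq u^\perp$ follows by iterating the same subadditive FKG reasoning along a vertex sequence that first adjusts the transverse coordinate in a bounded number of steps and then moves horizontally, combined with the uniformity just established.

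\emph{Items (b) and (c).} Monotonicity of $\iclx$ in (c) is immediate from stochastic monotonicity applied to the increasing event $\{0\leftrightarrow n\eone\}$; in particular $\iclx\geq\xi$ for $x'\leq x$. For the reverse inequality $\iclx\leq\xi$ (valid for every $x'\geq 0$), I pay a finite-energy cost $\pfe^2$ to open the bulk edges joining $0,n\eone$ to $\etwo,n\eone+\etwo$, and then stochastically dominate the restriction of $\p_{x'}$ to the half-space $\{i_2\geq 1\}$ by the homogeneous free-boundary FK measure on that half-space (obtained by averaging over the configuration outside). Since the boundary correlation length of that half-space measure coincides with $\xi$ by a standard slab argument, this gives $\iclx\leq\xi$, completing (b) modulo the strict inequality for large $x'$ which is a direct consequence of (d). Local Lipschitz continuity of $x'\mapsto\iclx$ then follows from the Russo-type formula of Appendix~\ref{app:Russo}: its $x'$-derivative is a sum over line-edges of conditional influences, each bounded uniformly on compacts by $\pfe,\nfe$, so that dividing by $n$ and passing to the limit bounds the derivative of $\iclx$. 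Strict positivity of $\iclx$ on $[0,\infty)$ follows from monotonicity together with (d): since $\iclx$ is non-increasing and bounded below by $c_-/x'>0$ for all sufficiently large $x'$, the same lower bound holds a fortiori for all smaller values.

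\emph{Item (d).} The upper bound $\iclx\leq c_+/x'$ is a single FKG step: the probability that the $n$ line-edges between $0$ and $n\eone$ are all open is at least $(x'/(x'+q))^n$, whence $\iclx\leq\log(1+q/x')=O(1/x')$. The matching lower bound $\iclx\geq c_-/x'$ is the main obstacle, and I would prove it by a Peierls-type argument. Decompose the $n$ line-edges between the two endpoints according to their open/closed pattern: every maximal run of $k$ closed line-edges in the pattern must be bridged by a bulk detour, whose probability is at most $e^{-ck}$ uniformly in $x'$ by Lemma~\ref{lem:expoDecFKwired}; each closed line-edge itself costs at most $\nfe\leq c/x'$. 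A union bound over patterns with the line weights $\pfe^{\#\text{open}}\nfe^{\#\text{closed}}$ combined with the bulk-detour probabilities should yield $\p_{x'}(0\leftrightarrow n\eone)\leq (\pfe+c\nfe)^n\leq e^{-c_- n/x'}$ for $x'$ large enough. The technical heart of the argument is to make this factorization rigorous in the presence of the non-product FK cluster-count weight, for instance by a suitable conditioning scheme that decouples the line configuration from the bulk detours up to controlled errors.
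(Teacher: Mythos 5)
Your plan reproduces the paper's easy steps faithfully (Fekete plus FKG for existence and $u$-independence, stochastic monotonicity, and the bound $\iclx\leq\log(1+q/x')$ by opening the line edges), but the genuinely hard steps are not carried out. First, your route to \eqref{eq:decay_xprime} does not give the stated inequality: adjusting the transverse coordinate by finite energy and chaining produces a prefactor of order $\pfe^{-\normI{u^\perp-v^\perp}}$ (and the transverse displacement is not bounded over pairs $u,v$), whereas the claim has no prefactor at all. The paper obtains it by a reflection trick: with $\bar u_v$ the reflection of $u$ through the hyperplane orthogonal to $\Line$ at $v$, FKG and reflection symmetry give $\p_{x'}(u\leftrightarrow v)^2\leq\p_{x'}(u\leftrightarrow\bar u_v)\leq e^{-2\iclx|v_1-u_1|}$, since $u$ and $\bar u_v$ lie on a common line parallel to $\eone$ so the Fekete bound applies. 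Second, in item \ref{it:eq} the inequality $\iclx\leq\xi$ is the hard direction, and you reduce it to the assertion that the free-boundary half-space measure has boundary-to-boundary inverse correlation length equal to $\xi$, calling this a ``standard slab argument''. For FK percolation with general $q\geq1$ this equality of surface and bulk rates is not standard and is essentially of the same depth as the statement you are proving. The paper avoids it entirely: it lifts the connection to height $n^{\alpha}$ with $\alpha>1/2$, cuts it into pieces of length $n^{\beta}<n^{\alpha}$, and uses the monotone coupling (which agrees with the homogeneous measure outside the cluster of $\Line$) together with the homogeneous Ornstein--Zernike asymptotics and Lemma~\ref{lem:expoDecFKwired} to show the pieces do not feel the line.

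Third, for the lower bound in \ref{it:asymBehav} you yourself declare the decoupling of the line pattern from the bulk detours to be ``the technical heart'' and leave it open; that is the main content of the item. Your sketch is also quantitatively off: the cost of an open line edge is $x'/(1+x')\leq e^{-c/x'}$, not $\pfe$ (which for large $x'$ equals $x/(x+q)$ and would wrongly give a rate bounded away from $0$), and a naive union over the $2^{n}$ open/closed patterns is not beaten by these factors unless the sum is organized by closed runs and the bridging estimate is proved uniformly over the conditioning, which includes arbitrarily strong open line edges outside the gap. That uniformity is precisely Lemma~\ref{lem:UniformExpDecayLine}, i.e.\ \eqref{eq:expDecUnif}, whose proof is the tree coarse-graining with covered/uncovered boxes; the paper then extracts $\epsilon n/K$ pivotal line edges and concludes. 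Note also that your strict positivity in \ref{it:monot} is deduced from this unproven lower bound, whereas the paper gets it directly from \eqref{eq:expDecUnif}. Finally, the Lipschitz continuity: Lemma~\ref{lem:RussoGen} yields a lower bound on the ratio of probabilities, which is the wrong direction for bounding $\xi_{x_1'}-\xi_{x_2'}$ from above, and bounding each conditional influence by a constant does not suffice since the finite box contains far more than $n$ line edges. The paper instead bounds the ratio by $\e_{x_1'}\bigl[e^{\lambda|C_0\cap\Line|}\bigm|0\leftrightarrow n\eone\bigr]$ via the resummation \eqref{eq:upBndRep}, and controls this using the leftmost and rightmost points of $C_0\cap\Line$ together with the already-established exponential decay. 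So the architecture of your plan is reasonable, but items (b), (c)-Lipschitz, (d)-lower, and the sharp form of \eqref{eq:decay_xprime} all still require the corresponding arguments.
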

\begin{remark}
	We actually prove something stronger than strict positivity of \(\iclx\): we show that there exists $c>0$ such that, for all \(n\),
	\begin{align}
		\label{eq:expDecUnif}
		\p_{x',\Lambda_{n}}^{\wired}(0\leftrightarrow \partial\Lambda_n)\leq e^{-cn}.
	\end{align}
	By stochastic monotonicity, this implies the same bound for any boundary condition.
\end{remark}
\begin{proof}
	\begin{itemize}[leftmargin=*]
		\item The existence and the first part of \ref{it:unifZd} are shown using Fekete's lemma. We first prove existence of \(\iclx^u = \lim_{n\to\infty} -\frac{1}{n} \log \p_{x'}(u\leftrightarrow u + n\eone)\).
		Define $\pi_{n}=\log\p_{x'}(u\leftrightarrow u + n \eone)$. We see that $(-\pi_{n})_n$ is a subadditive sequence: by FKG and translation invariance in the $\eone$-direction,
		\begin{align*}
			\pi_{n+m}&=\log\p_{x'}(u\leftrightarrow u + (n+m)\eone)\\
			&\geq \log\p_{x'}(u\leftrightarrow u + n\eone\leftrightarrow u + (n+m)\eone)\\
			&\geq \log\bigl(\p_{x'}(u\leftrightarrow u + n\eone)\p_{x'}(u + n \eone\leftrightarrow u + (n+m)\eone)\bigr)\\
			&=\log\p_{x'}(u\leftrightarrow u + n\eone) + \log\p_{x'}(u\leftrightarrow u + m\eone) = \pi_{n}+\pi_{m}.
		\end{align*}
		Fekete's lemma then implies that $ \iclx^u=\lim_{n\to\infty}\frac{-\pi_{n}}{n}=\inf_{n}\frac{-\pi_{n}}{n}$ exists; in particular,
		\begin{equation}\label{eq:UBiclxu}
		\p_{x'}(u\leftrightarrow u + n\eone)\leq e^{-\iclx^u n} .
		\end{equation}
		To prove that \(\iclx^u = \iclx^0 \equiv \iclx\), just observe that, for all $u\in\Zd$,
		\begin{align*}
			\p_{x'}\left(u\leftrightarrow u+n\eone\right)&\geq\p_{x'}\left(u\leftrightarrow (u_1,0_{d-1})\leftrightarrow (u_1+n,0_{d-1})\leftrightarrow u + n\eone\right)\\
			&\geq \pfe^{2d(u,\Line)}\p_{x'}\left(0\leftrightarrow n\eone\right),
		\end{align*}
		and therefore \(\iclx^u \leq \iclx\). The same argument, exchanging the role of \(0\) and \(u\), yields the reverse inequality.
		
		\item The second part of \ref{it:unifZd} follows from
		\begin{equation*}
			\p_{x'}(u\leftrightarrow v)^2
			\leq
			\p_{x'}(u\leftrightarrow v\leftrightarrow \bar{u}_v)
			\leq
			\p_{x'}(u\leftrightarrow \bar{u}_v)
			\leq
			e^{-\iclx2|v_1-u_1|},
		\end{equation*}
		where $\bar{u}_v$ denotes the point obtained from $u$ by a reflection through the hyperplane orthogonal to $\Line$ containing $v$.
		The last inequality is a direct consequence of the bound~\eqref{eq:UBiclxu} and the identity \(\iclx^u=\iclx\).
				
		\item The monotonicity of $x'\mapsto\iclx$ follows from the stochastic domination $\p_{x'_1}\succcurlyeq \p_{x'_2}$ when $x'_1\geq x'_2$.
		
		\item To get the first point of item~\ref{it:eq}, we fix \(x'\leq x\) and work in a finite volume (see Remark~\ref{rem:finVol}). We will use a coupling $\Phi(\omega,\eta)$ between $\p$ and $\p_{x'}$ satisfying (we denote $C_{\Line}(\omega)$ the connected component of the line $\Line$ in $\omega$):
		\begin{enumerate}[label=(\roman*)]
			\item $\omega\sim \p$ and $\eta\sim \p_{x'}$;
			\item $\omega\geq \eta$;
			\item $\omega = \eta$ outside of $C_{\Line}(\omega)$.
		\end{enumerate}
		A sketch of the construction of such a coupling  (as well as references) is provided in Appendix~\ref{app:Couplings}.
		Choosing $1>\alpha>\beta>1/2$ and setting $[j]=n^{\alpha}\etwo + j\eone$, we have
		\begin{align*}
			\p_{x'}(0\leftrightarrow n\eone)
			&\geq
			\pfe^{2n^{\alpha}} \p_{x'}([0] \leftrightarrow [n])\\
			&\geq
			\pfe^{2n^{\alpha}}\prod_{i=1}^{n^{1-\beta}}\p_{x'}( [(i-1)n^{\beta}] \leftrightarrow [in^{\beta}])\\
			&=
			\pfe^{2n^{\alpha}}\left(\p_{x'}([0] \leftrightarrow [n^{\beta}])\right)^{n^{1-\beta}}.
		\end{align*}
		\noindent
		Let $\Delta=\setof{u\in\Zd}{d(u,\Line)\leq n^{\alpha}/2}$ and $\Lambda_{n}(u)=\setof{v\in\Zd}{\norm{u-v}_{\scriptscriptstyle\infty}\leq n}$. Then,
		\begin{align*}
			\p_{x'}([0]\leftrightarrow [n^{\beta}])
			&\geq
			\Phi(\mathds{1}_{[0]\leftrightarrow [n^{\beta}]}(\eta)\IF{C_{[0]}(\omega)\cap \Line =\emptyset})\\
			&=
			\Phi(\mathds{1}_{[0]\leftrightarrow [n^{\beta}]}(\omega)\IF{C_{[0]}(\omega)\cap \Line =\emptyset})\\
			&=
			\p([0]\leftrightarrow [n^{\beta}])\p(\Line \nleftrightarrow [0] \given [0]\leftrightarrow [n^{\beta}])\\
			&=
			\p([0]\leftrightarrow [n^{\beta}])\p(\Line \nleftrightarrow [0] \given [0]\leftrightarrow [n^{\beta}], \Line_{<-n}\nleftrightarrow [0],\Line_{>2n}\nleftrightarrow [0])\\
			&\pushright{\times\p(\Line_{<-n}\nleftrightarrow [0],\Line_{>2n}\nleftrightarrow [0] \given [0]\leftrightarrow [n^{\beta}])}\\
			&\geq
			\frac{C}{n^{\beta(d-1)/2}}\, e^{-\xi n^{\beta}}
			\bigl(1-\p^w_{\Delta}(\Line_{[-n,2n]} \leftrightarrow \partial \Delta)\bigr)\\
			&\pushright{\times\bigl(1-\p( [0]\leftrightarrow \partial \Lambda_n([0]) \given [0]\leftrightarrow [n^{\beta}])\bigr)}\\
			&\geq
			\frac{C}{n^{\beta(d-1)/2}}\, e^{-\xi n^{\beta}}(1-3ne^{-cn^{\alpha}/2})(1-n^{d\beta}e^{-cn}) .
		\end{align*}
		Together with $\p_{x'}(0\leftrightarrow n\eone) \leq \p(0\leftrightarrow n\eone)$ when $x'\leq x$, we get the result.
		
		\item For the second point of~\ref{it:eq}, notice first that, for any edge $e\in\Line$, $\p_{x'}(\omega_e=1 \given \omega_{\setminus e})\geq 1/(1+\frac{q}{x'})$ uniformly on $\omega_{\setminus e}$. Therefore, by opening all edges from $\Line_{[0,n]}$,
		\begin{align*}
			\p_{x'}(0\leftrightarrow n\eone)&\geq e^{-\log(1+\frac{q}{x'})n}\geq e^{-\frac{q}{x'}n} .
		\end{align*}
		Choosing $x'$ such that $\frac{q}{x'}<\xi$, the result follows. Moreover, we obtain that $\iclx\leq \frac{q}{x'}$, which corresponds to one side of item \ref{it:asymBehav}.
		
		\item We now prove a variant of Lemma~\ref{lem:expoDecFKwired}, establishing exponential decay of connectivities uniformly over boundary conditions under the measure \(\p_{x'}\).
		\begin{lemma}
		\label{lem:UniformExpDecayLine}
		Assume that \(x<\xc\). Then, for any \(x'\geq 0\), there exists a constant $\Cl[expoThm]{ED1}=\Cr{ED1}(x,x',q,d)>0$ such that
		\begin{equation}\label{eq:ExpDecayLine}
		\p_{x',\Lambda_n(u)}^{\wired}(u\leftrightarrow\partial\Lambda_n(u)) \leq e^{-\Cr{ED1}n}
		\end{equation}
		uniformly over $u\in\Z^d$.		
		\end{lemma}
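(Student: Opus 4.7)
The plan is to exploit the one-dimensional nature of the defect line $\Line$, combining stochastic monotonicity, finite energy, and the homogeneous exponential decay from Lemma~\ref{lem:expoDecFKwired}. For $x'\leq x$, stochastic monotonicity $\p_{x'}\preceq\p_x$ yields the claim directly from Lemma~\ref{lem:expoDecFKwired}, so the focus is on $x'>x$. By translation invariance along $\Line$ and a short finite-energy argument, I may assume $u=0$; moreover, the case $u\notin\Line$ reduces to the on-line case, since either $C_u$ stays in the homogeneous sub-box $\Lambda_{d(u,\Line)}(u)$ (handled directly by Lemma~\ref{lem:expoDecFKwired}) or $C_u$ first hits $\Line$ at some vertex, which reduces to the on-line problem in a slightly smaller box at the cost of a finite-energy factor.

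For $u=0\in\Line$, I would control the cluster via its \emph{line extent} $D=\sup\setof{|v_1|}{v\in C_0\cap\Line}$, splitting
\[
\{0\leftrightarrow\partial\Lambda_n\} \subseteq \{D\geq n/2\}\cup\{0\leftrightarrow\partial\Lambda_n,\,D<n/2\}.
\]
On the second event, $C_0$ can only escape $\Lambda_{n/2}(0)$ through bulk edges leaving some line vertex $w\in\Line\cap\Lambda_{n/2}(0)$. Conditioning on the configuration of line edges and using stochastic monotonicity, the conditional distribution of the bulk edges is dominated by the homogeneous wired FK measure, so Lemma~\ref{lem:expoDecFKwired} combined with a union bound over the $O(n)$ candidate line vertices $w$ yields a contribution of order $n\,e^{-cn/2}$.

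The key step, and the main obstacle, is to show $\p(D\geq k)\leq e^{-c'k}$, equivalently $\p_{x'}(0\leftrightarrow k\eone)\leq e^{-c'k}$. My plan is a renewal-type decomposition: any open self-avoiding path from $0$ to $k\eone$ is the concatenation of alternating \emph{line segments} (contributing a factor $\leq(x'/(x'+1))^\ell$ by iterated finite energy applied to the line edges they cover) and \emph{bulk excursions} between line endpoints (contributing factors $\leq e^{-c\Delta}$ via the homogeneous bulk two-point function, obtained again by conditioning on the line and using stochastic monotonicity). Summing over all such alternating decompositions should produce a convolution-type bound whose $k$-th coefficient decays exponentially, and the desired bound on $D$ then follows.

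The main obstacle is precisely this last summation: the per-segment weight $g(m)=(x'/(x'+1))^m+e^{-cm}$ has $\sum_m g(m)$ possibly exceeding $1$ when $x'$ is large, so a naive convolution estimate fails. One has to carefully decouple non-overlapping excursions, for instance by exploiting the Markov-type structure of the FK measure along ``line cuts'' combined with stochastic monotonicity, so that each renewal step acquires an extra exponential discount. This is where both the subcriticality $x<\xc$ (to make the bulk two-point function summable against $e^{\lambda\Delta}$ for some $\lambda>0$) and the one-dimensional geometry of $\Line$ (to organize the decomposition) enter essentially, and is the technical heart of the argument.
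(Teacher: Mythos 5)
Your treatment of the easy cases ($x'\leq x$ by monotonicity, and the part of the cluster that never touches \(\Line\)) is in the spirit of the paper, but the heart of the lemma is exactly the step you leave open: an exponential bound on connections \emph{along} the defect line, uniformly in the boundary condition, for arbitrary (possibly huge) $x'$. Your renewal/convolution scheme, with weight $(x'/(x'+1))^\ell$ per line segment and $e^{-c\Delta}$ per bulk excursion, cannot be repaired in the way you suggest. For large $x'$ each line edge is conditionally open with probability arbitrarily close to $1$, so there is no ``extra exponential discount per renewal step'' to be found: any bound whose rate comes from per-edge or per-step costs that stay bounded away from $0$ uniformly in $x'$ would contradict item~\ref{it:asymBehav} of Lemma~\ref{lem:BasicProp}, which forces the decay rate to vanish like $1/x'$. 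The decoupling of excursions you invoke is therefore not a technical refinement but a different mechanism altogether, and it is missing from the proposal. (There is also a secondary gap in your ``$D<n/2$'' step: after conditioning on the line edges, the bulk marginal is an FK measure in which the open line acts as extra wiring, so exponential decay of bulk-only connections that may repeatedly touch this wired line does not follow directly from Lemma~\ref{lem:expoDecFKwired}; the paper needs the buffer of Lemma~\ref{lem:LfreeEst} precisely to handle this.)

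The paper's argument for the on-line part is an entropy/finite-energy dichotomy rather than a renewal estimate. One coarse-grains the part of the cluster of \(\Line\) lying outside a $2K$-neighbourhood $[\Line]_{2K}$ into trees of $K$-boxes; by Lemma~\ref{lem:expoDecFKwired}, the decoupling Lemma~\ref{lem:LfreeEst} and Kesten's bound on the number of trees, each coarse-grained vertex costs $e^{-\Cr{decayFK}K/2}$ uniformly over everything else, so having order $n/K$ such vertices has probability $e^{-cn}$. On the complementary event, a positive fraction of the $3K$-slices along the line are ``uncovered'': the connection from $u$ to the far side must cross each such slice inside $[\Line]_{2K}$. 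The key finite-energy observation is that, conditionally on the configuration outside $[\Line]_{2K}$ (which determines which boxes are uncovered), each uncovered box is \emph{completely closed} with probability at least $\nfe^{2d|B_i|}>0$, uniformly; hence the probability that none of the $\geq n/6K$ uncovered boxes severs the connection is at most $(1-\nfe^{2d|B_i|})^{n/6K}$. This ``severing at linearly many locations'' argument is what replaces your missing convolution bound, and it naturally produces a rate that degenerates as $x'\to\infty$ (through $\nfe$), as it must.
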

		\begin{proof}
		First observe that the claim is an immediate consequence of FKG and Lemma~\ref{lem:expoDecFKwired} when \(x'\leq x\). We thus assume from now on that \(x'>x\).

		Let us write
		\begin{multline}
		\label{eq:decompTouchLine}
		\p_{x',\Lambda_n(u)}^{\wired}(u\leftrightarrow\partial\Lambda_n(u))\\
		=
		\p_{x',\Lambda_n(u)}^{\wired}(u\leftrightarrow\partial\Lambda_n(u), u\nleftrightarrow\Line)
		+
		\p_{x',\Lambda_n(u)}^{\wired}(u\leftrightarrow\partial\Lambda_n(u), u\leftrightarrow\Line)
		\end{multline}
		and treat separately the two terms in the right-hand side. For the first term, we rely again on the coupling $\Phi(\omega,\eta)$ between $\p_{x,\Lambda_n(u)}^{\wired}$ and $\p_{x',\Lambda_n(u)}^{\wired}$ as above:
		\begin{align*}
		\p_{x',\Lambda_n(u)}^{\wired}(u\leftrightarrow\partial\Lambda_n(u), u\nleftrightarrow\Line)
		&=
		\Phi(u\stackrel{\eta}{\leftrightarrow}\partial\Lambda_n(u), u\stackrel{\eta}{\nleftrightarrow}\Line)\\
		&=
		\Phi(u\stackrel{\omega}{\leftrightarrow}\partial\Lambda_n(u), u\stackrel{\eta}{\nleftrightarrow}\Line)\\
		&\leq
		\Phi(u\stackrel{\omega}{\leftrightarrow}\partial\Lambda_n(u))
		=
		\p(u\leftrightarrow\partial\Lambda_n(u)),
		\end{align*}
		so that the claim follows again from Lemma~\ref{lem:expoDecFKwired}.

		Let us finally consider the second term in the right-hand side of~\eqref{eq:decompTouchLine}.
		The proof in this case relies on a coarse-graining procedure similar to the one used in~\cite{Campanino+Ioffe+Velenik-2008}. Fix a scale $K$ and a number $r$ (both of which will be later chosen sufficiently large, independently of $n$) and define
		\begin{align*}
			\Delta_k(v)=\llbracket-k, k\rrbracket^d+v,\quad
			\outerBox(v)=\Delta_{\barK}(v) ,
		\end{align*}
		where $\barK=K+r\log(K)$. Given a set of vertices $A\subset\Z^d$, we write $[A]_k=
		\bigcup_{v\in A}\Delta_k(v)$. Set $\Lambda\equiv\Lambda_{n+2K}(u)$ and $\Lambda_n\equiv\Lambda_n(u)$.
		
		Let \(F = \setof{i\in\Lambda_n\setminus [\Line]_{2K}}{i \leftrightarrow [\Line]_{2K}}\) and \(D=\partial^{\rm ext}[\Line]_{2K}\cap\Lambda_n\). Note that \(u\in F\cup [\Line]_{2K}\). We first coarse-grain the connected components of \(F\) using the following algorithm:

		\begin{figure}[t]
			\scalebox{.6}{\input{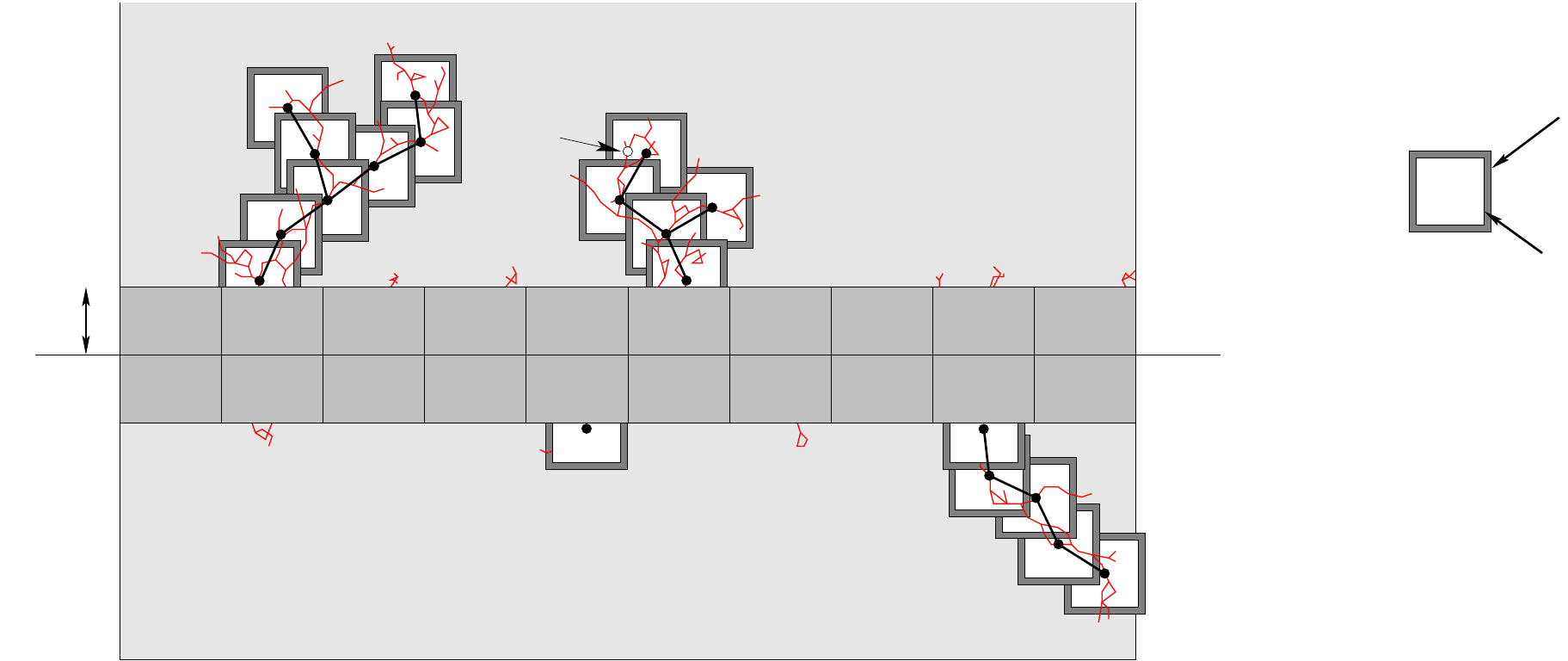_t}}
			\caption{Coarse-graining of \(F\).}
			\label{fig:CG1}
		\end{figure}

		\begin{algorithm}[H]
			\label{alg:CoarseGrainingED}
			Set $V = \emptyset$, $n=1$\;
			\While{$\exists v\in F\cap D$ such that $v\xleftrightarrow{F\setminus [V]_{\barK}}\partial\Delta_{K}(v)$} {
				Let \(v\) be the smallest such vertex and add it to \(V\)\;
				set \(V_n = \{v\}\) and \(E_n=\emptyset\)\;
				\While{$\exists w\in F\cap\partial^{\rm ext} [V]_{\barK}$ such that $w\xleftrightarrow{F\setminus [V]_{\barK}}\partial\Delta_{K}(w)$} {
					Let \(w\) be the smallest such vertex and add it to \(V\) and to \(V_n\)\;
					Let \(w'\in V\) be the smallest vertex such that \(w\in \partial^{\rm ext} [w]_{\barK}\) and add the edge \(\{w',w\}\) to \(E_n\)\;
				}
				Set \(\tree_n = (V_n,E_n)\)\;
				Update \(n \rightarrow n+1\)\;
			}
			\caption{Coarse-graining procedure}
		\end{algorithm}
		
		\smallskip\noindent
		This algorithm yields a (possibly empty) family of trees \(\tree_1=(V_1,E_1),\ldots,\tree_M=(V_M,E_M)\), possessing the following properties:
		\begin{enumerate*}[label=(\roman*)]
		\item\label{prop:tree_i}   the root of each \(\tree_k\) belongs to \(D\);
		\item\label{prop:tree_ii}  every edge \(\{w,w'\}\in E_k\), \(1\leq k\leq M\), satisfies \(w'\in\partial^{\rm ext} [w]_{\barK}\);
		\item\label{prop:tree_iii} all connected components of \(F \setminus \bigcup_{i=1}^M [V_i]_{\barK}\) have (\(\ell^\infty\)-)diameter at most \(2\barK\).
		\end{enumerate*}

		In view of the property~\ref{prop:tree_i}, it is convenient to relabel the trees according to the position of their root. Namely, for any \(v\in D\), we denote by \(\tree_v=(V_v,E_v)\) the (possibly empty) tree with root at \(v\) obtained using the above algorithm.

		Denote by $C_K$ the number of vertices in $\partial\outerBox$. The number of possible configurations of the tree \(\tree_v\), with fixed root \(v\), is at most equal to the number of trees with branching number $C_K$, which is in turn at most $e^{\Cl{cst:KestenTree}\log(C_K)
		|V_v|}$ by an argument due to Kesten (see~\cite[Section~4.2]{Grimmett-1999}). Therefore, by Lemmas~\ref{lem:expoDecFKwired} and~\ref{lem:LfreeEst}
		(which can be applied provided we choose \(r\) large enough),
		the probability that the algorithm yields a given collection of trees \((\tree_v)_{v\in D}\) with total number of vertices \(N\) is bounded above by \(e^{-\Cr{decayFK} NK}e^{\Cr{cst:KestenTree}\log(C_K)N} \leq e^{-\frac12 \Cr{decayFK} NK}\) once \(K\) is chosen large enough.
		Therefore, for any \(\rho>0\), there exists \(K_0>0\) and \(c>0\) such that, for all \(K\geq K_0\),
		\begin{align}
			\p_{x',\Lambda_n}^{\wired}\bigl(\sum_{v\in D} |V_v| \geq \rho n/K \bigr)
			&\leq
			\sum_{N\geq\rho n/K} \sum_{\substack{\ell_v\geq 0, v\in D:\\\sum_v \ell_v = N}} e^{-\frac12 \Cr{decayFK} K \sum_v \ell_v} \notag\\
			&\leq
			\sum_{N\geq\rho n/K} e^{-\frac14 \Cr{decayFK} K N}
			\Bigl \{ \sum_{\ell=0}^\infty e^{-\frac14 \Cr{decayFK} K \ell} \Bigr\}^{|D|}
			\leq
			e^{-c\rho n},
		\label{eq:BoundSizeTrees}
		\end{align}
		for all \(n\) large enough. This immediately implies that, whenever $u$ connects to a side $H'$ of $\Lambda_n$ with $H'\cap\Line=\emptyset$, the desired exponential decay follows, since, in that case, \(\sum_{v\in D} |V_v| \geq n/(2K)\).

		\begin{figure}[t]
			\scalebox{.6}{\input{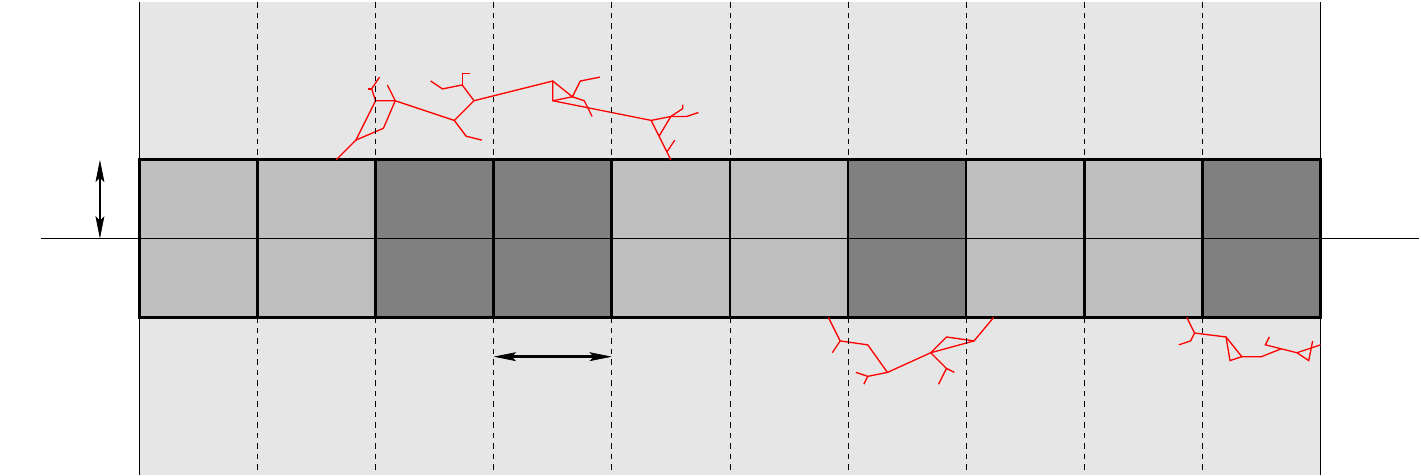_t}}
			\caption{Splitting of \([\Line]_{2K}\) into boxes. The four covered boxes are darker (only the relevant clusters of \(F\) are drawn).}
			\label{fig:CoveredBoxes}
		\end{figure}
		
		It only remains to take care of connexions to the two sides of \(\Lambda_n\) intersecting \(\Line\); by symmetry, it suffices to consider the side with largest \(\eone\) component, which we denote by \(H\). Let us split $\Lambda_n$ into slices (see Figure~\ref{fig:CoveredBoxes}). Define
		\[
			S_{i}
			=
			\bigl( \llbracket (i-1)3K,i3K \rrbracket\times \Z^{d-1} +u-n\eone \bigr)\cap\Lambda_n(u),\ i=1,2,\ldots,2n/3K,
		\]
		and set \(B_i = S_{i}\cap [\Line]_{2K}\). We say that the box \(B_i\) is \textit{covered} if \(S_{i-1} \xleftrightarrow{F} S_{i+1}\) and \textit{uncovered} otherwise. Observe that, by property~\ref{prop:tree_iii} above, \(\sum_{v\in D} |V_v|\) cannot be smaller than the number of covered boxes. Denoting by \(\calB^+_{\rm uncov} = \setof{n/3K\leq i\leq 2n/3K}{B_i \text{ is uncovered}}\) the indices of all the uncovered boxes ``on the right of'' \(u\), it thus follows from~\eqref{eq:BoundSizeTrees} that there exists \(c>0\) such that, for all large \(n\),
		\[
			\p_{x',\Lambda_n}^{\wired}\bigl(u\leftrightarrow H, u\leftrightarrow\Line\bigr)
			\leq
			\p_{x',\Lambda_n}^{\wired}\bigl(u\leftrightarrow H, |\calB^+_{\rm uncov}|\geq n/6K\bigr) + e^{-c n}.
		\]
		The proof will be complete once we prove that the first term in the right-hand side decays exponentially with \(n\). Let us decompose
		\begin{multline*}
			\p_{x',\Lambda_n}^{\wired}\bigl(u\leftrightarrow H, |\calB^+_{\rm uncov}|\geq n/6K\bigr)\\
			=
			\sum_{A:\,|A|\geq n/6K}
			\p_{x',\Lambda_n}^{\wired}\bigl(u\leftrightarrow H \given \calB^+_{\rm uncov} = A\bigr)\,
			\p_{x',\Lambda_n}^{\wired}\bigl(\calB^+_{\rm uncov} = A\bigr) .
		\end{multline*}
		Observe now that, in order for \(u\) to be connected to \(H\), it is necessary that none of the boxes \(B_i\), \(i\in\calB^+_{\rm uncov}\) is \textit{empty}, in the sense of all the edges inside of it being closed.
		Clearly, \(\calB_{\rm uncov}\) only depends on the state of the edges in \(\Lambda_n\setminus [\Line]_{2K}\).
		Since the probability that all the edges inside an uncovered box \(B_i\) are closed is bounded below by \(\nfe^{2d|B_i|}>0\), uniformly in the state of all the other edges, we conclude that
		\begin{align*}
			\p_{x',\Lambda_n}^{\wired}\bigl(u\leftrightarrow H \given \calB^+_{\rm uncov} = A\bigr)
			&\leq
			\p_{x',\Lambda_n}^{\wired}\bigl(B_i \text{ not empty, for all }i\in\calB^+_{\rm uncov} \given \calB^+_{\rm uncov} = A\bigr) \\
			&\leq
			\bigl( 1-\nfe^{|B_i|} \bigr)^{n/6K} ,			
		\end{align*}
		and the conclusion follows.

		\end{proof}

		\begin{remark}
		\label{rem:uniqueness}
		Note that, using a standard coupling argument, \eqref{eq:ExpDecayLine} implies that there is a unique infinite-volume random-cluster measure for any
		\(x'\geq 0\). Since there is a.s.\ no infinite cluster under this measure, we conclude from the Edwards--Sokal coupling that there is a unique infinite-volume Potts measure for any finite value of \(J\).
		\end{remark}
		\item We can now prove the other half of item~\ref{it:asymBehav}. Notice that the same procedure as in the previous point ensures that,
		on the event $0\leftrightarrow n\eone$, we can find $K\equiv K(x,d), \Cl{uncovBox}\equiv\Cr{uncovBox}(K,d,x)$ (uniformly in \(x'\)) such that at least half of the boxes $B_i$ are uncovered with $\p_{x'}$-probability at least $1-e^{-\Cr{uncovBox}n}$. Then, by finite energy, we can find $\epsilon\equiv\epsilon(K,d,x)>0$ and $\Cl{pivEdge}\equiv\Cr{pivEdge}(K,d,x)$ such that at least $\epsilon n/K$ boxes contain an edge in $\Line$ that is pivotal for $0\leftrightarrow n\eone$ with $\p_{x'}$-probability at least $1-e^{-\Cr{pivEdge}n}$ (again, both $\epsilon$ and $\Cr{pivEdge}$ do not depend on $x'$). Denote this event $B_{\epsilon}$. Then, proceeding as before,
		\begin{align*}
			\p_{x'}(0\leftrightarrow n\eone)
			&\leq \p_{x'}(B_{\epsilon}, 0\leftrightarrow n\eone) + e^{-\Cr{pivEdge}n}\\
			&\leq \Bigl(\frac{x'}{1+x'}\Bigr)^{\frac{\epsilon n}{K}} + e^{-\Cr{pivEdge}n}\\
			&\leq e^{-\log(1+\frac{1}{x'})\frac{\epsilon n}{K}}
			(1+e^{-(\Cr{pivEdge}- \log(1+\frac{1}{x'})\frac{\epsilon n}{K})}).
		\end{align*} 
		Choosing $x'$ such that $\Cr{pivEdge}- \log(1+\frac{1}{x'})\frac{\epsilon}{K}>0$, we obtain
		\begin{equation*}
			\iclx
			=
			\lim\limits_{n\to\infty}-\frac{1}{n}\log\p_{x'}\left(0\leftrightarrow n\eone\right)
			\geq
			\log\bigl(1+\frac{1}{x'}\bigr)\frac{\epsilon}{K}
			\geq
			\frac{\epsilon}{2K}\frac{1}{x'},
		\end{equation*}
		for $x'$ large enough.
		
		\item To prove continuity, we work again in large but finite boxes (following Remark~\ref{rem:finVol}). We start with a small computation (which will be used again in Section~\ref{sec:UB}). Let $x'_1\leq x'_2$ and write  $\lambda=\log(x'_2/x'_1)$. Then,
		\begin{align*}
			\xi_{x'_1}-\xi_{x'_2}
			&=
			\lim_{n\to\infty} \frac{1}{n}\log\frac{\p_{x'_2}(0 \leftrightarrow n\eone)}{\p_{x'_1}(0 \leftrightarrow n\eone)}\\
			&=
			\lim_{n\to\infty} \frac{1}{n}\log\frac{\e_{x'_1}\bigl[e^{\lambda \ouv_{\Line}}\bigm|0\leftrightarrow n\eone\bigr]}{\e_{x'_1}\bigl[e^{\lambda \ouv_{\Line}}\bigr]} .
		\end{align*}
		Now, we partition the numerator in the logarithm w.r.t.\ the cluster of $0$:
		\begin{align}
			\label{eq:upBndRep}
			\frac{\e_{x'_1}\bigl[e^{\lambda \ouv_{\Line}}\bigm| 0\leftrightarrow n \eone\bigr]}{\e_{x'_1}\bigl[e^{\lambda \ouv_{\Line}}\bigr]}
			&=
			\sum_{C\ni 0,n\eone} \frac{1}{\e_{x'_1}\bigl[e^{\lambda \ouv_{\Line}}\bigr]} \p_{x'_1}\bigl(C_0=C \bigm| 0\leftrightarrow n \eone\bigr)\notag\\
			&\pushright{\times\e_{x'_1}\bigl[e^{\lambda \ouv_{\Line}} \bigm| C_0=C\bigr]}\notag\\
			&\leq
			\sum_{C\ni 0,n\eone} \frac{\p_{x'_1}\bigl(C_0=C \bigm| 0\leftrightarrow n \eone\bigr)}{\e_{x'_1}\bigl[e^{\lambda \ouv_{\Line\setminus C}} \bigm| \partial C \text{ closed}\bigr]} e^{\lambda|C\cap\Line|}\notag\\
			&\pushright{\times\e_{x'_1}\bigl[e^{\lambda \ouv_{\Line\setminus C}} \bigm| \partial C \text{ closed}\bigr]}\notag\\
			&=
			\sum_{C\ni 0,n\eone} \p_{x'_1}\bigl(C_0=C \bigm| 0\leftrightarrow n \eone\bigr) e^{\lambda|C\cap\Line|}\notag\\
			&=
			\e_{x'_1}\bigl[e^{\lambda|C_0\cap\Line|} \bigm| 0\leftrightarrow n \eone\bigr] .
		\end{align}
		Partitioning w.r.t.\ the leftmost and rightmost point of $C_0\cap\Line$ (denoted $L$ and $R$), we then obtain
		\begin{align*}
			\e_{x'_1}\bigl[e^{\lambda|C_0\cap\Line|} \bgiven 0\leftrightarrow n \eone\bigr]
			&\leq
			\sum_{l=-\infty}^{0}\sum_{r=n}^{\infty} e^{\lambda(r-l)}\p_{x'_1}(L=l,R=r \given 0\leftrightarrow n\eone)\\
			&\leq e^{\lambda 3n} + e^{\lambda 3n} \sum_{l=0}^{\infty}\sum_{r=0}^{\infty} e^{\lambda(r+l)}\frac{\p_{x'_1}(0\leftrightarrow (3n+l+r)\eone)}{\p_{x'_1}(0\leftrightarrow n\eone)}\\
			&\leq e^{\lambda 3n}\Bigl(1 + \sum_{l=0}^{\infty}\sum_{r=0}^{\infty} e^{\lambda(r+l)}e^{-\xi_{x'_1}(3n+l+r)}e^{\xi_{x'_1}(1+o(1))n}\Bigr)\\
			&\leq e^{\lambda 3n}\Bigl(1 + e^{-\xi_{x'_1}n} \sum_{l=0}^{\infty}\sum_{r=0}^{\infty} e^{-(\xi_{x'_1}-\lambda)(l+r)}\Bigr).
		\end{align*}
		Note that $\lambda<\xi_{x'_1}$ when $x'_2$ is close enough to $x'_1$ (since $\xi_{x'_1}>0$). In this case, the last double sum converges and we get
		\begin{align*}
			\xi_{x'_1}-\xi_{x'_2}
			&\leq
			\lim_{n\to\infty} \frac{1}{n} \log\bigl(e^{\lambda 3n}(1 + e^{-\xi_{x'_1}n}C_{x'_2-x'_1})\bigr)\\
			&\leq 3\bigl(\log(x'_2)-\log(x'_1)\bigr)\\
			&=\int_{x_1'}^{x_2'}\frac{3}{s} \,\dd s \leq \frac{3}{x_1'}(x_2'-x_1') .\qedhere
		\end{align*}
	\end{itemize}
\end{proof}


\section{Random Walk representation}
\label{sec:RWrep}

In this section, we explain how one can couple the cluster $C_0$ under $\p(\cdot\given 0\leftrightarrow n\eone)$ (remember that \(\p\) denotes the homogeneous (that is, when \(x'=x\)) random-cluster measure on $\Zd$) with a directed random walk on \(\Zd\) for all \(x<\xc\). This coupling will allow us to analyze in detail the large-scale properties of \(C_0\).
The construction is based on the decomposition of the cluster into irreducible pieces, as described in~\cite{Campanino+Ioffe+Velenik-2008}, and on the arguments exposed in Appendix~\ref{sec:RenLongRange} that explain how to get rid of the dependency between the irreducible pieces.
The exposition is not self-contained and its goal is mostly to setup notations and remind the reader of the main steps of the construction. A reader not familiar with~\cite{Campanino+Ioffe+Velenik-2008} should refer to that work for details and additional explanations.

We start with a brief description of the coarse-graining in~\cite{Campanino+Ioffe+Velenik-2008}. As we only consider the $\eone$ direction, the construction simplifies slightly.
Let us first introduce the geometric objects required for the coarse-graining procedure. Let $0<\psi\leq\pi/2$ and let
\[
	\fcone_\psi = \setof{u\in \Zd}{\langle u, \eone\rangle \geq \norm{u}_{\scriptscriptstyle 2} \cos(\psi/2)}
\]
be the cone of angular aperture \(\psi\) and axis direction \(\eone\); we will usually omit \(\psi\) from the notation and simply write $\fcone$.
We also set \(\bcone = -\fcone\) and introduce the ``diamonds''
\[
\diam(v_1,v_2) = (v_1+\fcone)\cap(v_2+\bcone) .
\]
Let us say that $v\in C_0$ is a \emph{cone-point} if 
\[
	C_0\subset (v+\bcone) \cup (v+\fcone).
\]

We introduce three families of clusters:
\begin{itemize}
\item \(\bc_L\) is the set of all clusters \(C\) such that: \(0\in C\); \(C\) has a cone-point \(v\) such that \(C\subset v+\bcone\); \(C\) possesses no other cone-point with nonnegative \(\eone\)-coordinate.
\item \(\bc_R\) is the set of all clusters \(C\) such that: \(0\in C\); \(C\) has a cone-point \(v\) such that \(C\subset v+\fcone\); \(C\) possesses no other cone-point with nonpositive \(\eone\)-coordinate.
\item \(\alp\) is the set of all clusters \(C\) such that: \(C\) possesses exactly two cone-points, \(0\) and \(v\in\fcone\), and \(C\subset \diam(0,v)\).
\end{itemize}
(Note that the single-vertex cluster \(\{0\}\) belongs to both \(\bc_L\) and \(\bc_R\).)
We define a \emph{displacement} application \(D\) from each of these three sets into \(\fcone\) by setting (\(v\) is the vertex appearing in the previous definitions): \(D(C)=v\) when \(C\in\bc_L\cup\alp\) and \(D(C)=-v\) when \(C\in\bc_R\).

\begin{figure}[ht]
\centering
\resizebox{!}{1.7cm}{\input{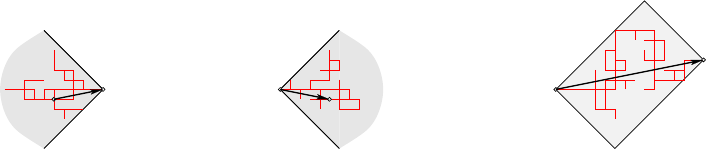_t}}
\caption{Example of clusters in \(\bc_L\), \(\bc_R\) and \(\alp\). The corresponding values \(D\) are depicted as vectors.}
\end{figure}

A cluster \(\gamma^b\in\bc_L\) can be naturally concatenated with \(m\geq 0\) clusters \(\gamma_1,\dots,\gamma_m\in\alp\) by first translating each \(\gamma_k\) by \(D(\gamma^b)+D(\gamma_1)+\dots+D(\gamma_{k-1})\). We can then also concatenate a cluster \(\gamma^f\in\bc_R\) by first translating it by
\(D(\gamma^b)+\sum_{k=1}^m D(\gamma_k) + D(\gamma^f)\).
The resulting object is denoted
\(
\gamma^b \sqcup \gamma_1 \sqcup \dots \sqcup \gamma_m \sqcup \gamma^f
\).

Let \(v_1,v_2,\ldots,v_{m+1}\) be all the cone-points of \(C_0\) with \(\eone\)-coordinate in \(\{0,\ldots,n\}\). We assume that they are ordered according to increasing $\eone$ coordinates. (We also assume that \(m\geq 1\), since this will occur with high probability, as explained below.)
These vertices induce a decomposition of \(C_0\) into a string of \(m\) irreducible components (belonging to \(\alp\)) and two boundary-components (belonging, respectively, to \(\bc_L\) and \(\bc_R\)):
\[
C_0 = \gamma^b \sqcup \gamma_1 \sqcup \cdots \sqcup \gamma_m \sqcup \gamma^f.
\]
Note that all the pieces are unambiguously identified after inverting the translations due to the concatenation, except for \(\gamma^f\). The latter ambiguity disappears if we impose that \(D(\gamma^b,\gamma_1,\dots,\gamma_m,\gamma^f) \equiv D(\gamma^b)+\sum_k D(\gamma_k)+D(\gamma^f) = n\eone\).

\begin{figure}[ht]
\centering
\resizebox{!}{2.5cm}{\input{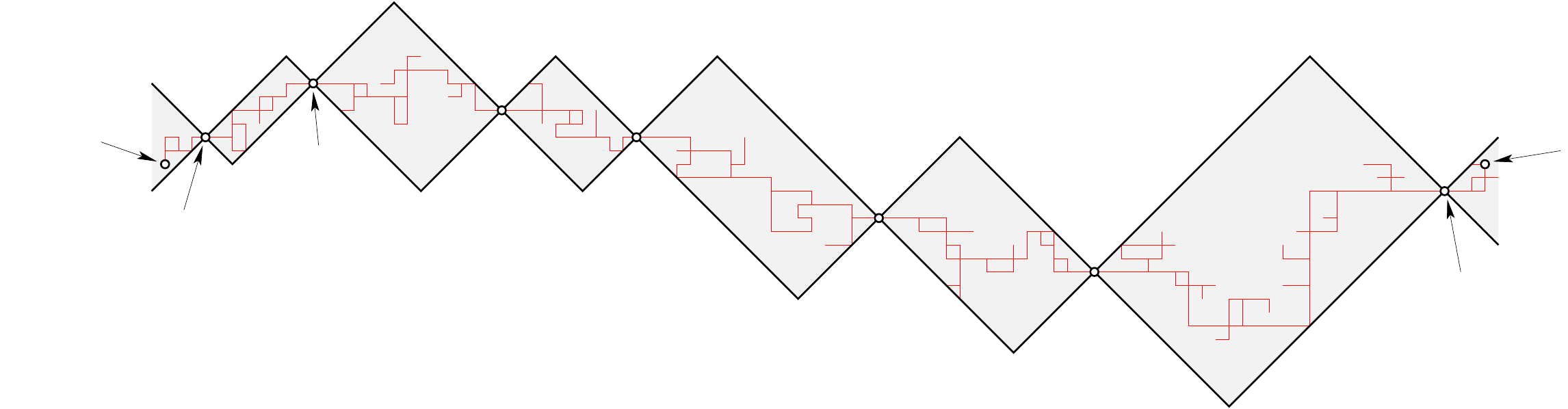_t}}
\caption{The decomposition of the common cluster of \(0\) and \(n\eone\) into irreducible pieces.}
\end{figure}

As shown in~\cite{Campanino+Ioffe+Velenik-2008}, there exists \(\Cl[expoThm]{nu3}>0\) and \(\Cl[expoThm]{nu3b}>0\) such that the number of irreducible pieces is at least $\Cr{nu3b}n$ with \(\p(\,\cdot\given 0\leftrightarrow n\eone)\)-probability at least $1-e^{- \Cr{nu3} n}$. In particular,
\begin{align}
	\label{eq:irrRep}
	\p(0\leftrightarrow n\eone)
	&=
	(1+o(1))\!\! \sum_{m\geq \nu_2 n}\sum_{\substack{\gamma^b,\gamma_1,\ldots,\gamma_m,\gamma^f:\\D(\gamma^b,\gamma_1,\dots,\gamma_m,\gamma^f) = n\eone}} \!\!\!\!\! \p(C_0=\gamma^b\sqcup\gamma_1\sqcup\cdots\sqcup\gamma_m\sqcup\gamma^f) \notag\\
	&=
	(1+o(1))\!\! \sum_{m\geq \nu_2 n}\sum_{\substack{\gamma^b,\gamma_1,\ldots,\gamma_m,\gamma^f:\\D(\gamma^b,\gamma_1,\dots,\gamma_m,\gamma^f) = n\eone}} \!\!\!\!\! \p(\Gamma^b\cap\Gamma_1\cap\cdots\cap\Gamma_m\cap\Gamma^f),
\end{align}
where the percolation events \(\Gamma^b,\Gamma_k\) and \(\Gamma^f\) are defined as follows. Let \(\vec\gamma_k\) be the translate of \(\gamma_k\) obtained after the concatenation operation and denote by \(v_{k},v_{k+1}\) the corresponding cone-points. We set
\begin{gather*}
\mathcal{W}(\vec\gamma_k) = \{\text{all edges of \(\vec\gamma_k\) are open}\}, \\
\bar\partial{\vec\gamma_k} = (\partial\vec\gamma_k)\setminus \bigl\{\{v_k-\eone,v_k\}, \{v_{k+1},v_{k+1}+\eone\}\bigr\}, \\
\mathcal{N}(\vec\gamma_k) = \{\text{all edges of \(\bar\partial\vec\gamma_k\) are closed}\}, \\
\Gamma_k = \mathcal{W}(\vec\gamma_k) \cap \mathcal{N}(\vec\gamma_k);
\end{gather*}
the definitions of \(\Gamma^b,\Gamma^f\) are completely similar.

In order to apply the results of Appendix~\ref{sec:RenLongRange}, let us reformulate the above in the language of Appendix~\ref{sec:RenLongRange} (see the latter for details). Let us write \(D_1(\gamma)=D(\gamma)\cdot \eone\) and set
\begin{align*}
\subProb(\gamma^b)
&=
e^{\xi D_1(\gamma^b)}\, \p(\Gamma^b), \\
\subProb_m(\gamma^b\sqcup\gamma_1\sqcup\dots\sqcup\gamma_m\sqcup\gamma^f)
&=
e^{\xi D_1(\gamma^b,\gamma_1,\ldots,\gamma_m,\gamma^f)}\, \p(\Gamma^b\cap \Gamma_1\cap \ldots\cap \Gamma_m\cap \Gamma^f), \\
\subProb(\gamma_k\given \gamma^b\sqcup\gamma_1\sqcup\dots\sqcup\gamma_{k-1})
&=
e^{\xi D_1(\gamma_k)}\, \p(\Gamma_k\given\Gamma^b\cap \Gamma_1\cap \ldots\cap \Gamma_{k-1}), \\
\subProb(\gamma^f\given \gamma^b\sqcup\gamma_1\sqcup\dots\sqcup\gamma_{m})
&=
e^{\xi D_1(\gamma^f)}\, \p(\Gamma^f\given\Gamma^b\cap \Gamma_1\cap \ldots\cap \Gamma_{m}).
\end{align*}
Of course, with these definitions, we have
\begin{multline*}
\subProb_m(\gamma^b\sqcup\gamma_1\sqcup\dots\sqcup\gamma_m\sqcup\gamma^f)\\
=
\subProb(\gamma^b)
\Bigl( \prod_{k=1}^m \subProb(\gamma_k\given \gamma^b\sqcup\gamma_1\sqcup\dots\sqcup\gamma_{k-1}) \Bigr)
\subProb(\gamma^f\given \gamma^b\sqcup\gamma_1\sqcup\dots\sqcup\gamma_{m}),
\end{multline*}
as desired. Moreover, the required properties are satisfied.
\begin{proposition}\label{pro:PropertiesOK}
Properties~\ref{hyp:unifSumability}, \ref{hyp:expMix}, \ref{hyp:subExpDecOfMass}, \ref{hyp:azero}, \ref{prop:expDecSteps}, \ref{prop:directed}, \ref{prop:aperiod}, \ref{prop:irred} and \ref{prop:trajSym} of Appendix~\ref{sec:RenLongRange} all hold in the present setting.
\end{proposition}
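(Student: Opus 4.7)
\medskip
\noindent\textbf{Proof plan.} Most of the content already appears in the construction above and in~\cite{Campanino+Ioffe+Velenik-2008}; what remains is essentially a verification that the tilted weights $\subProb$ defined in the previous paragraphs fit into the abstract framework of Appendix~\ref{sec:RenLongRange}. I would treat the nine properties in three groups.

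The \emph{structural} properties are immediate from the definitions. Directedness (Property~\ref{prop:directed}) holds because every $\gamma\in\alp$ has, by construction, its two cone-points $0$ and $v\in\fcone\setminus\{0\}$, so $D_1(\gamma)=v\cdot\eone\ge 1$; the cases $\gamma\in\bc_L$ and $\gamma\in\bc_R$ are analogous after flipping signs. Irreducibility (Property~\ref{prop:irred}) is literally the defining condition of the class $\alp$. Aperiodicity (Property~\ref{prop:aperiod}) follows from exhibiting, once the aperture $\psi$ has been fixed, two irreducible configurations whose longitudinal displacements have coprime $\eone$-coordinates, which is straightforward (a minimal single-edge configuration gives $D_1=1$, so the gcd is already $1$). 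Trajectory symmetry (Property~\ref{prop:trajSym}) is inherited from the invariance of the homogeneous measure $\p$ under any lattice reflection fixing the axis through $\eone$, combined with the symmetric definition $\bcone=-\fcone$ which swaps $\bc_L\leftrightarrow\bc_R$.

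The \emph{unconditional quantitative bounds}---uniform summability (Property~\ref{hyp:unifSumability}), exponential decay of steps (Property~\ref{prop:expDecSteps}), subexponential decay of mass (Property~\ref{hyp:subExpDecOfMass}) and the normalization condition (Property~\ref{hyp:azero})---are direct consequences of the Ornstein--Zernike estimates proved in~\cite{Campanino+Ioffe+Velenik-2008}. After the tilt by $e^{\xi D_1}$, the weights $\subProb(\gamma)$ for $\gamma\in\alp$ satisfy a bound of the form
\[
\sum_{\gamma\in\alp:\,D_1(\gamma)=n}\subProb(\gamma)\,e^{\tau n}\le C\,e^{-c n}
\]
for some $\tau,c>0$, which simultaneously captures the sharpness of $\xi$ as the longitudinal decay rate and the exponential decay of $\p(\Gamma_k)$ in the transverse directions. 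The same coarse-graining handles the boundary pieces $\gamma^b,\gamma^f$. The normalization (Property~\ref{hyp:azero}) is then essentially the Fekete definition of $\xi$: the tilt $\xi$ is chosen precisely to make the total $\subProb$-mass over irreducible configurations asymptotically equal to $1$, consistently with the polynomial prefactor already identified in~\eqref{eq:OZPotts}.

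The remaining and genuinely nontrivial property is \emph{exponential mixing} (Property~\ref{hyp:expMix}), and this is where I expect the main obstacle. Schematically, one needs
\[
\bigl|\log\subProb(\gamma_k\given\gamma^b\sqcup\gamma_1\sqcup\cdots\sqcup\gamma_{k-1})-\log\subProb(\gamma_k)\bigr|\le C\,e^{-c\,\mathrm{dist}}.
\]
My strategy would combine two ingredients. First, the event $\Gamma_k$ depends only on edges in a bounded neighborhood of $\vec\gamma_k$, whereas $\Gamma^b\cap\Gamma_1\cap\cdots\cap\Gamma_{k-1}$ is measurable with respect to edges strictly to the left of the separating cone-point $v_k$; the domain Markov and finite-energy properties of the FK measure reduce the question to the sensitivity of $\p(\Gamma_k)$ to the boundary condition induced near~$v_k$. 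Second, the event $\mathcal{N}(\vec\gamma_k)$ forces the closed contour around $\vec\gamma_k$ to decouple the two regions, and Lemma~\ref{lem:expoDecFKwired} shows that clusters reaching the past through this interface are exponentially rare in the transverse separation. Quantifying this decoupling uniformly in $k$ and in the shape of the past pieces is precisely the role of Appendix~\ref{sec:RenLongRange}, which adapts the construction of~\cite{Comets+Fernandez+Ferrari-2002} to the present geometric setting.
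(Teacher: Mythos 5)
There is a genuine gap, and it concerns precisely the one property that the paper cannot dispose of by citation, namely~\ref{hyp:subExpDecOfMass}. You list it among the ``unconditional quantitative bounds'' that are ``direct consequences'' of the Ornstein--Zernike estimates, and the displayed inequality you offer, \(\sum_{\gamma\in\alp:\,D_1(\gamma)=n}\subProb(\gamma)e^{\tau n}\leq Ce^{-cn}\), is really a restatement of the exponential decay of individual steps (Property~\ref{prop:expDecSteps}); it says nothing about the total mass \(\mu_m=\sum\subProb_m(\gamma^b\sqcup\gamma_1\sqcup\cdots\sqcup\gamma_m\sqcup\gamma^f)\) of concatenations of \(m\) pieces, which is what~\ref{hyp:subExpDecOfMass} controls, and the required statement is two-sided (neither exponential growth nor exponential decay in \(m\)). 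The paper proves this by showing that the generating function \(z\mapsto\sum_m z^m\sum\subProb_m\) has radius of convergence exactly \(1\): divergence for \(s>1\) uses~\eqref{eq:irrRep}, the fact that the number of irreducible pieces is at least \(\Cr{nu3b}n\) with high probability, and \(\p(0\leftrightarrow n\eone)=e^{-\xi n(1+o(1))}\); convergence for \(s<1\) uses the FKG bound \(\p(0\leftrightarrow(n,x))\leq e^{-\xi n}\) to resum over the endpoint. None of this appears in your proposal, and it does not follow from the step-decay bound you wrote. Relatedly, your reading of~\ref{hyp:azero} as ``the tilt \(\xi\) is chosen to make the total mass asymptotically one'' conflates it with~\ref{hyp:subExpDecOfMass}: Property~\ref{hyp:azero} (like~\ref{prop:aperiod} and~\ref{prop:irred}) is a finite-energy statement, namely that a specific small letter --- the single edge \(\{0,\eone\}\in\alp\), resp.\ small clusters with displacement \((r,q_i)\) --- has conditional weight bounded below uniformly in the past; it is not a gcd condition and it is not ``literally the defining condition of \(\alp\)''.

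The second problem is your treatment of~\ref{hyp:expMix}, which you single out as the main obstacle and propose to establish by a decoupling argument whose quantification is ``precisely the role of Appendix~\ref{sec:RenLongRange}''. This is circular: the appendix takes~\ref{hyp:expMix} as a hypothesis and derives the renewal/factorization structure from it, so it cannot be used to prove the mixing of the conditional weights \(\subProb(\gamma_k\given\gamma^b\sqcup\cdots\sqcup\gamma_{k-1})\). The ratio-mixing bound for irreducible pieces is exactly what \cite[Theorem~2.2]{Campanino+Ioffe+Velenik-2008} provides (together with~\ref{hyp:unifSumability} and~\ref{prop:expDecSteps}), and the paper simply cites it; your sketch via the domain Markov property and Lemma~\ref{lem:expoDecFKwired} does not, as written, yield the uniform exponential estimate on the ratio, and rebuilding it from scratch would amount to redoing a substantial part of the OZ analysis. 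In short: cite \cite[Theorem~2.2]{Campanino+Ioffe+Velenik-2008} for~\ref{hyp:unifSumability}, \ref{hyp:expMix}, \ref{prop:expDecSteps}; use finite energy for~\ref{hyp:azero}, \ref{prop:aperiod}, \ref{prop:irred}; and supply the missing radius-of-convergence argument for~\ref{hyp:subExpDecOfMass}.
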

\begin{proof}
\ref{hyp:unifSumability}, \ref{hyp:expMix} and~\ref{prop:expDecSteps} are direct consequences of~\cite[Theorem~2.2]{Campanino+Ioffe+Velenik-2008}.
\ref{prop:directed} and \ref{prop:trajSym} are obvious. \ref{hyp:azero} and~\ref{prop:aperiod} hold by finite energy, since the edge \(\{(0,0_{d-1}),(1,0_{d-1})\}\) belongs to \(\alp\);
we can argue similarly for \ref{prop:irred}.
Let us check~\ref{hyp:subExpDecOfMass}.

\newcommand{\uga}{\underline{\gamma}}
To shorten notation, we simply write \(\uga_m = (\gamma^b,\gamma_1,\dots,\gamma_m,\gamma^f)\).
We first assume that \(s>1\). In this case, it follows from~\eqref{eq:irrRep} that
\begin{align*}
\sum_{m\geq 0} s^m \sum_{\uga_m} \subProb_m(\uga_m)
&\geq
\sum_{n\geq 1} s^{\Cr{nu3b}n} \sum_{m = \Cr{nu3b}n}^n  \sum_{\substack{\uga_m: \\ D(\uga) = n\eone}} \subProb_m(\uga_m) \\
&\geq
C \sum_{n\geq 1} s^{\Cr{nu3b}n} e^{\xi n} \p(0\leftrightarrow n\eone) = +\infty,
\end{align*}
since \(\p(0\leftrightarrow n\eone) = e^{-\xi n (1+o(1))}\).
Let us now assume that \(s<1\).
Since, by FKG, \(\p(0\leftrightarrow (n,x)) \leq e^{-\xi n}\) for any \(x\in\Z^{d-1}\) and \(n\geq 1\), it again follows from~\eqref{eq:irrRep} that
\begin{align*}
\infty
>
\sum_{n\geq 1} s^n \sum_{\substack{x\in\Z^{d-1}:\\(n,x)\in\fcone}} e^{\xi n} \p(0\leftrightarrow (n,x))
&\geq
\sum_{n\geq 1} s^n \sum_{m = \nu_2 n}^n \sum_{\substack{\uga_m: \\ D_1(\uga) = n}} \subProb_m(\uga_m) \\
&\geq
\sum_{m\geq 1} s^{m/\Cr{nu3b}} \sum_{n = m}^{m/\Cr{nu3b}} \sum_{\substack{\uga_m: \\ D_1(\uga) = n}} \subProb_m(\uga_m) \\
&\geq
C \sum_{m\geq 1} s^{m/\Cr{nu3b}} \sum_{\uga_m} \subProb_m(\uga_m).
\end{align*}
We conclude that the radius of convergence of \(z\mapsto \sum_{m\geq 1} z^m \sum_{\uga_m} \subProb_m(\uga_m)\) is equal to \(1\), which establishes~\ref{hyp:subExpDecOfMass}.
\end{proof}

In view of the above, we can import the results of Appendix~\ref{app:Ren_Setting} to the present context.
Let $\calS=\bigcup_{n\geq 0}\alp^n$, $\calS^*=\bigcup_{n\geq 1}\alp^n$, $\compl{\bc}_L = \setof{b_L\sqcup \mathbf{x}}{b_L\in\bc_L, \mathbf{x}\in\calS}$ and $\compl{\bc}_R=\setof{\mathbf{x}\sqcup b_R}{b_R\in\bc_R, \mathbf{x}\in\calS}$. One can then define (see~\eqref{eq:weights}) two finite, positive measures \(\rho_L\) and \(\rho_R\) on \(\compl{\bc}_L\) and \(\compl{\bc}_R\) respectively, and a probability measure \(\statMes\) on \(\calS^*\).

To any family \(\boldsymbol{\gamma}=(\gamma^b,\gamma_1,\dots,\gamma_m,\gamma^f)\), with \(m\geq 1\), we can associate uniquely a cluster of \(0\) (not necessarily containing \(n\eone\)), with cone-points \(v_1,\dots,v_{m+1}\) (more precisely: \(\gamma^b\) is not translated, while the other ones are concatenated as explained above). Any subset \(\bfx=\{x_1,\dots,x_{k+1}\}\subset\{v_1,\dots,v_{m+1}\}\) induces a decomposition \(\boldsymbol{\tilde\gamma}=(\tilde\gamma^b,\tilde\gamma_1,\dots,\tilde\gamma_k,\tilde\gamma^f)\) by concatenating irreducible pieces not separated by cone-points in \(\bfx\). We then introduce a (positive, finite) measure on triples \((\boldsymbol{\tilde\gamma},\bfx,y)\), with \(y\in\fcone\), by setting
\[
\hat{\mathbb{Q}}(\boldsymbol{\tilde\gamma}, \bfx, y) = \IF{D(\boldsymbol{\tilde\gamma})=y}\, \rho_L(\tilde\gamma^b) \, \rho_R(\tilde\gamma^f) \, \prod_{i=1}^k \statMes(\tilde\gamma_i).
\]

By Lemma~\ref{lem:decoupling} and Theorem~\ref{thm:procToRW}, there exists \(c>0\) such that, for any bounded function \(f\) of the cluster \(C_0\),
\begin{equation}
\label{eq:C0marginal}
\bigl| \hat{\mathbb{Q}}(f, D(\boldsymbol{\tilde\gamma})=n\eone) - e^{\xi n} \e(f \IF{0\leftrightarrow n\eone}) \bigr|  \leq e^{-cn},
\end{equation}
for all \(n\) large enough.

Given \(k\) distinct vertices \(x_1,\ldots,x_k\) such that \(x_1\in\fcone\), \(x_k\in n\eone+\bcone\) and \(x_{i+1}\in x_i+\fcone\) for \(1\leq i<k\), and an additional vertex \(y\in x_k+\fcone\), we can write
\[
\hat{\mathbb{Q}}(\bfx,y) =  \hat\rho_L(x_1) \, \hat\rho_R(y-x_{k}) \, \prod_{i=1}^{k-1} \hat\statMes(x_{i+1}-x_i),
\]
where \(\hat\rho_L\), \(\hat\rho_R\) and \(\hat\statMes\) are the push-forwards of \(\rho_L\), \(\rho_R\) and \(\statMes\) by the displacement map \(D\). By Theorem~\ref{thm:procToRW}, these measures have exponential tails: there exist \(c>0\) and \(c_p>0\) such that, for any \(x\in\fcone\),
\begin{equation}
\label{eq:expTailsrhop}
\hat\rho_L(x) \leq e^{-c\norm{x}},
\quad
\hat\rho_R(x) \leq e^{-c\norm{x}},
\quad
\hat\statMes(x) \leq e^{-c_p\norm{x}},
\end{equation}
for all \(\norm{x}\) large enough. Moreover,
\begin{equation}
\label{eq:rhoLrhoRpositive}
\hat{\rho}_L(0) > 0,\qquad
\hat{\rho}_R(0) > 0,
\end{equation}
as follows from Lemma~\ref{lem:finEnergProc} and Remark~\ref{rem:BCtrivial}.

Let us denote by \(\mathbf{P}_u\) and \(\mathbf{E}_u\) the distribution and expectation associated to the random walk \((S_\ell)_{\ell\geq 0}\), starting at \(u\in\Zd\) with transition probabilities given by \(\hat\statMes\). Write $X_i=S_i-S_{i-1}$ for the increments of \(S\). Let also \(\calR(v) = \{\exists \ell\geq 0:\, S_\ell=v\}\) and set \(\mathbf{P}_{u,v}=\mathbf{P}_u(\,\cdot\given \calR(v))\) and \(\mathbf{E}_{u,v}=\mathbf{E}_u[\,\cdot\given \calR(v)]\).

As a direct consequence of~\eqref{eq:C0marginal}, observe that
\[
e^{\xi n} \p(0\leftrightarrow n\eone) = \sum_{\substack{u\in\fcone\\v\in n\eone+\bcone}} \hat\rho_L(u)\, \hat\rho_R(n\eone-v)\, \mathbf{P}_{u}(\calR(v)) + e^{-cn},
\]
for some \(c>0\). By the local limit theorem (see~\cite{Ioffe-2015}), uniformly in \(u,v\) such that \(\norm{u},\norm{n\eone-v}\leq n^{1/2-\alpha}\) (for some fixed \(\alpha>0\)),
\begin{equation}
\label{eq:RatioRuv}
\frac{\mathbf{P}_u(\calR(v))}{\mathbf{P}_0(\calR(n\eone))} = 1 + o(1),
\end{equation}
and
\begin{equation}
\label{eq:LLT}
\mathbf{P}_0(\calR(n\eone)) = (1 + o(1))\, \Cl{Cst:LLT}\, n^{-(d-1)/2}.
\end{equation}
In particular, we obtain the Ornstein-Zernike asymptotics:
\begin{equation}
\label{eq:OZzzz}
e^{\xi n} \p(0\leftrightarrow n\eone) = (1 + o(1))\, \Cl{Cst:OZ}\, n^{-(d-1)/2} ,
\end{equation}
with \(\Cr{Cst:OZ} = \Cr{Cst:LLT} \sum_{\substack{u\in\fcone\\v\in n\eone+\bcone}} \hat\rho_L(u)\, \hat\rho_R(n\eone-v)\).


\section{Upper bound on \(\xi_{x}-\iclx\)}
\label{sec:UB}

In this section, we prove the upper bounds in Items~\ref{item:Critical2d} and~\ref{item:Critical3d} of Theorem~\ref{thm:PinnedRegime}, as well the \(d\geq 4\) part of Theorem~\ref{thm:Jc_Potts}.
The argument in this section is a variant of the argument in~\cite{Friedli+Ioffe+Velenik-2013}, which applied to the case of Bernoulli percolation.

We work once more in large but finite volumes (as explained in Remark~\ref{rem:finVol}).
In view of Theorem~\ref{thm:BasicPropPotts}, we can assume, without loss of generality, that \(x'\geq x\). In particular, \(\lambda=\log(x'/x)\geq 0\).
By~\eqref{eq:upBndRep}, we have the upper bound
\begin{align*}
	\frac{\e\bigl[e^{\lambda \ouv_{\Line}} \bigm| 0\leftrightarrow n \eone\bigr]}{\e\bigl[e^{\lambda \ouv_{\Line}}\bigr]}
	\leq
	\e\bigl[e^{\lambda |\Line\cap C_0|} \bigm| 0\leftrightarrow n \eone\bigr].
\end{align*}

By Lemma~\ref{lem:UniformExpDecayLine} and~\eqref{eq:C0marginal}, there exist \(\lambda_0>0\) and \(c>0\) such that, for any \(\lambda<\lambda_0\),
\begin{align*}
e^{\xi n}\e\bigl[e^{\lambda |\Line\cap C_0|} \IF{0\leftrightarrow n \eone}\bigr]
&\leq
e^{\xi n}\e\bigl[e^{\lambda |\Line\cap C_0|} \IF{0\leftrightarrow n \eone, |\Line\cap C_0|\leq c'n}\bigr] + e^{-cn}\\
&\leq
\hat{\mathbb{Q}}(e^{\lambda |\Line\cap C_0|}, D(\boldsymbol{\tilde\gamma})=n\eone) + e^{-cn}.
\end{align*}
In particular, using~\eqref{eq:OZzzz} and~\eqref{eq:LLT},
\[
\e\bigl[e^{\lambda |\Line\cap C_0|} \given 0\leftrightarrow n \eone\bigr] \leq \C \hat{\mathbb{Q}}(e^{\lambda |\Line\cap C_0|} \given D(\boldsymbol{\tilde\gamma})=n\eone) + e^{-cn},
\]
for some \(c>0\).
Then,
\begin{align*}
	|\Line\cap C_0|
	&=
	|\tilde\gamma^b\cap\Line| + |\tilde\gamma^f\cap\Line| + \sum_{i=1}^{M} |\tilde\gamma_i\cap\Line|\\
	&\leq
	|\tilde\gamma^b\cap\Line| + |\tilde\gamma^f\cap\Line| + \sum_{i=1}^{M} |D(S_{i-1},S_i)\cap\Line|\\
	&\leq
	|\tilde\gamma^b\cap\Line| + |\tilde\gamma^f\cap\Line| + \sum_{i=1}^{M} X_i^{\parallel}\IF{X_i^{\parallel}\geq|S_{i-1}^{\perp}|} ,
\end{align*}
where the last inequality relies on the fact that the angle of the ``diamonds'' is at most $\pi/2$ and the fact that a step cannot cross the line if its parallel component is smaller than the distance between its starting point and the line. We get
\begin{align*}
	\hat{\mathbb{Q}}(e^{\lambda |\Line\cap C_0|} \given D(\boldsymbol{\tilde\gamma})=n\eone)
	\leq C\sum_{u,v}\rho_b^{\lambda}(u)\rho_f^{\lambda}(n\eone-v)\mathbf{E}_{u,v}\Bigl[e^{\lambda\sum_{i=1}^{M}X_i^{\parallel}\IF{X_i^{\parallel}\geq|S_{i-1}^{\perp}|}} \Bigr] ,
\end{align*}
where \(\rho_b^{\lambda}(y)\) and \(\rho_f^{\lambda}(y)\) decay exponentially in \(\norm{y}\), provided that \(\lambda\) be small enough.
In particular, we can restrict the sum to the pairs $u,v$ with $|u|,|v|\leq n^{1/2-\alpha}$ and we have $\sum_{u,v}\rho_b^{\lambda}(u)\rho_f^{\lambda}(v)<\infty$. At this stage, notice that the problem has been reduced to the analysis of a variant of the random-walk pinning problem. Then, for any \(m_0\geq 1\), we can write
\begin{align}
\label{eq:m0bnd}
	\ebf_{u,v}\Bigl[ e^{\lambda\sum_{i=1}^{M} X_i^{\parallel} \IF{X_i^{\parallel}\geq|S_{i-1}^{\perp}|}} \Bigr]
	&=
	\sum_{m=1}^{n} \ebf_{u,v}\Bigl[ e^{\lambda\sum_{i=1}^{m} X_i^{\parallel} \IF{X_i^{\parallel}\geq|S_{i-1}^{\perp}|}},M=m \Bigr]\notag\\
	&\leq
	e^{\lambda n} \pbf_{u,v}(M<m_0) + n\sup_{m_0 \leq m\leq n} A_{u,v}(m) ,
\end{align}
with 
\begin{align*}
	A_{u,v}(m)
	&=
	\ebf_{u,v}\Bigl[ e^{\lambda\sum_{i=1}^{m} X_i^{\parallel} \IF{X_{i}^{\parallel}\geq|S_{i-1}^{\perp}|}} \Bigr]\\
	&\leq
	O(n^{(d-1)/2}) \sum_{k=1}^{m} \sum_{\substack{\ell_1,\ldots,\ell_k\\\sum \ell_i = m}} \ebf_{u}\Bigl[\prod_{i=1}^{k} (e^{\lambda X_{t_i}^{\parallel}}-1)\; \IF{X_{t_i}^{\parallel}\geq |S_{t_i-1}^{\perp}|} \Bigr]\\
	&\leq
	O(n^{(d-1)/2}) \sum_{k=1}^{m} \lambda^k \sum_{\substack{\ell_1,\ldots,\ell_k\\\sum \ell_i = m}} \ebf_{u}\Bigl[\prod_{i=1}^{k} e^{\lambda X_{t_i}^{\parallel}} X_{t_i}^{\parallel}\; \IF{X_{t_i}^{\parallel} \geq |S_{t_i-1}^{\perp}|} \Bigr] ,
\end{align*}
where $t_i=\sum_{j=1}^{i} \ell_j$. The first inequality is obtained using~\eqref{eq:LLT}, writing
\[
\prod_{i=1}^{m}e^{\lambda X_i^{\parallel}\IF{X_{i}^{\parallel}\geq|S_{i-1}^{\perp}|}} = \prod_{i=1}^{m} \bigl( (e^{\lambda X_i^{\parallel}} - 1) \IF{X_{i}^{\parallel}\geq|S_{i-1}^{\perp}|} + 1 \bigr)
\]
and expanding the product. The second inequality is obtained using $e^{\lambda x}-1 = e^{\lambda x}(1-e^{-\lambda x})\leq e^{\lambda x}\lambda x$. Now, we use the Markov property and the local limit theorem in dimension $d-1$ to get that, for all $j$,
\begin{align*}
	\ebf_{u}\Bigl[ e^{\lambda X_{t_j}^{\parallel}} X_{t_j}^{\parallel}\; \IF{X_{t_j}^{\parallel}\geq |S_{t_j-1}^{\perp}|} \Bigm| S_{t_{j-1}}, X_{t_j} \Bigr]
	&=
	X_{t_j}^{\parallel} e^{\lambda X_{t_j}^{\parallel}} \pbf_{S_{t_{j-1}}}\bigl(|S_{t_j-1}^{\perp}|\leq X_{t_j}^{\parallel} \bigr)\\
	&\leq
	X_{t_j}^{\parallel} e^{\lambda X_{t_j}^{\parallel}} \C(X_{t_j}^{\parallel})^{d-1} \ell_j^{-(d-1)/2}.
\end{align*}
As $\pbf_u(X_i^{\parallel}>a) \leq e^{-c_p a}$, we obtain
\[
	\ebf_{u}\Bigl[ e^{\lambda X_{t_j}^{\parallel}} X_{t_j}^{\parallel}\; \IF{X_{t_j}^{\parallel}\geq |S_{t_j-1}^{\perp}|} \Bigm| S_{t_{j-1}} \Bigr]
	\leq
	\C\, \ell_j^{-(d-1)/2} \sum_{a\geq 1} e^{-c_p a} a^d e^{\lambda a}
	\equiv
	\Cl{UB1}\, \ell_j^{-(d-1)/2},
\]
with $\Cr{UB1}<\infty$ provided that $\lambda<c_p$. Therefore, defining
\begin{equation}
\label{eq:Am}
	A(m) = \sum_{k=1}^{m} (\Cr{UB1}\lambda)^k \sum_{\substack{\ell_1,\ldots,\ell_k\\\sum \ell_j = m}} \prod_{j=1}^{k} \ell_j^{-(d-1)/2},
\end{equation}
we get $A_{u,v}(m)\leq \C\, n^{(d-1)/2}A(m)$.

\medskip
In dimension $4$ and larger, we bound $A(m)$ uniformly over $m$ by ignoring the constraint $\sum \ell_j = m$:
\[
	A(m) \leq \sum_{k=1}^{\infty} \Bigl(\Cr{UB1}\, \lambda \sum_{\ell=1}^{\infty} \ell^{-(d-1)/2} \Bigr)^k ,
\]
which is convergent for $\lambda>0$ small enough. Using~\eqref{eq:m0bnd} with $m_0=1$, this implies that
\begin{align*}
	\e\bigl[ e^{\lambda |\Line\cap C_0|} \bigm| 0\leftrightarrow n \eone\bigr]
	\leq
	\C\, n^{(d+1)/2},
\end{align*}
which in turn yields $\xi\leq\iclx$. Since, \(\xi\geq\iclx\) always holds, we conclude that $\iclx=\xi$ for $\lambda>0$ small enough, and thus that $x'_c>x$ in dimension $4$ and larger. This proves the \(d\geq 4\) part of Theorem~\ref{thm:Jc_Potts}.

\medskip
In dimension $2$ and $3$, we get a diverging (in $m$) upper bound on $A(m)$. Consider the generating function associated to the sequence $(A(m))_{m\geq 1}$ and define $\mathbb{B}(z)$:
\[
	\mathbb{A}(z) = \sum_{m=1}^{\infty} A(m)z^m,
	\qquad
	\mathbb{B}(z) = \Cr{UB1}\lambda\sum_{\ell=1}^{\infty} \ell^{-(d-1)/2}z^\ell.
\]
Using~\eqref{eq:Am}, we have the relation
\[
	\mathbb{A}(z) = \sum_{k\geq 1} \mathbb{B}(z)^k .
\]
Note that \(\mathbb{B}\) is increasing on \(\mathbb{R}_+\).
Let $f(\lambda)>0$ be the unique number such that $\mathbb{B}(e^{-f(\lambda)})=1$.
Since $\mathbb{A}(e^{-2f(\lambda)})<\infty$, we conclude that $A(m)\leq e^{2f(\lambda)m}$ for all large enough $m$.
Now, using~\eqref{eq:m0bnd} with $m_0=\Cl{UB2}\, n$, $\Cr{UB2}>0$ small enough, and taking $\lambda$ sufficiently small, we obtain, when \(n\) is large,
\[
	\e\bigl[e^{\lambda |\Line\cap C_0|} \bigm| 0\leftrightarrow n \eone\bigr]
	\leq
	\C (1+O(n^{\frac{d-1}{2}}) e^{2f(\lambda)n}).
\]
It then follows from Theorem A.2 in \cite{Giacomin-2007} that, as $\lambda\downarrow 0$, $f(\lambda)$ behaves as
\[
	f(\lambda)
	=
	\begin{cases}
		\C \lambda^2 (1+o(1))					& \text{when } d=2,\\
		\exp\bigl(-\C/\lambda (1+o(1)) \bigr) 	& \text{when }d=3.
	\end{cases}
\]
This proves the upper bounds in Items~\ref{item:Critical2d} and~\ref{item:Critical3d} of Theorem~\ref{thm:PinnedRegime}.


\section{Lower bound on \(\xi_{\xpc} - \iclx\) when \(d=2,3\)}
\label{sec:LB}

In this section, we prove the lower bounds in Items~\ref{item:Critical2d} and~\ref{item:Critical3d} of Theorem~\ref{thm:PinnedRegime}, which will then also imply the \(d\in\{2,3\}\) part of Theorem~\ref{thm:Jc_Potts}.

For technical reasons, we work with large but finite systems (see Remark~\ref{rem:finVol}). The proof is based on an energy-entropy argument induced by
\[
	\frac{\p_{x'}(0\leftrightarrow n \eone)}{\p(0\leftrightarrow n \eone)}
	\geq
	\frac{\p_{x'}(M_{\delta} )}{\p(0\leftrightarrow n \eone)}
	\geq
	\frac{\p_{x'}(M_{\delta}) }{\p(M_{\delta})}\p(M_{\delta}|0\leftrightarrow n \eone) ,
\]
where $\delta\in(0,1)$ is arbitrary and $M_{\delta}$ is the event \{there exists an open path $\gamma\in\Gamma_{\delta}$\} with $\Gamma_{\delta}$ the set of self-avoiding paths from $0$ to $n\eone$ with at least $\delta n$ cone-points on $\Line_{[0,n]}$ (cone-points for the path itself, not the cluster of $0$). The analysis below applies to arbitrary values of the parameter \(\delta\). A specific choice will be made at the end of the section.
\begin{lemma}
	\label{lem:Russo}
	Let $A$ be an increasing event and take $x'>x$. Then,
	\begin{align*}
		\frac{\p_{x'}(A)}{\p(A)}
		\geq
		\exp\Bigl(\int_{x}^{x'} \frac{1}{s(1+s)} \sum_{e\in\Line_{[0,n]}} \p_s(e\in\textnormal{Piv}_A \given A) \,\dd s \Bigr).
	\end{align*}
\end{lemma}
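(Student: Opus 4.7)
The plan is to compute $\frac{d}{ds}\log\p_s(A)$ explicitly and bound it below by $\frac{1}{s(1+s)}\sum_{e\in\Line_{[0,n]}}\p_s(e\in\textnormal{Piv}_A\given A)$; integrating from $x$ to $x'$ and exponentiating then gives the claim. Working in a finite volume large enough to contain $\Line_{[0,n]}$ (cf.\ Remark~\ref{rem:finVol}), the partition functions $Z_s=\sum_\omega q^{\kappa(\omega)} s^{\ouv_\Line(\omega)}\prod_{e\notin\Line} x^{\omega_e}$ and $Z_s^A$ (restricted sum over $\omega\in A$) are polynomials in $s$, and direct differentiation yields
\begin{equation*}
\frac{d}{ds}\log\p_s(A) \;=\; \frac{1}{s}\bigl(\e_s[\ouv_\Line\given A]-\e_s[\ouv_\Line]\bigr) \;=\; \frac{1}{s}\sum_{e\in\Line}\bigl(\p_s(\omega_e=1\given A) - \p_s(\omega_e=1)\bigr).
\end{equation*}
Since $A$ is increasing, each summand is $\geq 0$ by FKG, so restricting the sum to $e\in\Line_{[0,n]}$ only discards nonnegative terms. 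The problem is therefore reduced to proving the pointwise inequality
\begin{equation*}
\p_s(\omega_e=1\given A) - \p_s(\omega_e=1) \;\geq\; \frac{1}{1+s}\,\p_s(e\in\textnormal{Piv}_A\given A) \qquad \forall\, e\in\Line_{[0,n]}.
\end{equation*}

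For this pointwise bound, I would condition on $\omega_{\setminus e}$ and use the $(\omega_{\setminus e})$\nobreakdash-measurable partition into $E^+$ (the event that $A$ already holds when $e$ is closed, hence regardless of $\omega_e$ by monotonicity), $E^{\mathrm{p}}=\{e\in\textnormal{Piv}_A\}$, and $E^-$ (the event where $A$ fails no matter what). Writing $p_e^1(\omega_{\setminus e})=\p_s(\omega_e=1\given\omega_{\setminus e})$, the finite-energy formula for FK\nobreakdash-percolation gives $p_e^1\in\{s/(1+s),\,s/(q+s)\}$, with the larger value occurring exactly when the endpoints of $e$ are already connected in $\omega_{\setminus e}$. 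Two features will drive the proof: $p_e^1$ is a nondecreasing function of $\omega_{\setminus e}$ (connectivity is an increasing event), and $p_e^1\leq s/(1+s)$ whenever $q\geq 1$, so $\e_s[1-p_e^1]\geq 1/(1+s)$. Moreover, $\IF{E^+}$ is nondecreasing in $\omega_{\setminus e}$ because $A$ is increasing.

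Conditioning on $\omega_{\setminus e}$ inside each term and expanding, one obtains the identity
\begin{equation*}
\p_s(\omega_e=1,A) - \p_s(\omega_e=1)\,\p_s(A) \;=\; \mathrm{Cov}_s\!\left(\IF{E^+},p_e^1\right) + \p_s(E^{\mathrm{p}}\cap A)\,\e_s[1-p_e^1].
\end{equation*}
The covariance is nonnegative by the FKG inequality applied to the two nondecreasing $\mathcal{F}_{\setminus e}$\nobreakdash-measurable quantities $\IF{E^+}$ and $p_e^1$ (which requires $q\geq 1$), while the second term is bounded below by $\frac{1}{1+s}\p_s(e\in\textnormal{Piv}_A, A)$, using $\p_s(E^{\mathrm{p}}\cap A)=\p_s(e\in\textnormal{Piv}_A,A)$. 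Dividing by $\p_s(A)$ yields the pointwise inequality; integrating $\frac{d}{ds}\log\p_s(A)$ from $x$ to $x'$ and exponentiating concludes the proof.

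The only delicate point is the covariance decomposition together with the verification that both $\IF{E^+}$ and $p_e^1$ are nondecreasing in $\omega_{\setminus e}$; once that monotonicity observation is in place, the rest is routine FKG and finite-energy bookkeeping. Note that the factor $1/(1+s)$ in the statement is precisely the slack introduced by the bound $\e_s[1-p_e^1]\geq 1/(1+s)$, which for $q=1$ becomes an equality and recovers the classical Russo formula for Bernoulli percolation with equality in place of the displayed inequality.
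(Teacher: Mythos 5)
Your proof is correct, but it takes a somewhat different route from the paper. The paper factors the statement through the general Russo-type inequality of Lemma~\ref{lem:RussoGen} (Appendix~\ref{app:Russo}), applied to the measure in which only the weights on \(\Line_{[0,n]}\) are raised to \(s\), and then upgrades the numerator via the FKG consequence \(\p_{x'}(A)\geq\p_{x'}^{(n)}(A)\); inside Lemma~\ref{lem:RussoGen}, the per-edge bound \(\p_s(A\given\omega_e=1)-\p_s(A\given\omega_e=0)\geq\p_s(e\in\textnormal{Piv}_A\given\omega_e=1)\) is obtained from a monotone coupling of the two conditioned measures (Appendix~\ref{app:Couplings}). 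You instead differentiate \(\log\p_s(A)\) directly along the family in which the \emph{whole} line carries weight \(s\), discard the edges of \(\Line\setminus\Line_{[0,n]}\) using FKG positivity of each summand, and prove the per-edge inequality by conditioning on \(\omega_{\setminus e}\): your exact decomposition \(\IF{A}=\IF{E^+}+\IF{E^{\mathrm{p}}}\IF{\omega_e=1}\) gives precisely the identity you state, the covariance term is nonnegative by FKG since both \(\IF{E^+}\) and \(p_e^1\) are nondecreasing (the latter because \(q\geq1\) makes the connected value \(s/(1+s)\) the larger one), and finite energy gives \(\e_s[1-p_e^1]\geq 1/(1+s)\). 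This is a legitimate, coupling-free alternative to the paper's argument; it has the minor advantage of producing the pivotal probabilities under the fully modified measure \(\p_s\) (matching the statement as written), whereas the paper's route via \(\p_s^{(n)}\) yields them under the measure modified only on \(\Line_{[0,n]}\) — a harmless difference, since the subsequent application in Lemma~\ref{lem:Ene} only requires a uniform lower bound on the pivotal sum. The one point to keep explicit is that the whole computation (polynomial partition functions, finiteness of \(\ouv_\Line\)) lives in a large finite volume, as you note via Remark~\ref{rem:finVol}, which is also how the paper uses the lemma in Section~\ref{sec:LB}.
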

\begin{proof}
	This is a straightforward application of Lemma~\ref{lem:RussoGen} and of the following consequence of the FKG inequality: $\p_{x'}(A)\geq\p_{x'}^{(n)}(A)$ where $\p_{x'}^{(n)}$ denotes the random-cluster measure with probabilities modified to \(x'\) only on $\Line_{[0,n]}$.
\end{proof}

This inequality allows us to control the ``energy'' part.
\begin{lemma}\label{lem:Ene}
	There exists $\rho>0$, depending on $x$, such that
	\[
		\frac{\p_{x'}(M_{\delta})}{\p(M_{\delta})}
		\geq
		\exp\Bigl( \frac{(x'-x)}{x'(1+x')}\rho\delta n \Bigr).
	\]
\end{lemma}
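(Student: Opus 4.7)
The plan is to apply Lemma~\ref{lem:Russo} to the increasing event \(M_\delta\) and reduce the problem to a uniform linear lower bound on the expected number of pivotal line edges. Since \(s\mapsto 1/(s(1+s))\) is decreasing, on \([x,x']\) we have \(1/(s(1+s))\geq 1/(x'(1+x'))\), so it is enough to exhibit \(\rho=\rho(x,\delta)>0\) such that
\[
\sum_{e\in\Line_{[0,n]}}\p_{s}\bigl(e\in\textnormal{Piv}_{M_{\delta}}\bigm|M_{\delta}\bigr)\;\geq\;\rho\,\delta\,n,
\]
uniformly in \(s\in[x,x']\) and \(n\) large. Plugging this bound into Lemma~\ref{lem:Russo} produces the claim; the $\delta$-dependence of \(\rho\) is harmless, since \(\delta\) is fixed throughout the section.

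To produce the pivotal lower bound, on \(M_\delta\) I would select a canonical open path \(\gamma^*\in\Gamma_\delta\), for instance the lexicographically smallest self-avoiding open path from \(0\) to \(n\eone\) with at least \(\delta n\) cone-points on \(\Line_{[0,n]}\). List its line cone-points as \(v_1,\dots,v_k\) with \(k\geq\delta n\), ordered by increasing \(\eone\)-coordinate. Committing to \(\delta>1/2\) (harmless since \(\delta\in(0,1)\) is arbitrary and will be fixed later), a pigeonhole argument on the \(n+1\) sites of \(\Line_{[0,n]}\) yields at least \((2\delta-1)n-1\) indices \(i\) with \(v_{i+1}=v_i+\eone\). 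For each such \(i\), the cone-point property forces the portion of \(\gamma^*\) between \(v_i\) and \(v_{i+1}\) to lie inside \(\diam(v_i,v_{i+1})\), which reduces to the single line edge \(e_i=\{v_i,v_{i+1}\}\); in particular \(e_i\) is open and lies on \(\gamma^*\).

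The remaining step is to upgrade pivotality for \(\gamma^*\) into pivotality for the full event \(M_\delta\). For each such \(v_i\) I would perform a finite-energy surgery inside a bounded \(\ell^\infty\)-ball \(\Delta_R(v_i)\) of a constant radius \(R\): close every edge of \(\Delta_R(v_i)\) except for \(e_i\) itself and the two edges through which \(\gamma^*\) enters and leaves \(\Delta_R(v_i)\) (well defined because \(v_i\) is a cone-point of \(\gamma^*\)). The uniform finite-energy bound via \(\nfe\) makes this surgery occur with conditional probability bounded below by a constant depending only on \(R\), uniform in \(s\in[x,x']\) and in the configuration outside \(\Delta_R(v_i)\). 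On the surgered event, any open path from \(0\) to \(n\eone\) that crosses \(\Delta_R(v_i)\) is forced to use \(e_i\), so \(e_i\) becomes pivotal for \(M_\delta\). Extracting a sub-family of those cone-points whose boxes \(\Delta_R(v_i)\) are pairwise disjoint (taking one in every \(2R+1\) suffices) and summing yields the desired linear lower bound.

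The hard part will be precisely this last step: because \(M_\delta\) is existential over the very large family \(\Gamma_\delta\), one must check that the local surgery does not merely destroy \(\gamma^*\) but forecloses \emph{every} alternative open path realizing \(M_\delta\). The cone-point property of \(\gamma^*\) at \(v_i\) is exactly what makes the bottleneck at \(e_i\) rigid at constant scale, so that a box of bounded radius is enough; combined with the uniform finite-energy estimate via \(\nfe\) this delivers a constant \(\rho\) that depends on \(x\) and \(\delta\) only, and neither on \(n\) nor on \(s\in[x,x']\), as required.
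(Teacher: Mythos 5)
Your overall strategy (Russo-type formula from Lemma~\ref{lem:Russo}, then a uniform linear lower bound on pivotal line edges produced by finite-energy surgery at cone-points) is the paper's strategy, but two of your steps have genuine gaps. The first is the reduction to \emph{adjacent} cone-points via pigeonhole, which forces \(\delta>1/2\) and yields at best \((2\delta-1)n\) candidate edges. This is not harmless: the lemma must hold with \(\rho=\rho(x)\) independent of \(\delta\) and with the bound proportional to \(\delta n\) for \emph{arbitrary} \(\delta\in(0,1)\), because at the end of Section~\ref{sec:LB} one takes \(\delta\asymp x'-x\) (in \(d=2\)) or \(\delta=\exp(-C'_x/(x'-x))\) (in \(d=3\)); with \(\delta\) bounded below by \(1/2\) the energy gain \(\frac{(x'-x)}{x'(1+x')}\rho\delta n\) can no longer beat the entropy cost of Lemma~\ref{lem:Ent} as \(x'\downarrow x\), so your version of the lemma cannot feed the proof of Theorem~\ref{thm:PinnedRegime}~\ref{item:Critical2d}--\ref{item:Critical3d} or of \(\Jc=1\). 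Moreover, for small \(\delta\) there need not exist any pair \(v_{i+1}=v_i+\eone\) at all, so your mechanism for identifying pivotal edges simply disappears in the relevant regime. The paper instead extracts one pivotal candidate per \emph{single} cone-point \(v\) of \(\gamma\) on \(\Line\): with aperture at most \(\pi/2\), the only edges of \(\gamma\) incident to \(v\) are the two line edges at \(v\), so every cone-point already supplies an edge of \(\mathcal{C}(\gamma)\), giving order \(\delta n\) candidates for all \(\delta\).

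The second gap is the surgery/pivotality step. As described, closing every edge of \(\Delta_R(v_i)\) except \(e_i\) and the two entry/exit edges of \(\gamma^*\) will in general close part of \(\gamma^*\) inside the ball, so after the surgery \(0\) may no longer be connected to \(n\eone\) and \(e_i\) need not be pivotal; the paper avoids this by conditioning on \(\{\gamma\text{ open}\}\) and closing only \(\tilde\Delta_K(v)=(v+\Delta_K)\setminus\cone_\psi(v)\), which is disjoint from \(\gamma\) precisely by the cone-point property. More importantly, your claim that ``a box of bounded radius is enough'' because of the cone-point property of \(\gamma^*\) is incorrect: that property constrains only \(\gamma^*\), not the rest of the configuration, so other open edges can connect the two halves of \(\gamma^*\) arbitrarily far from \(v_i\) and bypass \(e_i\). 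One must additionally exclude an open connection \(v^\blacktriangleleft_{\psi}\xleftrightarrow{\cone_{\psi}(v)^\comp}v^\blacktriangleright_{\psi}\) around the cone at \emph{all} scales; in the paper this costs a sum over distances \(r,\ell\geq K\) controlled by Lemma~\ref{lem:expoDecFKwired}, made \(\leq 1/2\) by taking \(K\) large, and is not a constant-scale fact. Finally, your conditioning ``uniform in the configuration outside \(\Delta_R(v_i)\)'' is not legitimate as stated, since the identity of \(\gamma^*\) (hence of \(v_i\) and \(e_i\)) depends on the edges inside the ball; the paper resolves this with the decomposition over the events \(B_\gamma\) (first path in \(\Gamma_\delta\) realizing \(M_\delta\)) and an FKG argument under \(\p_s(\cdot\given\gamma\text{ open})\), exploiting that \(B_\gamma\) and \(\{e\in\textnormal{Piv}_{0\leftrightarrow n\eone}\}\) are decreasing on such configurations. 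Without these ingredients the key uniform bound \(\p_s(e\in\textnormal{Piv}\given\cdot)\geq\rho\) is not established.
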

\begin{proof}
	Notice that $M_{\delta}$ is increasing. We can thus use Lemma~\ref{lem:Russo}:
	\begin{align*}
		\frac{\p_{x'}(M_{\delta}) }{\p(M_{\delta})}
		&\geq
		\exp\Bigl(\int_{x}^{x'} \frac{1}{s(1+s)} \sum_{e\in\Line_{[0,n]}} \p_s(e\in\textnormal{Piv}_{M_{\delta}} \given M_{\delta}) \,\dd s \Bigr)\\
		&\geq
		\exp\Bigl(\frac{1}{x'(1+x')} \int_{x}^{x'} \sum_{e\in\Line_{[0,n]}} \p_s(e\in\textnormal{Piv}_{0\leftrightarrow n\eone} \given M_{\delta})\, \dd s \Bigr),
	\end{align*}
	the last inequality following from the fact that $\textnormal{Piv}_{0\leftrightarrow n\eone}\subset\textnormal{Piv}_{M_{\delta}}$ on \(M_\delta\).
	The claim will thus follow if we can prove that 
	\[
	\sum_{e\in\Line_{[0,n]}} \p_s(e\in\textnormal{Piv}_{0\leftrightarrow n\eone} \given M_{\delta})
	\geq
	\rho\delta n
	\] 
	for some $\rho>0$, uniformly in $s\in[x,x']$.
	Fix an arbitrary total order on $\Gamma_{\delta}$.
	For $\gamma\in\Gamma_{\delta}$, define $B_{\gamma}=\{\gamma \textnormal{ is the first path in \(\Gamma_\delta\) realizing }M_{\delta}\}$. Then,
	\begin{align*}
		\sum_{e\in\Line_{[0,n]}} \p_s(e\in\textnormal{Piv}_{0\leftrightarrow n\eone} \given M_{\delta})
		&=
		\sum_{e\in\Line_{[0,n]}} \sum_{\gamma\in\Gamma_{\delta}} \p_s(B_{\gamma} \given M_{\delta})\, \p_s(e\in\textnormal{Piv}_{0\leftrightarrow n\eone} \given B_{\gamma})\\
		&\geq
		\sum_{\gamma\in\Gamma_{\delta}} \sum_{e\in \mathcal{C}(\gamma)} \p_s(B_{\gamma} \given M_{\delta}) \frac{\p_s(e\in\textnormal{Piv}_{0\leftrightarrow n\eone},B_{\gamma} \given \gamma \textnormal{ open})}{\p_s(B_{\gamma} \given \gamma \textnormal{ open})}\\
		&\geq
		\sum_{\gamma\in\Gamma_{\delta}} \p_s(B_{\gamma} \given M_{\delta}) \sum_{e\in \mathcal{C}(\gamma)} \p_s(e\in\textnormal{Piv}_{0\leftrightarrow n\eone} \given \gamma \textnormal{ open}) ,
	\end{align*}
	where $\mathcal{C}(\gamma)$ is the set of edges in $\gamma\cap\Line_{[0,n]}$ having a cone-point of $\gamma$ as an endpoint. The last inequality uses FKG: on the one hand, the measure $\p_s(\cdot \given \gamma \textnormal{ open})$ is a random-cluster measure on the complement of \(\gamma\) with wired boundary condition on \(\gamma\), and is thus positively associated; on the other hand, $B_{\gamma}$ and $\{e\in\textnormal{Piv}_{0\leftrightarrow n\eone}\}$, for $e\in\gamma$, are positively correlated as they are decreasing events on configurations in which the edges of $\gamma$ are open.
	
	We are thus left with showing that
	$\p_s(e\in\textnormal{Piv}_{0\leftrightarrow n\eone} \given \gamma \textnormal{ open}) \geq \rho$
	uniformly over $\gamma\in\Gamma_{\delta}$, $e\in\mathcal{C}(\gamma)$ and \(s\in[x,x']\).
	
	Fix $K\geq K_0$ large enough.
	Consider the cone $\fcone_{\psi}$, as introduced in Section~\ref{sec:RWrep}.
	Write $\cone_{\psi}(u)=u+(\bcone_{\psi}\cup\fcone_{\psi})$.

	Let $\Delta_K=[-K,K]^d$ (with a slight abuse of notation, $\Delta_K$ will denote the sites and/or the edges of $\Delta_K\cap \Zd$) and $u_{\psi}^{*}=\partial(u+\cone^{*}_{\psi})$ for $*\in\{\blacktriangleleft,\blacktriangleright\}$ and $u\in\Zd$. Write $v$ for the end-point of $e$ which is a cone-point for $\gamma$ (the right one if both are). Then, writing \(\tilde{\Delta}_K(v) = (v+\Delta_K) \setminus \cone_{\psi}(v)\),
	\begin{multline}
		\label{eq:ePiv}
		\p_s(e\in\textnormal{Piv}_{0\leftrightarrow n\eone} \given \gamma \textnormal{ open})
		\geq
		\p_s(e\in\textnormal{Piv}_{0\leftrightarrow n\eone}, \tilde{\Delta}_K(v) \textnormal{ closed} \given \gamma \textnormal{ open})\\
		\geq
		\rho_K\bigl( 1-\p_s(v^\blacktriangleleft_{\psi}\xleftrightarrow{\cone_{\psi}(v)^\comp} v^\blacktriangleright_{\psi} \given \gamma \textnormal{ open}, \tilde{\Delta}_K(v) \textnormal{ closed}) \bigr)
	\end{multline}
	by finite energy ($\rho_K$ depends polynomially on $K$).
	
	\begin{figure}[h]
	 	\centering
	 	\resizebox{4.5cm}{!}{\input{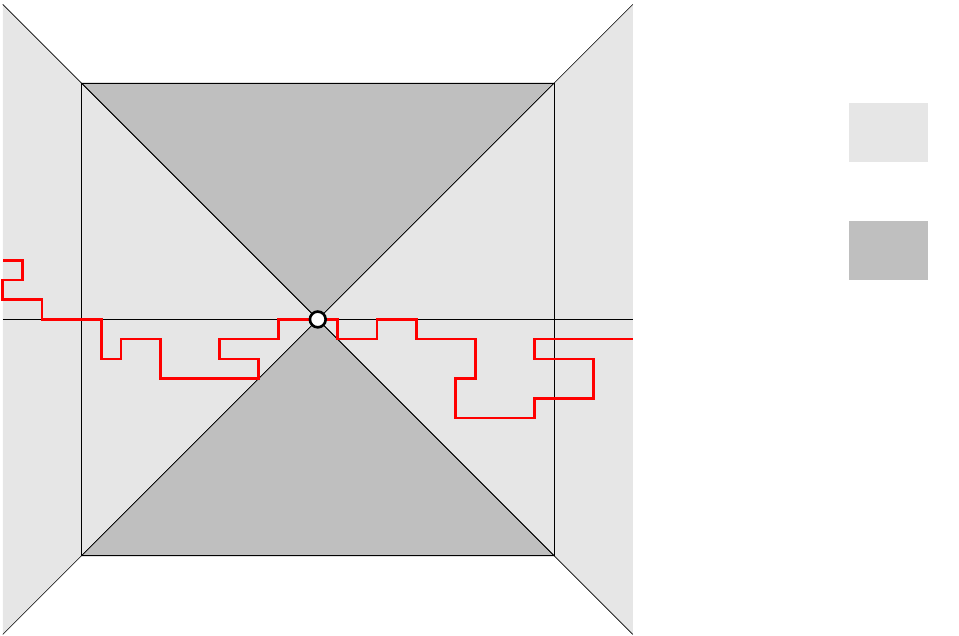_t}}
	 	\hspace{2cm}
	 	\resizebox{2.95cm}{!}{\input{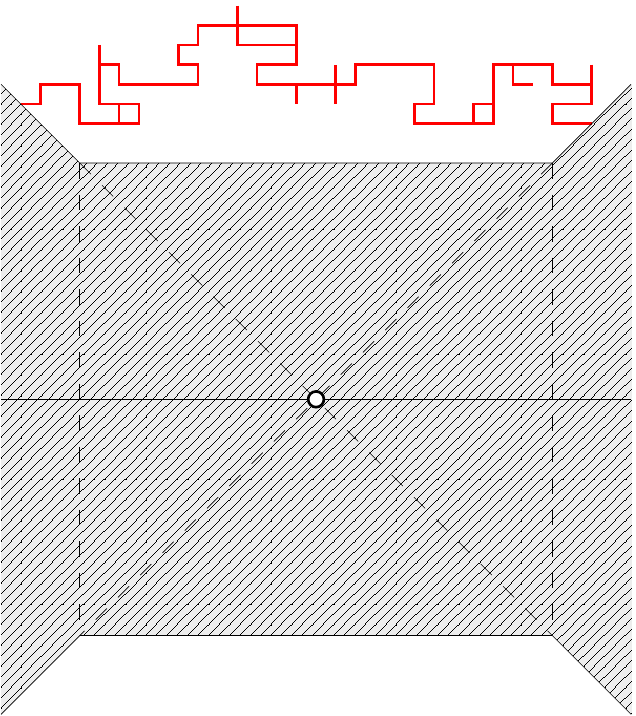_t}}
	 	\caption{Left: The set $\cone_{\psi}(v)$. Right: The type of connections we want to prevent.}
	\end{figure}
	
	Now,
	\begin{align*}
		\p_s(v^\blacktriangleright_{\psi}&\xleftrightarrow{\cone_{\psi}(v)^\comp} v^\blacktriangleleft_{\psi} \given \gamma \textnormal{ open}, \tilde{\Delta}_K(v) \textnormal{ closed})\\
		&\leq
		\sum_{r,\ell=K}^{\infty}
		\sum_{\substack{u_L\in v_{\psi}^\blacktriangleright\\(u_L)_1=v_1-\ell}}
		\sum_{\substack{u_R\in v_{\psi}^\blacktriangleleft\\(u_R)_1=v_1+r}} \p_{\mathcal{Y}_{\psi}(v)^\comp}^{w}(u_L\xleftrightarrow{\cone_{\psi}(v)^\comp} u_R \given \tilde\Delta_K(v)\textnormal{ closed})\\
		&\leq
		\sum_{r,\ell=K}^{\infty} e^{-c_1(r+\ell)} \bigl\{2r\tan\bigl((\psi)/2)\bigr)^{d-2} (2\ell\tan\bigl((\psi)/2\bigr)\bigr\}^{d-2}\\
		&\leq
		e^{-2 c_1 K} \sum_{k=0}^{\infty} e^{-c_1 k} P_{K,d,\psi}(k)
		\leq
		C_K e^{-c_2K} ,
	\end{align*}
	where $P_{K,d,\psi}(k)$ is a polynomial in $k$ of degree at most $2d-3$ and
	$C_K$ is a constant depending polynomially on $K$; Lemma~\ref{lem:expoDecFKwired} was used to derive
	the second inequality. We take $K$ large enough for the right-hand side to be at most $1/2$. Plugging this into \eqref{eq:ePiv}, we get the desired result with $\rho=\frac{1}{2}\rho_K$ and thus the initial claim.
\end{proof}

Let us now consider the ``entropy'' term $\p(M_{\delta} \given 0\leftrightarrow n\eone)$.
\begin{lemma}
\label{lem:Ent}
	There exist $c_1,c_2>0$, depending on $x$, such that, for small enough $\delta>0$,
	\[
	\p(M_{\delta} \given 0\leftrightarrow n\eone)
	\geq
	\begin{cases}
		e^{-c_1 \delta^2 n} 				& \text{when } d=2,\\
		e^{-c_2 (\delta/\log(\delta)) n} 	& \text{when } d=3.
	\end{cases}
	\]
\end{lemma}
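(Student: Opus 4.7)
My approach is to pass to the random-walk representation of Section~\ref{sec:RWrep} and then apply a local-time lower bound for the transverse $(d{-}1)$-dimensional walk. By~\eqref{eq:C0marginal} and Proposition~\ref{pro:PropertiesOK}, conditioning $C_0$ on $\{0\leftrightarrow n\eone\}$ corresponds, up to an $O(1)$ factor and exponentially small errors, to pinning the walk $(S_\ell)_\ell$ of Section~\ref{sec:RWrep} at $n\eone$, i.e.\ to the bridge law $\mathbf{P}_0(\,\cdot\bgiven\calR(n\eone))$. The cone-points of $C_0$ are exactly the walk positions $S_0,\ldots,S_{M+1}$, and a spanning self-avoiding path $\gamma\subset C_0$ can be routed through all of them, so every $\ell$ with $S_\ell^\perp=0$ yields a cone-point of $\gamma$ lying on $\Line_{[0,n]}$. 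Set $\mathcal{N}:=\#\{0\leq\ell\leq M+1:\,S_\ell^\perp=0\}$. By~\eqref{eq:RatioRuv}--\eqref{eq:LLT} the bridge conditioning can be traded for the free walk on the first $(1-\varepsilon)n$ steps at the cost of an $O(1)$ factor; the lemma thus reduces to showing
\begin{equation*}
\mathbf{P}_0(\mathcal{N}\geq\delta n)\geq
\begin{cases}
e^{-c_1\delta^2 n} & \text{if }d=2,\\
e^{-c_2\delta n/|\log\delta|} & \text{if }d=3.
\end{cases}
\end{equation*}

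For $d=2$, $S^\perp\in\Z$ is a centred one-dimensional walk with exponentially integrable, finite-variance increments by~\eqref{eq:expTailsrhop} and the reflection symmetry of the model. I would confine $S^\perp$ to the strip $[-K/\delta,K/\delta]$ throughout the first $\kappa n$ steps (with $\kappa=1/\ebf[X_1^\parallel]$): a spectral-gap estimate for the walk restricted to this strip (equivalently, the principal Dirichlet eigenvalue on a discrete interval of length $2K/\delta$) gives a confinement probability at least $e^{-C\delta^2 n}$, and a first/second moment computation using the stationary mass $\mathbf{P}(S_\ell^\perp=0)\asymp\delta$ in the confined walk implies $\mathcal{N}\geq c\delta\kappa n\geq\delta n$ with conditional probability at least $1/2$. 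Together these yield $e^{-c_1\delta^2 n}$.

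For $d=3$, $S^\perp\in\Z^2$ is recurrent with logarithmic Green's function $G(z)\sim c\log(1/(1-z))$ as $z\uparrow 1$ (a consequence, via Fourier inversion, of the aperiodicity and irreducibility in Proposition~\ref{pro:PropertiesOK} and the finite second moment of $\hat\statMes$ from~\eqref{eq:expTailsrhop}). The first-return time $T$ to $0$ then satisfies $\ebf[e^{-\mu T}]=1-1/G(e^{-\mu})\sim 1-c/\log(1/\mu)$. Writing $T_1,T_2,\dots$ for i.i.d.\ copies, the inclusion $\{\mathcal{N}\geq\delta n\}\supseteq\{T_1+\cdots+T_{\delta n}\leq n\}$ reduces the task to a small-deviation estimate for a sum of heavy-tailed variables, which I would obtain by Cram\'er-type exponential tilting at the rate $\mu=\mu(\delta)$ optimising $\mu n-\delta n\log\ebf[e^{-\mu T}]^{-1}$: the optimum sits at $\mu\asymp\delta/\log^2(1/\delta)$ and has value $\asymp\delta n/|\log\delta|$, yielding the claimed $e^{-c_2\delta n/|\log\delta|}$ lower bound once one checks that under the tilt the event $\{\sum T_i\leq n\}$ retains positive probability.

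The main obstacle is this last verification: the tilted return-time distribution inherits the slowly-varying $1/\log(1/\mu)$ prefactor, so ensuring that the tilted $T_i$'s have mean of order $1/\delta$ with enough concentration for $\{\sum T_i\leq n\}$ to be typical requires careful control of the singularity of $G$ near $z=1$. This is classical Fourier analysis, but must be carried out for the general centred, finite-variance walk $S^\perp$ rather than simple random walk; Proposition~\ref{pro:PropertiesOK} together with~\eqref{eq:expTailsrhop} provides the required hypotheses. The remaining ingredients, namely the bridge-to-free-walk reduction and the non-simple nature of $\hat\statMes$, introduce no conceptual difficulty.
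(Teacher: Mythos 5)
Your reduction is essentially the paper's: pass to the $\hat{\mathbb{Q}}$/random-walk representation of Section~\ref{sec:RWrep}, observe that each visit of the walk to $\Line$ produces a cone-point lying on $\Line$ (hence a cone-point of any open path realizing $M_\delta$), and trade the bridge conditioning $\mathcal{R}(n\eone)$ for the free walk at an $O(1)$ cost via \eqref{eq:RatioRuv}--\eqref{eq:LLT}; the paper does exactly this, being a bit more careful than you are about the step-count versus parallel-displacement bookkeeping (it stops at the hitting time $\tau_{\delta n}$, works with $\bar n=n/(4\ebf[X_1^\parallel])$ steps and kills the overshoot $\{\sum_{k\le\bar n}X_k^\parallel>n/2\}$ by a standard large-deviation bound, and uses \eqref{eq:rhoLrhoRpositive} to reduce to trivial boundary pieces). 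The only genuine divergence is the final ingredient: where you propose to prove the local-time lower bounds for the transverse walk from scratch (strip confinement and a Dirichlet-eigenvalue estimate in $d=2$; Green's-function asymptotics, return-time tails and a Cram\'er tilt in $d=3$), the paper simply invokes Corollary~B.3 of \cite{Friedli+Ioffe+Velenik-2013}, which is precisely the statement $\mathbf{P}_0(L^\perp(\bar n)\ge\delta_*\bar n)\ge e^{-c\delta_*^2 n}$ ($d=2$), resp.\ $e^{-c(\delta_*/|\log\delta_*|)n}$ ($d=3$), for a centred walk with i.i.d.\ increments in $\Z^{d-1}$. Your sketches of these estimates are the standard proofs and would work (in $d=3$ a cruder argument even suffices: force each of the $\delta n$ return times to be at most $1/\delta$, at cost $(1-c/|\log\delta|)^{\delta n}$, which sidesteps the tilted-measure verification you worry about; note that the analogous truncation in $d=2$ only yields $e^{-c\delta^{3/2}n}$, so there your confinement argument, or the citation, is really needed), so the trade-off is self-containedness versus re-deriving a known random-walk input. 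Minor imprecisions to fix if you write this up: the walk positions are a subset of the cone-points of $C_0$, not all of them (harmless for the lower bound), and the constants in "$c\delta\kappa n\ge\delta n$" need the usual rescaling of $\delta$ (as in the paper's $\delta_*=4\delta\,\ebf[X_1^\parallel]$), which only affects $c_1,c_2$.
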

\begin{proof}
	Proceeding as in the previous section, we work with the measure \(\hat{\mathbb{Q}}\) and the random-walk $S=(S^{\perp},S^{\parallel})$ associated to $C_0$ (with increments $X_i=(X_i^{\perp},X_i^{\parallel})$).
	Notice that, every time $S$ steps on $\Line$, the corresponding point is a cone-point for \emph{any} open path in \(\Gamma_{\delta}\).
	Let $\mathcal{C}_{\delta}$ be the event $\{S$ hits $\Line$ at least $\delta n$ times$\}$.
	Define the sequence of hitting times of $\Line$: $\tau_0=0$ and $\tau_{k}=\inf\setof{m>\tau_{k-1}}{S_m\in\Line}$ for $k\geq 1$.
	Using~\eqref{eq:rhoLrhoRpositive}, we can restrict to the case where $\tilde\gamma^b$ and $\tilde\gamma^f$ are reduced to \(\{0\}\),
	respectively, and writing $\mathcal{R}_n=\mathcal{R}(n\eone)$, we get
	\[
		\p(\mathcal{M}_{\delta} \given 0\leftrightarrow n\eone)
		\geq
		c\, \mathbf{P}_{0}(\mathcal{C}_{\delta} \given \mathcal{R}_n) .
	\]
	From~\eqref{eq:LLT},
	we get that, for all sufficiently large $n$ and $k\leq n/2$, $\frac{\mathbf{P}_0(\mathcal{R}_{n-k})}{\mathbf{P}_0(\mathcal{R}_n)}\geq c$ for some $c>0$. Then, using the strong Markov property,
	\begin{align*}
		\mathbf{P}_{0}(\mathcal{C}_{\delta} \given \mathcal{R}_n)
		&=
		\mathbf{P}_{0}(S^{\parallel}_{\tau_{\delta n}}\leq n \given \mathcal{R}_n)\\
		&\geq
		\sum_{k=0}^{n/2} \mathbf{P}_{0}(S^{\parallel}_{\tau_{\delta n}}=k) \frac{\mathbf{P}_0(\mathcal{R}_{n-k})}{\mathbf{P}_0(\mathcal{R}_n)}\\
		&\geq
		c\,\mathbf{P}_{0}(S^{\parallel}_{\tau_{\delta n}}\leq n/2) .
	\end{align*}
	Now, denote by $\mathcal{N}_{n/2}^{\parallel}=\max\setof{k\leq n}{S_{k}^{\parallel}\leq n/2}$ the number of steps before exiting $[0,\frac{n}{2}\eone]\times\Z^{d-1}$ and by $L^{\perp}(\ell)=\#\setof{0\leq i\leq \ell}{S_i^{\perp}=0}$ the local time at $0$ of $S^{\perp}$ up to time $\ell$.
	Writing $\bar{n}=\frac{n}{4\mathbf{E}[X_1^{\parallel}]}$ and $\delta_*=\delta 4\mathbf{E}[X_1^{\parallel}]$,
	\begin{align*}
		\mathbf{P}_{0}(S^{\parallel}_{\tau_{\delta n}}\leq n/2)
		&\geq
		\mathbf{P}_{0}(S^{\parallel}_{\tau_{\delta n}}\leq n/2,\ \mathcal{N}_{n/2}^{\parallel}\geq \bar{n} )\\
		&\geq
		\mathbf{P}_{0}(L^{\perp}(\bar{n})\geq \delta n,\ \mathcal{N}_{n/2}^{\parallel}\geq \bar{n})\\
		&\geq
		\mathbf{P}_{0}(L^{\perp}(\bar{n})\geq \delta_* \bar{n}) -\mathbf{P}_{0}\Bigl(\sum_{k=1}^{\bar{n}}X_k^{\parallel}> n/2\Bigr)\\
		&\geq
		\mathbf{P}_{0}(L^{\perp}(\bar{n})\geq \delta_* \bar{n}) - e^{-cn} ,
	\end{align*}
	where we used an elementary large deviation estimate for a sum of independent random variables in the last line.
	Finally, the event $\{L^{\perp}(\bar{n})\geq \delta_* \bar{n}\}$ depends only on $S^{\perp}$ which is a random walk with i.i.d.\ increments in $\Z^{d-1}$, and thus (see Corollary B.3 in \cite{Friedli+Ioffe+Velenik-2013}):
	\[
	\mathbf{P}_{0}(L^{\perp}(\bar{n}) \geq \delta_* \bar{n})
	\geq
	\begin{cases}
		e^{-c\delta_*^2 n} 					& \text{when } d=2,\\
		e^{-c(\delta_*/|\log\delta_*|) n} 	& \text{when } d=3.
	\end{cases}
	\]
\end{proof}

Now, choosing $\delta$ to be
\[
	\delta
	=
	\begin{cases}
		C_{x}(x'-x)							& \text{when } d=2,\\
		\exp\bigl( -C'_{x}/(x'-x) \bigr)	& \text{when } d=3,
	\end{cases}
\]
with \(C_x,C'_x\) large enough (observe that \(x'(1+x')>x(1+x)/2\) when \(x'-x\) is sufficiently small),
in Lemmas~\ref{lem:Ene} and~\ref{lem:Ent}, we obtain the lower bounds stated in Items~\ref{item:Critical2d} and~\ref{item:Critical3d} of Theorem~\ref{thm:PinnedRegime}, which also implies the \(d\in\{2,3\}\) part of Theorem~\ref{thm:Jc_Potts}.


\section{Coarse-graining procedure and advanced properties}
\label{sec:PuExpDecAnalytStrDecr}

In this section, we prove the first and last items of Theorem~\ref{thm:PinnedRegime}, as well as Theorem~\ref{thm:PottsScalingInterface}. Namely, we show that, for $x'>\xpc$, the connectivity function along the $\eone$ axis has pure exponential decay and that $x'\mapsto\iclx$ is real analytic and strictly decreasing on $(\xpc,\infty)$.
This will be done with the help of a coarse-graining procedure similar to the one we used in Section~\ref{sec:BasicProp}.

\subsection{Coarse-graining}
We first describe our coarse-graining procedure. Fix a scale $K$ and a number $r$ (both to be chosen later, independent of $n$) and define
\begin{align*}
	\innerBox(v)
	&=
	\llbracket-K, K\rrbracket\times\llbracket-2K-3r\log(K),2K +3r\log(K) \rrbracket^{d-1}+v,\\
	\outerBox(v)
	&=
	\llbracket-\barK, \barK\rrbracket\times\llbracket-2K -4r\log(K),2K+4r\log(K) \rrbracket^{d-1}+v ,
\end{align*}
where $\barK=K+r\log(K)$.

\begin{figure}[h]
	\includegraphics[scale=0.45]{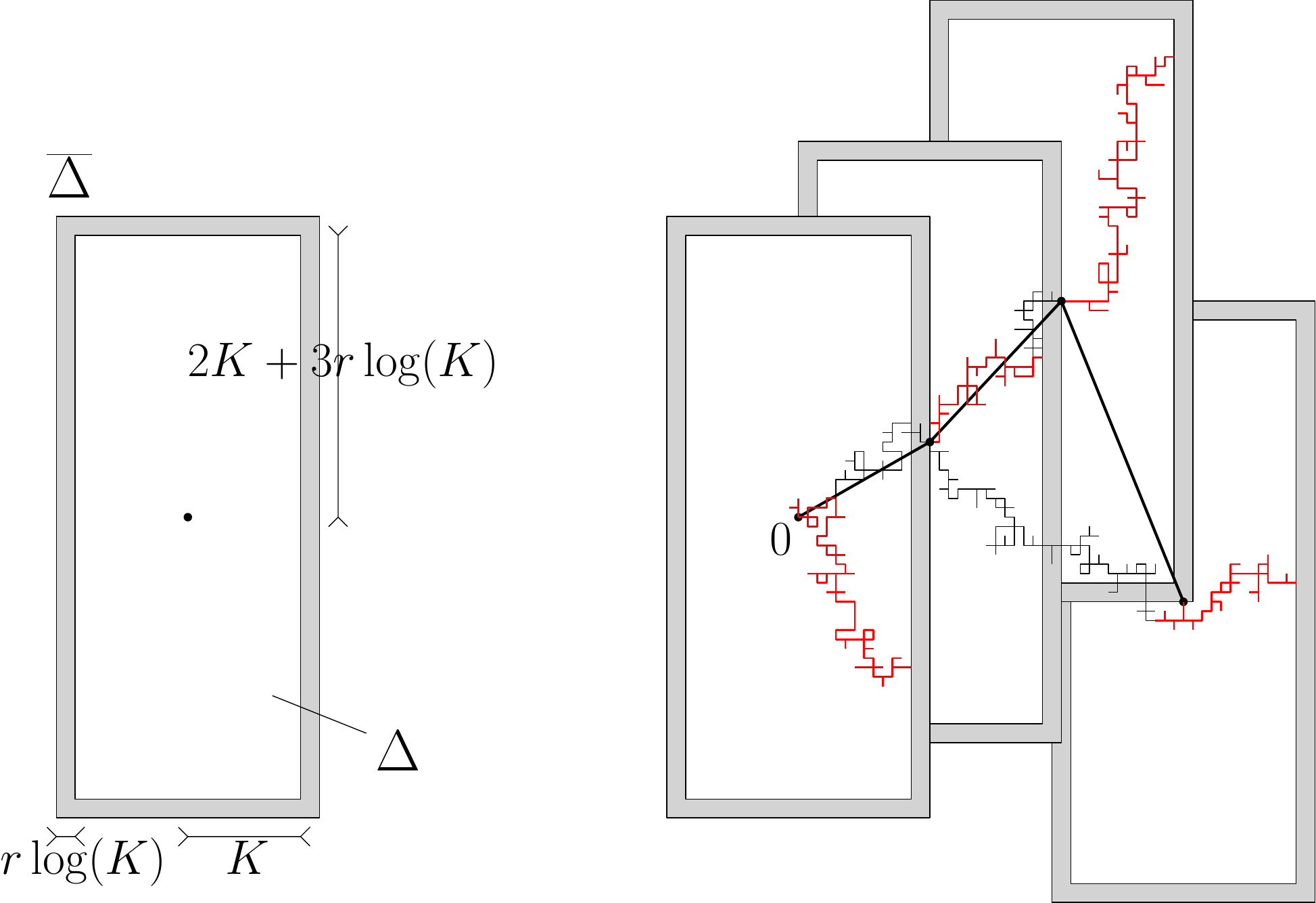}
	\caption{Elementary piece of the coarse-graining.}
	\label{fig:CG2}
\end{figure}

Given a set of vertices $A\subset\Zd$, define $\overline{A}=\bigcup_{u\in A}\outerBox(u)$.
Given \(B\subset\Zd\), we denote by \(C_0|_{B}\) the connected component of \(0\) in \(C_0\cap B\).
We coarse-grain $C_0$ using the following algorithm:\\

\begin{algorithm}[H]
	\label{alg:CoarseGrainingPED}
	$v_0 = 0$\;
	$V = \{v_0\}$ and $E=\emptyset$\;
	$i=0$\;
	$A=\setof{v\in\partial\overline{V}\cap C_0|_{\overline{V}}}{v
	\xleftrightarrow{\Delta(v)\setminus\overline{V}}
	\partial\Delta(v)}$\;
	\While{$A\neq\emptyset$}{
		$i=i+1$\;
		Set $v_{i}$ to be the smallest site of $A$ w.r.t. the lexicographical order\;
		Set $e_i=(v^*,v_i)$, where $v^*$ is the smallest vertex in $V$ among those closest to $v_i$ \;
		Update $V=V\cup\{v_i\}$ and $E=E\cup\{e_i\}$\;
		Update $A=\{v\in\partial \overline{V}\cap C_0|_{\overline{V}}:v
		\xleftrightarrow{\Delta(v)\setminus\overline{V}}
		\partial\Delta(v) \}$\;
	}
	$\tree_0=(V,E)$\;
	\caption{Coarse-graining procedure of $C_0$}
\end{algorithm}

\smallskip\noindent
Adapting the definitions introduced in the coarse-graining argument used in the proof of Item~\ref{it:monot} of Lemma~\ref{lem:BasicProp} to our new boxes, denote by $C_K$ the number of vertices in $\partial\outerBox$.
As already used there, the number of trees $\tree$ with $N$ vertices that can be obtained via Algorithm~\ref{alg:CoarseGrainingPED} is at most $e^{\C\log(C_K)N}$. We say that $v\in\Zd$ is \emph{$\Line$-free} if $\outerBox(v)\cap\Line=\emptyset$; otherwise, we call it an \emph{$\Line$-vertex}. The next lemma will give us control on the probability to see a specific tree $T$.
\begin{lemma}
	The probability of a given tree $T$ with $m$ $\Line$-free vertices and $N$ $\Line$-vertices satisfies
	\[
		\p_{x'}(\tree_0=T) \leq e^{-\xi K m} e^{-\iclx K N} (1+o_K(1))^{m+N}.
	\]
\end{lemma}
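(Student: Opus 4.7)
The plan is to explore the vertices of $T$ sequentially in the order produced by Algorithm~\ref{alg:CoarseGrainingPED}, extracting an exponentially small factor at each step via a spatial Markov / stochastic monotonicity argument. Setting $V_i = \{v_0,\dots,v_i\}$, the event $\{\tree_0 = T\}$ forces, for each $i\ge 1$, the local connection event
\[
E_i := \bigl\{ v_i \xleftrightarrow{\innerBox(v_i)\setminus\overline{V_{i-1}}} \partial \innerBox(v_i) \bigr\},
\]
which depends only on edges outside $\overline{V_{i-1}}$. Conditioning on $\mathcal{F}_{i-1}:=\sigma(\omega_e:e\subset\overline{V_{i-1}})$ and using stochastic domination by the wired boundary condition on $\innerBox(v_i)\setminus\overline{V_{i-1}}$ (valid since $q\ge 1$), I would first obtain
\[
\p_{x'}(\tree_0=T) \leq \prod_{i=1}^{m+N-1} \p_{x',\innerBox(v_i)}^{\wired}(v_i \leftrightarrow \partial\innerBox(v_i)).
\]

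Next, I would bound each factor by $(1+o_K(1)) e^{-\xi K}$ if $v_i$ is $\Line$-free and by $(1+o_K(1)) e^{-\iclx K}$ if it is an $\Line$-vertex. For the $\Line$-free case, $\outerBox(v_i)\cap\Line=\emptyset$, so the relevant measure reduces to the homogeneous $\p$; a union bound over the $O(K^{d-1})$ vertices of $\partial\innerBox(v_i)$, combined with the subcritical two-point exponential decay at rate $\xi$ in the $\eone$-direction (length $K$) and at rate $\xi$ in the transverse directions (length $\geq 2K$, by lattice symmetry), yields the estimate. For the $\Line$-vertex case, I would invoke the longitudinal bound~\eqref{eq:decay_xprime}, $\p_{x'}(u\leftrightarrow v)\leq e^{-\iclx|v_1-u_1|}$, along the $\eone$-direction, together with the transverse decay at rate $\xi\geq\iclx$ over a distance of at least $2K$. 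The dominant contribution comes from the $\eone$-direction, giving $(1+o_K(1)) e^{-\iclx K}$. The buffer of width $r\log K$ between $\innerBox$ and $\outerBox$ (together with an application of Lemma~\ref{lem:LfreeEst}) controls the cost of the wired boundary condition on $\partial\overline{V_{i-1}}\cap\innerBox(v_i)$ by a further $(1+o_K(1))$ factor. Taking the product over all tree vertices (the root $v_0$ contributing at most a constant, absorbed into the $(1+o_K(1))^{m+N}$ prefactor) delivers the claimed bound.

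The main obstacle will be the transfer of the infinite-volume bound~\eqref{eq:decay_xprime} to the finite-volume wired setting on $\innerBox(v_i)$ while preserving the sharp rate $\iclx$. Lemma~\ref{lem:UniformExpDecayLine} only provides a generic rate $\Cr{ED1}$ which may be strictly smaller than $\iclx$, so I would need to redo a portion of its coarse-graining proof, this time feeding it with the definition of $\iclx$ as $\lim_{n\to\infty}-\frac{1}{n}\log\p_{x'}(0\leftrightarrow n\eone)$, and exploiting the $r\log K$ buffer to decouple the local connection event from the revealed boundary configuration with only a multiplicative loss of $(1+o_K(1))$, in accordance with the $o_K(1)$ convention fixed earlier in the paper.
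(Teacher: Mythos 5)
Your overall plan — explore the tree vertex by vertex, extract a cost $e^{-\xi K}$ per $\Line$-free vertex and $e^{-\iclx K}$ per $\Line$-vertex, and use the $r\log K$ buffer together with Lemma~\ref{lem:LfreeEst} to decouple — is the same strategy as the paper's, but two of your steps do not work as written. The most serious one is the $\Line$-vertex estimate when the connection leaves $\innerBox(v_i)$ through a transverse face. You invoke ``transverse decay at rate $\xi\geq\iclx$ over a distance of at least $2K$'' for $\p_{x'}$, but no such bound is available: \eqref{eq:decay_xprime} decays only in the longitudinal coordinate (it is vacuous when $u_1=v_1$), and the only transverse control established for $\p_{x'}$ is the uniform decay of Lemma~\ref{lem:UniformExpDecayLine}/\eqref{eq:expDecUnif}, whose rate may be much smaller than $\iclx$, hence cannot produce the required $e^{-\iclx K}$. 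This is exactly why the boxes of Section~\ref{sec:PuExpDecAnalytStrDecr} are taken of transverse width $2K+O(\log K)$: since $\Line$ meets $\outerBox(v_i)$, any transverse exit forces an easy-direction crossing of a box $\llbracket-K,K\rrbracket\times\llbracket0,K\rrbracket^{d-1}$ lying entirely at distance at least $r\log K$ from $\Line$, where the measure is locally the homogeneous one and Lemma~\ref{lem:LfreeEst} gives a cost $e^{-\xi K}(1+o_K(1))\leq e^{-\iclx K}(1+o_K(1))$. Without this geometric observation (or an independent proof that the transverse rate for $\p_{x'}$ is at least $\iclx$), your bound for $\Line$-vertices fails.

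The second problem is your intermediate inequality $\p_{x'}(\tree_0=T)\leq\prod_i\p_{x',\innerBox(v_i)}^{\wired}(v_i\leftrightarrow\partial\innerBox(v_i))$. Domination by the maximal boundary condition gives a product of $\p^{\wired}_{x',R_i}(E_i)$ with $R_i=\innerBox(v_i)\setminus\overline{V_{i-1}}$ and the connection restricted to $R_i$; passing to the unrestricted event under the wired measure on all of $\innerBox(v_i)$ amounts to conditioning on the edges of $\innerBox(v_i)\setminus R_i$ being open, and FKG then runs in the wrong direction, so this step is not the monotonicity you claim. Worse, wiring directly on $\partial\innerBox(v_i)$ destroys the $r\log K$ separation that produces the $(1+o_K(1))$ factors, which makes your closing paragraph (where you plan to exploit that buffer) inconsistent with this reduction: with the wired condition on the inner box itself, the best you can do is shrink the box by $r\log K$ before decoupling, which costs an extra $K^{O(1)}$ per vertex — harmless for the way the lemma is used afterwards (an $e^{c\log K}$ per vertex is anyway absorbed with the tree entropy), but not the statement $(1+o_K(1))^{m+N}$ you are proving. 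The correct route, as in the paper, is to keep the restricted events: their supports avoid $\overline{V_{i-1}}$, hence lie at distance at least $r\log K$ from everything revealed earlier, and the Lemma~\ref{lem:LfreeEst} mechanism then compares each conditional probability directly with an infinite-volume probability (under $\p$ in the $\Line$-free case, and under $\p_{x'}$ via \eqref{eq:decay_xprime} and \eqref{eq:expDecUnif} in the $\Line$-vertex case).
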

\begin{proof}
	First, notice that whenever an $\Line$-free vertex is created, a connection as described in Lemma~\ref{lem:LfreeEst} is induced forcing a cost $e^{-\xi K}$ uniformly over the tree constructed so far (we fix \(r\) large enough to be able to apply Lemma~\ref{lem:LfreeEst}).
	When an $\Line$-vertex is created, two things can happen. The first possibility is that a crossing (in the easy direction) of a box $\llbracket -K,K \rrbracket\times\llbracket0,K \rrbracket^{d-1}$ at distance at least $r\log(K)$ from the line $\Line$ is induced (see Figure~\ref{fig:CGboxCross}), costing $e^{-\xi K}(1+o_K(1))\leq e^{-\iclx K}(1+o_K(1))$ by Lemma~\ref{lem:LfreeEst}. The second possibility is that the vertex is connected to a side of $\innerBox$ crossed by $\Line$. The procedure used in the proof of Lemma~\ref{lem:LfreeEst} together with~\eqref{eq:decay_xprime} and~\eqref{eq:expDecUnif}
	yield probability $e^{-\iclx K}(1+o_K(1))$ for such a crossing (uniformly over the tree constructed so far).
	\begin{figure}[h]
		\includegraphics[scale=0.45]{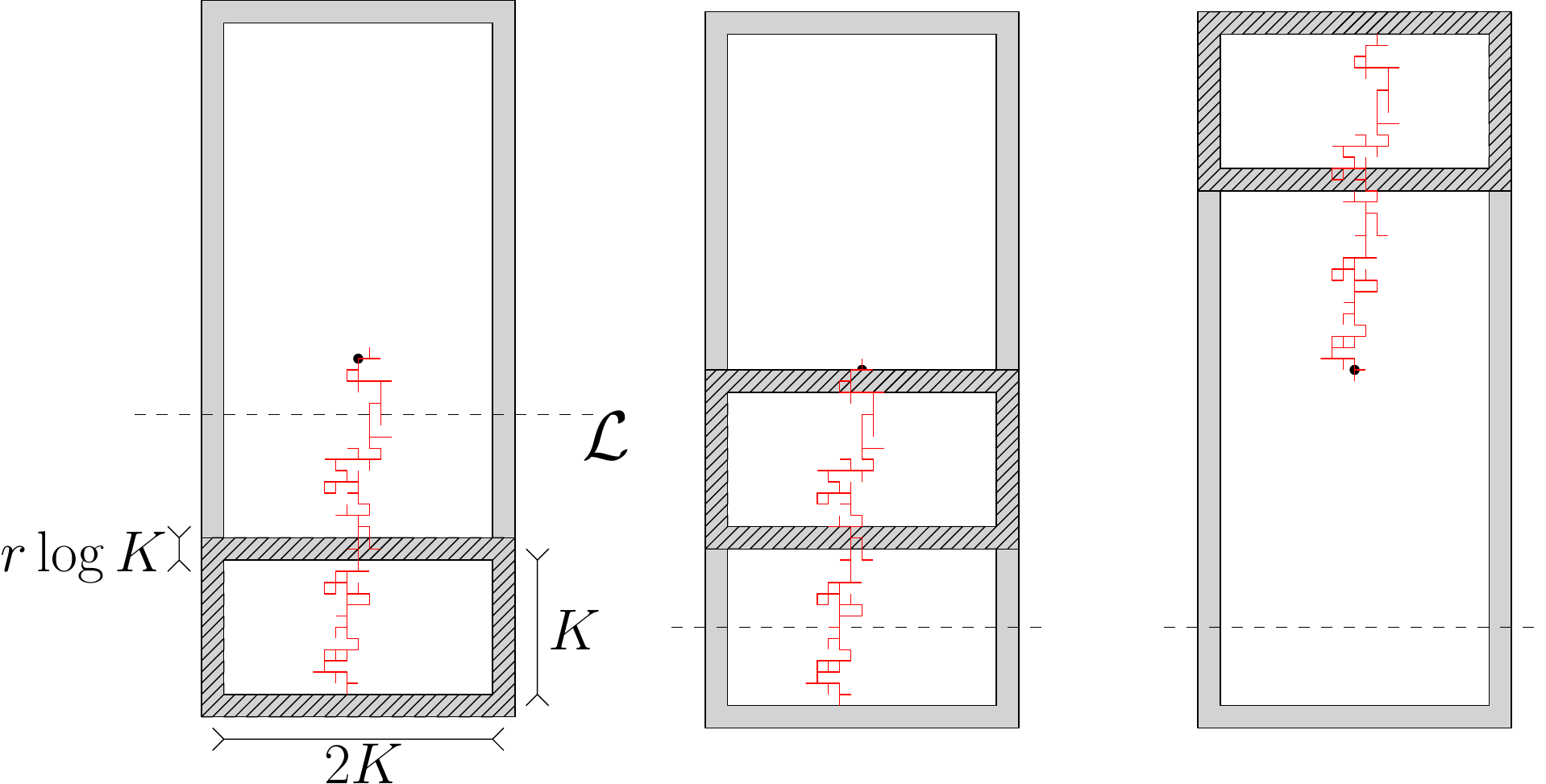}
		\caption{Connection to the boundary of $\innerBox$.}
		\label{fig:CGboxCross}
	\end{figure}
\end{proof}

Therefore, as $\tau=\xi-\iclx>0$ when $x'>\xpc$, we have (for $T$ containing $m$ $\Line$-free vertices and $N$ $\Line$-vertices):
\[
	\p_{x'}(\tree_0=T) \leq e^{-\tau K m} e^{-\iclx K (N+m)} (1+o_K(1))^{m+N} .
\]
Using this, we argue similarly as in the proof of Item~\ref{it:monot} of Lemma~\ref{lem:BasicProp}. Remark that, up to a term of order $e^{-2\iclx n(1+o(1))}$, we can restrict connections $0\leftrightarrow n\eone$ to those not connecting to $\Z_{<-n}\times\Z^{d-1}$ or to $\Z_{>2n}\times\Z^{d-1}$. Then, for any $\rho\in(0,1)$, we have (with $\Cl{BoxPerLpiece2}>0$ a constant depending on the dimension):
\begin{align*}
	\p_{x'}(\tree_0&\text{ contains at least }\rho n/\barK\ \Line\text{-free vert.}, 0\leftrightarrow n\eone)\\
	&\leq
	\sum_{m=\rho n/\barK}^{\infty} \sum_{N=1\vee n/\barK-m}^{\Cr{BoxPerLpiece2}n/\barK} \p_{x'}(\tree_0\text{ contains }m\ \Line\text{-free v. and } N\ \Line\text{-vert.})\\
	&\leq
	\sum_{m=\rho n/\barK}^{\infty} \sum_{N=1\vee n/\barK-m}^{\Cr{BoxPerLpiece2}n/\barK} e^{-\tau K m} e^{-\iclx K (N+m)} e^{c_d\log(K)(m+N)}\\
	&\leq
	e^{-\frac{K}{\barK} \iclx n} \frac{\C[csts]n}{\barK} e^{-(\tau\rho K/\barK-\C \log(K)/\barK)n}.
\end{align*}
Thus, for any $\rho\in(0,1)$, we can find $K_0\equiv K_0(\rho,\tau)$ such that, for $K\geq K_0$, there exists $\nu(\rho)>0$ depending on $\rho,x$ and $x'$ such that
\begin{equation}
\label{eq:BoundOnFreeVertices}
	\p_{x'}(\tree_0\text{ contains at least }\rho n/\barK\ \Line\text{-free vert.}, 0\leftrightarrow n\eone)
	\leq
	e^{-\frac{K}{\barK} \iclx n}e^{-\nu(\rho)n} .
\end{equation}

\subsection{Renewal on $\Line$}
\label{sec:renOnL}
We now use the coarse-graining of the previous section to show that, under \(\p_{x'}(\,\cdot\given 0 \leftrightarrow n\eone)\), $C_0$ possesses a number of cone-points on $\Line$ of order \(n\) when $x'>\xpc$ (as defined in Section~\ref{sec:RWrep}). For convenience, we look at cones $\cone$ (and diamonds) having angular aperture $\pi/2$.
\begin{theorem}
	\label{thm:CPdensity}
	When $x'>\xpc$, there exist $\rho_{\text{cp}}\equiv\rho_{\text{cp}}(x')\in(0,1)$ and $\Cl[expoThm]{cpDec}>0$ such that
	\[
	\p_{x'}\bigl(C_0\text{ contains at least }\rho_{\text{cp}}n\text{ cone-points on }\Line \bigm| 0\leftrightarrow n\eone\bigr)
	\geq
	1-e^{-\Cr{cpDec}n} .
	\]
\end{theorem}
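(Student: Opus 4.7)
The starting point is the coarse-graining estimate~\eqref{eq:BoundOnFreeVertices}: choosing $\rho>0$ small and $K$ large enough so that $\nu(\rho)$ beats the correction $\iclx(1-K/\barK)$ coming from the prefactor $e^{-(K/\barK)\iclx n}$ and from the lower bound $\p_{x'}(0\leftrightarrow n\eone)\geq c e^{-\iclx n}$, we obtain that, under $\p_{x'}(\,\cdot\given 0\leftrightarrow n\eone)$, with probability $1-e^{-c'n}$ the tree $\tree_0$ contains at most $\rho n/\barK$ $\Line$-free vertices. In particular, with the same probability $C_0$ is contained in the tube $T_n = \llbracket -n, 2n\rrbracket \times\llbracket -M,M\rrbracket^{d-1}$ with $M$ of order $\log K$, except for excursions away from $\Line$ covering a total parallel length at most $\rho n$. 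I will then show that a positive fraction of the parallel length along which the cluster remains in this thin tube contains cone-points on $\Line$.

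To produce a density of cone-points, partition $\llbracket 0,n\rrbracket$ into $n/L$ consecutive intervals $I_j=\llbracket jL,(j+1)L\rrbracket$ of length $L$ (a large but fixed multiple of $\barK$). Call $I_j$ \emph{good} if the coarse-grained tree has no $\Line$-free vertex whose $\eone$-coordinate lies in $I_j$; by the previous paragraph, the number of bad intervals is at most $\rho n/L$ with high probability. For each good interval, I will exhibit a local event $\calR_j$, depending only on the edges in a slab $S_j=I_j\times \llbracket -M',M'\rrbracket^{d-1}$ (with $M'$ slightly larger than $M$), such that: (i) on $\calR_j$ the midpoint of $I_j$, viewed as a point on $\Line$, is a cone-point of $C_0$; (ii) $\p_{x'}(\calR_j\given \mathcal F_{S_j^\comp})\geq p_0>0$ uniformly over the configuration outside $S_j$ compatible with the thin-tube event. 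The event $\calR_j$ is of ``bottleneck'' type: all edges inside $S_j$ are closed, except for the single edge $\{(jL+L/2-1,0_{d-1}),(jL+L/2,0_{d-1})\}\subset \Line$ which is forced to be open, together with a configuration on $\partial S_j\cap\Line$ ensuring consistency with the already-discovered cluster on the two sides. Finite energy (the explicit bound on $\pfe$ and $\nfe$) and FKG give the required uniform lower bound $p_0$.

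Since the events $\calR_j$ live in disjoint slabs, they are conditionally independent given $(\mathcal F_{S_j^\comp})_j$, and standard large-deviation estimates for sums of indicators dominated below by an i.i.d.\ Bernoulli$(p_0)$ sequence give at least $\frac{p_0}{2}(1-\rho)n/L$ successful intervals with probability $1-e^{-c''n}$. Choosing $\rho<1/2$ and $\rho_{\textnormal{cp}}=p_0(1-\rho)L^{-1}/2$ yields the claim, with $\Cr{cpDec}=\min(c',c'')/2$. The main technical obstacle is step~(ii): one must argue that conditioning on the configuration outside $S_j$ \emph{restricted to the good-tree event} does not destroy finite energy inside $S_j$. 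This is handled by the standard observation that the coarse-graining only exposes edges outside a neighbourhood of $S_j$ (because the mesoscopic boxes of $\tree_0$ neighbouring $S_j$ have been declared $\Line$-vertices, hence lie inside the thin tube), so the remaining randomness inside $S_j$ is still governed by an FK-measure with free/wired boundary and the uniform finite-energy lower bounds of Section~\ref{sec:Intro} apply.
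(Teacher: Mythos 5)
Your overall strategy is the same as the paper's (use the coarse-graining bound~\eqref{eq:BoundOnFreeVertices} to limit the number of $\Line$-free vertices, then create cone-points by a uniform finite-energy surgery in boxes along $\Line$), but two steps of your execution contain genuine gaps. First, the bottleneck event $\calR_j$ as you describe it (all edges of the slab $S_j$ closed except one middle edge of $\Line$, plus a configuration only on $\partial S_j\cap\Line$) is incompatible with the conditioning: it leaves no open crossing of $S_j$, so on $\calR_j$ the middle edge is an isolated doubleton not belonging to $C_0$ and $0\nleftrightarrow n\eone$. To keep the connection alive and make the midpoint a point of $C_0$, the surgery must open the whole line segment across the box \emph{and} open cone-shaped regions at the two ends so as to absorb the incoming connections, which in general enter the $O(K)$-neighbourhood of $\Line$ off the axis; this is exactly the paper's event $A_{v_L,v_R}$ (all edges of $[v_L,v_R]$ open, all edges in $(v_L-\cone)\cap B_i$ and $(v_R+\cone)\cap B_i$ open, the rest of the box closed). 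Second, and more fundamentally, being a cone-point is a \emph{global} property of $C_0$, and your claim (i) cannot follow from a local event plus your ``good interval'' condition: an excursion of $C_0$ rooted in a neighbouring (bad) interval can reach perpendicular distance larger than its horizontal distance to the midpoint of $I_j$, and any such point destroys the cone property at that midpoint even though no $\Line$-free vertex has its $\eone$-coordinate in $I_j$. The paper handles precisely this via the notion of shade, $\shade(v)=\Line_{[v^{\parallel}-\norm{v^{\perp}},v^{\parallel}+\norm{v^{\perp}}]}$, proving that the total shade is at most $O(K)\cdot\#\{\Line\text{-free vertices of }\tree_0\}$ and performing the surgery only in \emph{illuminated} boxes. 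Your scheme is repairable in the same spirit (a tall overhanging excursion consumes proportionally many $\Line$-free vertices, so the number of intervals it can shade is controlled), but this bookkeeping is exactly the missing ingredient, not an optional refinement.

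Two smaller inaccuracies: the thin tube containing the non-excursion part of $C_0$ has width of order $K$ (the boxes $\outerBox$ have side $\asymp K$), not $\log K$, so $M'$ must be taken of order $K$ and the finite-energy constant $p_0$ then depends on $K$ (which is fine, $K$ being fixed); and the number of bad intervals is bounded by (a constant times) $\rho n/\barK$, not $\rho n/L$ — harmless, since $\rho$ may be chosen small compared with $\barK/L$ in~\eqref{eq:BoundOnFreeVertices}, but as written the counting does not match. The conditional-independence/Bernoulli-domination step for the slab events is fine and is implicitly what the paper does.
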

\begin{proof}
Start by observing that $C_0$ is included in a $K$-neighborhood of $\tree_0$. Then, define the \emph{shade} $\shade(v)$ of a point $v$ by
\begin{align*}
	\shade(v)=\Line_{[-\norm{v^{\perp}}+v^{\parallel},\norm{v^{\perp}}+v^{\parallel}]}.
\end{align*}
This corresponds to the portion of $\Line$ that cannot contain cone-points of $C_0$ as soon as $v\in C_0$.
In the same fashion, define the shade of $v\in\tree_0$ to be the union over $u\in[\outerBox(v)]_K$ (the $K$ neighborhood of $\outerBox(v)$) of the shade of $u$.
Finally, define the shade of $\tree_0$ as the union of the shades of the $\Line$-free vertices of $\tree_0$.
We will show that, with high probability, this shade does not cover a substantial proportion of $\Line$; then we will use a finite-energy argument to show that a positive fraction of the unshaded points are cone-points of $C_0$.

A first observation is that there exists $\Cl{shadeCst}$ not depending on $K$, such that the size of the shade of $\tree_0$ is at most
\[
\Cr{shadeCst} K \#\{\Line\text{-free vertices of }\tree_0\}.
\]
This is proved by induction on the number of $\Line$-free vertices of $\tree_0$.
The first one is at distance at most $5K$ from the line and the inequality thus holds by definition of the shade.
Then, adding an $\Line$-free vertex either adds the same shade size to the total shade (the vertex is far from the existing ones) or it increases the shade size by at most \(\Cl{Cst:IncrShadeSize}K\), for some constant \(\Cr{Cst:IncrShadeSize}<10\), (see figure~\ref{fig:shade}).

\begin{figure}[h]
	\includegraphics[scale=0.7]{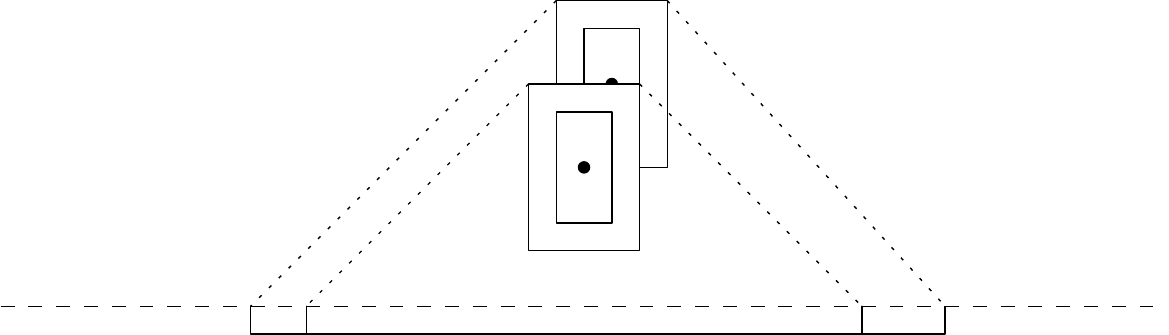}
	\caption{Evolution of $\tree_0$ shade.}
	\label{fig:shade}
\end{figure}

Now, we split $\llbracket 0,n \rrbracket\times \Z^{d-1}$ into slices. For \(i=1,2,\ldots,n/7K\), define
\[
S_{i}
=
\llbracket (i-1)7K,i7K \rrbracket\times \Z^{d-1}\quad\text{ and }\quad
B_i
=
S_{i}\cap[\Line]_{3K},
\]
where $[\Line]_{3K}$ is the $3K$-neighborhood of $\Line$. We will say that $B_i$ is \emph{illuminated} if $\Line_{[(i-1)7K+3K,i7K-3K]}$ is not included in the shade of $C_0\setminus[\Line]_{3K}$ (which is included in the shade of the $\Line$-free vertices of $\tree_0$). We have, using~\eqref{eq:BoundOnFreeVertices},
\begin{align*}
	\p_{x'}\bigl(\#\{\text{illuminated }B_i\}\geq \frac{n}{14K}\bigr)
	&\geq
	1 - \p_{x'}\bigl(|\shade(\tree_0)| > \frac{n}{2} \bigr)\\
	&\geq
	1 - \p_{x'}\bigl(\Cr{shadeCst}\#\{\Line\text{-free vertices of }\tree_0\} > \frac{n}{2} \bigr)\\
	&\geq
	1 - e^{-\frac{K}{\barK} \iclx n}e^{-\C[expoThm]n}.
\end{align*}
Thus, as $\p_{x'}(0\leftrightarrow n\eone)\geq e^{-\iclx n(1+o(1))}$,
\begin{equation}
	\label{eq:posDensLight}
	\p_{x'}\bigl(\#\{\text{illuminated }B_i\}\geq \frac{n}{14K} \bigm| 0\leftrightarrow n\eone \bigr)
	\geq
	1 - e^{-\Cl[expoThm]{nu-huit}n},
\end{equation}
for some \(\Cr{nu-huit}>0\).
Noticing that the number of $B_i$ is $\frac{n}{7K}$, this implies that at least half the boxes are illuminated with high probability.

Now, we describe a surgery procedure creating a cone-point on $\Line$ from an illuminated $B_i$ and bound its cost uniformly over the rest of the cluster of $0$. Given the restriction of a configuration $\omega|_{[\Line]_{3K}^\comp}$ outside $[\Line]_{3K}$ such that $B_i$ is illuminated, let $v_L$ and $v_R$ be the leftmost and rightmost vertices of $\Line_{[(i-1)7K+3K,i7K-3K]}$ not shaded by $C_0\cap[\Line]_{3K}^\comp$. Notice that, by definition of the shade, the whole segment $[v_L,v_R]$ is not in the shade of $C_0\cap[\Line]_{3K}^\comp$. Denote by $A_{v_L,v_R}$ the event that all edges of $[v_L,v_R]$ are open, all edges inside $(v_L-\cone)\cap B_i$ and $(v_R+\cone)\cap B_i$ are open and the remaining edges of $B_i$ are closed (see Figure~\ref{fig:locSurgery}).

\begin{figure}[h]
	\includegraphics[scale=0.7]{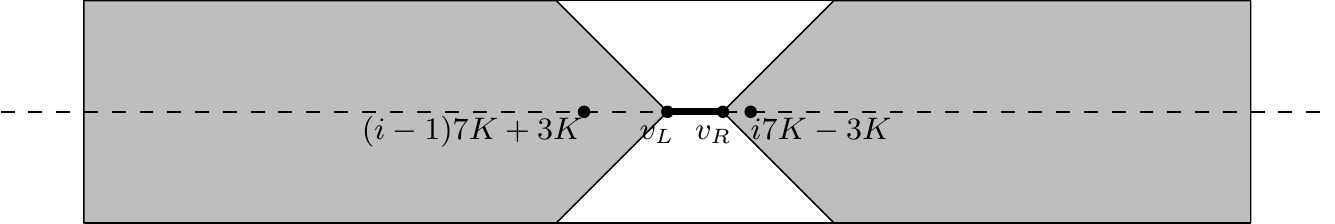}
	\caption{Local surgery procedure to create cone-points. The edges between $v_L$ and $v_R$ and those in the shaded regions are all open, those in the white regions are all closed.}
	\label{fig:locSurgery}
\end{figure}

By finite energy, $\min_{\omega|_{[\Line]_{3K}^\comp}}\p_{x'}\left(A_{v_L,v_R} \given 0\leftrightarrow n\eone\right)\geq \theta >0$ where the minimum is taken over configurations such that $B_i$ is illuminated. Thus, a positive density of illuminated $B_i$'s contain cone-points and, by~\eqref{eq:posDensLight}, a positive density of $B_i$'s are illuminated, thereby completing the proof of Theorem \ref{thm:CPdensity}.
\end{proof}

\subsection{Pure exponential decay when \(x'>\xpc\)}
\label{sec:pureExpDec}
We are now in position to prove the last item of Theorem~\ref{thm:PinnedRegime}. This will be done in the same fashion as in Section~\ref{sec:RWrep}, except that the ``random walk'' will here be pinned to the line, replacing the power-law correction present in~\eqref{eq:OZzzz} by a constant (which is related to the frequency of occurrence of cone-points on the line). We work here with cones of angular aperture $\pi/2$. As in Section~\ref{sec:RWrep}, let $w_1,\ldots,w_m$ be the cone-points of $C_0$ lying on $\Line$ (by Theorem~\ref{thm:CPdensity}, $m$ is typically of order $n$). Let 
\[
\zeta_i = C_0\cap\llbracket w_i\cdot\eone,w_{i+1}\cdot\eone\rrbracket\times\Z^{d-1}
\]
define the cone-confined irreducible components of $C_0$, and let $\zeta^{\bawa}$ and $\zeta^{\fowa}$ be the two components of $C_0\setminus(\zeta_1\cup\zeta_2\cup...\cup\zeta_{m-1})$ containing respectively $0$ (backward-irreducible) and $n\eone$ (forward-irreducible); they can possibly be reduced to a single vertex.
All definitions of Section~\ref{sec:RWrep} extend with almost no modification to the irreducible components $\zeta$. In particular, we can define percolation events $\Xi^{\bawa},\Xi_1,\ldots,\Xi_{m-1},\Xi^{\fowa}$ associated with $\zeta^{\bawa},\zeta_1,\ldots,\zeta_{m-1},\zeta^{\fowa}$ so that
\begin{multline*}
	\p_{x'}(C_0= \zeta^{\bawa}\sqcup\zeta_1\sqcup\cdots\sqcup\zeta_{m-1}\sqcup\zeta^{\fowa})\\
	=
	\p_{x'}(\Xi^{\bawa}) \p_{x'}(\Xi^{\fowa} \given \Xi^{\bawa},\ldots,\Xi_{m-1}) \prod_{i=1}^{m-1} \p_{x'}(\Xi_i \given \Xi^{\bawa},\ldots,\Xi_{i-1}).
\end{multline*}
Then, for $u,v\geq 1$, we can define
\[
	\rho'_{\bawa}(u)
	=
	e^{\iclx u} \sum_{\substack{\zeta^{\bawa}\ni 0\\D(\zeta^{\bawa})=u}} \p_{x'}(\Xi^{\bawa})
	\quad\text{ and }\quad
   \rho'_{\fowa}(v)
   =
   e^{\iclx v} \sum_{\substack{\zeta^{\fowa}\ni 0\\D(\zeta^{\fowa})=v}} \p_{x'}(\Xi^{\fowa}).
\]
By Theorem~\ref{thm:CPdensity}, they satisfy
\[
	\rho'_{\bawa}(u) \leq e^{-\Cr{cpDec}u}
	\quad\text{ and }\quad
	\rho'_{\fowa}(v)\leq e^{- \Cr{cpDec}v}.
\]
Again, all the properties listed in Proposition~\ref{pro:PropertiesOK} hold in the present setting (with essentially the same proof).
This allows us to proceed as in Section~\ref{sec:RWrep} in order to ``couple'' \(C_0\) with a random walk $S'$ on $\Z_{>0}$ with i.i.d.\ increments $X_i'$ in $\Z_{>0}$ having exponential tails. We denote its law and expectation by $\mathbf{Q}'$. The measures associated to the boundaries pieces will be denoted by $\hat\rho'_{\bawa},\hat\rho'_{\fowa}$; they have exponential tails.

\begin{figure}[h]
	\includegraphics[scale=0.7]{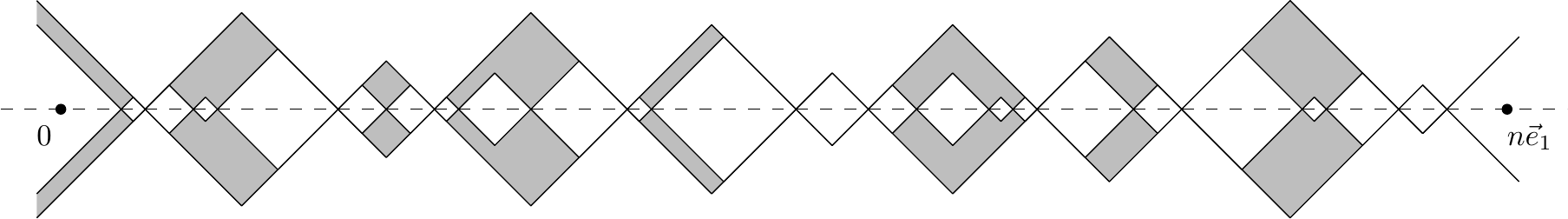}
	\caption{The original process based on the cone-points of \(C_0\) on the line $\Line$, together with the associated process of independent pieces.}
	\label{fig:CG2cp}
\end{figure}

The arguments leading to~\eqref{eq:OZzzz} yield in the present setting
\[
	e^{\iclx n} \p_{x'}(0\leftrightarrow n\eone)
	=
	\sum_{u,v} \hat\rho'_{\bawa}(u) \hat\rho'_{\fowa}(v) \, \mathbf{Q}'\Bigl(\exists t>0:\, \sum_{i=1}^{t} X_i'=n-v-u \Bigr) + e^{-cn}.
\]
We can clearly restrict the sum to $u,v<n/4$. The conclusion then follows from the Renewal Theorem and Theorem~\ref{thm:CPdensity}, since they imply that
\[
	\mathbf{Q}'\Bigl(\exists t>0:\, \sum_{i=1}^{t} X_i'=n-v-u \Bigr)
	\xrightarrow{n\to\infty}
	\frac{1}{\mathbf{Q}'(X_1')} > 0,
\]
uniformly in $u,v<n/4$.

\subsection{$\iclx$ is strictly decreasing when \(x'>\xpc\)}
As discussed in Remark~\ref{rem:finVol}, we can find a sequence $(a_n)_{n\geq 1}$ of large enough numbers such that $\iclx=\lim_{n\to\infty}\iclx^{(n)}$, where
\[
\iclx^{(n)} = -\frac{1}{n}\log \p_{x',\Lambda_{a_n}}^{f}(0\leftrightarrow n\eone).
\]
We can then bound $\frac{\dd}{\dd x'} \iclx^{(n)}$ using Lemma~\ref{lem:RussoGen}:
\[
	\frac{\dd}{\dd x'}\iclx^{(n)}
	\leq
	-\frac{1}{n} \frac{1}{x'(x'+1)} \sum_{e\in\Line} \p_{x',\Lambda_{a_n}}^{f}(e\in\text{Piv}_{0\leftrightarrow n\eone} \given 0\leftrightarrow n\eone).
\]
By Theorem~\ref{thm:CPdensity}, the number of cone-points can be assumed to grow linearly with $n$. As every cone-point induces at least one pivotal edge for $0\leftrightarrow n\eone$, we can find a positive constant such that $\frac{\dd}{\dd x'}\iclx^{(n)} \leq -\C$ uniformly in $n$. Thus, $x'\mapsto\iclx$ is strictly decreasing. Indeed, for $\xpc<x'_1<x'_2<\infty$,
\[
\xi_{x'_2}-\xi_{x'_1}
=
\lim_{n\to\infty} \xi_{x'_2}^{(n)} - \xi_{x'_1}^{(n)}
=
\lim_{n\to\infty} \int_{x'_1}^{x'_2} \frac{\dd}{\dd s} \xi_s^{(n)} \dd s
\leq
-\C(x'_2-x'_1) .
\]
Notice that the constant depends on $x'_2$.

\subsection{Analyticity of $x'\mapsto\iclx$ for \(x'>\xpc\)}

For any \(x'_0>\xpc\), we are going to prove analyticity of $\iclx$ for \(x'\) in a neighbourhood of \(x'_0\). Let us thus fix \(x'_0>\xpc\). We first make the following two assumptions, which will be proved at the end of the section:
\begin{claim}
	\label{claim:analyticityLineToInterval}
		$\iclx$ can be obtained as the limit of $-\frac{1}{n}\log\p_{x'}^{(n)}(0\leftrightarrow n\eone)$ (where $\p_{x'}^{(n)}$ denotes the measure with only $\Line_{[0,n]}$ weights modified).
\end{claim}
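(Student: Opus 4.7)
The plan is to establish both \(\liminf_n(-\tfrac{1}{n}\log\p_{x'}^{(n)}(0\leftrightarrow n\eone))\geq\iclx\) and the matching \(\limsup\leq\iclx\). The first is immediate: since the claim is only invoked in the regime \(x'>\xpc\geq x\), stochastic monotonicity yields \(\p_{x'}\succeq\p_{x'}^{(n)}\), and applying this to the increasing event \(\{0\leftrightarrow n\eone\}\) gives \(\p_{x'}^{(n)}(0\leftrightarrow n\eone)\leq \p_{x'}(0\leftrightarrow n\eone)\); the required bound then follows from \(-\tfrac{1}{n}\log\p_{x'}(0\leftrightarrow n\eone)\to\iclx\).

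For the matching lower bound, the strategy is to prove
\[
\p_{x'}^{(n)}(0\leftrightarrow n\eone)
\geq
\p_{x'}\bigl(0\leftrightarrow n\eone,\, C_0\cap(\Line\setminus\Line_{[0,n]})=\emptyset\bigr).
\]
To this end I would adapt the coupling of Appendix~\ref{app:Couplings} to the present situation, producing \(\Phi(\omega,\eta)\) with \(\omega\sim\p_{x'}\), \(\eta\sim\p_{x'}^{(n)}\), \(\omega\geq\eta\), and \(\omega=\eta\) outside the \(\omega\)-cluster \(C^{*}(\omega)\) of the set of vertices of \(\Line\setminus\Line_{[0,n]}\). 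The construction is essentially unchanged, since the two measures differ only on edges of \(\Line\setminus\Line_{[0,n]}\), on which \(\p_{x'}\) dominates \(\p_{x'}^{(n)}\). On the event \(\{C_0(\omega)\cap(\Line\setminus\Line_{[0,n]})=\emptyset\}\), the clusters \(C_0(\omega)\) and \(C^{*}(\omega)\) are disjoint, so every vertex of \(C_0(\omega)\) lies in the region where \(\omega=\eta\); hence every open edge of \(C_0(\omega)\) is also open under \(\eta\), giving \(0\overset{\eta}{\leftrightarrow}n\eone\). The displayed inequality follows.

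It then remains to show that
\[
\p_{x'}\bigl(C_0\cap(\Line\setminus\Line_{[0,n]})=\emptyset \bgiven 0\leftrightarrow n\eone\bigr)\geq c>0,
\]
uniformly in \(n\); combining this with the pure exponential decay \(\p_{x'}(0\leftrightarrow n\eone)=\Theta(e^{-\iclx n})\) obtained in Section~\ref{sec:pureExpDec} then yields the desired \(\limsup\) bound. For this I would invoke the renewal representation of Section~\ref{sec:renOnL}: on \(\{0\leftrightarrow n\eone\}\) one decomposes \(C_0=\zeta^{\bawa}\sqcup\zeta_1\sqcup\cdots\sqcup\zeta_{m-1}\sqcup\zeta^{\fowa}\) into cone-confined irreducible pieces \(\zeta_i\) (which, by cone-confinement, satisfy \(\zeta_i\cap\Line\subseteq\Line_{[w_i\cdot\eone,\,w_{i+1}\cdot\eone]}\)) together with two boundary pieces. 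The obstruction to \(C_0\cap\Line\subseteq\Line_{[0,n]}\) is thus carried entirely by \(\zeta^{\bawa}\) and \(\zeta^{\fowa}\), together with the constraints \(w_1\cdot\eone\geq 0\) and \(w_m\cdot\eone\leq n\); each of these four conditions has probability bounded away from zero uniformly in \(n\), by the exponential tails of \(\hat{\rho}'_{\bawa}\) and \(\hat{\rho}'_{\fowa}\) established in Section~\ref{sec:pureExpDec} together with the analog of the positivity~\eqref{eq:rhoLrhoRpositive} at the trivial boundary piece.

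The main obstacle is the coupling step, which requires extending the construction of Appendix~\ref{app:Couplings} from the comparison of two line-homogeneous measures to the present asymmetric setting where only the weights on \(\Line\setminus\Line_{[0,n]}\) are modified; this is however a routine adaptation, as only the re-sampling domain changes. All remaining ingredients are direct consequences of the structural results already proved in Sections~\ref{sec:RWrep} and~\ref{sec:PuExpDecAnalytStrDecr}.
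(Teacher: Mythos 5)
Your proposal is correct in outline, but it takes a genuinely different route from the paper. The paper's proof avoids any coupling and any use of the renewal structure for this step: it introduces a fixed buffer \(M\), splits \(\p_{x'}(0\leftrightarrow n\eone)\) according to whether \(0\leftrightarrow\Line_{<-M}\cup\Line_{>n+M}\), kills the far-connection term by a union bound, translation invariance and the pure exponential decay of Item~\ref{Item:quatre} (giving a factor \(C_{x'}e^{-M\iclx}\)), and on the complementary event conditions on the cluster of \(0\) to compare directly with the measure modified on \(\Line_{[-M,n+M]}\), finally paying an explicit factor \((x'/x)^{2M}\) to pass to \(\p_{x'}^{(n)}\); choosing \(M\) fixed and large gives \(\p_{x'}(0\leftrightarrow n\eone)\leq 2(x'/x)^{2M}\p_{x'}^{(n)}(0\leftrightarrow n\eone)\). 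Your route instead compares the two measures by a coupling and reduces everything to the uniform positivity of \(\p_{x'}\bigl(C_0\cap\Line\subseteq\Line_{[0,n]}\bgiven 0\leftrightarrow n\eone\bigr)\), which you then extract from the cone-point/renewal description of Section~\ref{sec:renOnL}--\ref{sec:pureExpDec}. Two points deserve care, though neither is a gap. First, you cannot apply the second lemma of Appendix~\ref{app:Couplings} verbatim with \(E=\Line\setminus\Line_{[0,n]}\): the cluster \(C_E(\omega)\) of the \emph{endpoints} of the modified edges contains \(0\) and \(n\eone\) (they are endpoints of \(\{-\eone,0\}\) and \(\{n\eone,(n+1)\eone\}\)), so the agreement region would never contain \(C_0\). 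You need, as you implicitly do, the variant where one explores the cluster of the \emph{vertex set} \(\Line_{<0}\cup\Line_{>n}\); this works because every modified edge has at least one endpoint in that set, so the exploration still reveals all edges on which the weights differ, the explored region is surrounded by edges closed in both configurations, and the unexplored conditional laws coincide. Second, bounding ``each of the four conditions'' separately is not logically sufficient; what you need (and what the Section~\ref{sec:pureExpDec} machinery gives) is the joint statement, e.g.\ \(e^{\iclx n}\p_{x'}(0\leftrightarrow n\eone,\ \zeta^{\bawa}=\{0\},\ \zeta^{\fowa}=\{n\eone\})\) bounded below via the renewal theorem together with the analogue of~\eqref{eq:rhoLrhoRpositive} for \(\hat\rho'_{\bawa},\hat\rho'_{\fowa}\); since on this event \(C_0\) is contained in the diamond \(\diam(0,n\eone)\), the confinement along \(\Line\) follows. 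With these two points spelled out, your argument is sound; the trade-off is that the paper's buffer-and-reweighting trick is more elementary (a union bound plus a finite Radon--Nikodym factor), while yours yields the cleaner intermediate inequality \(\p_{x'}^{(n)}(0\leftrightarrow n\eone)\geq c\,\p_{x'}(0\leftrightarrow n\eone)\) but leans more heavily on the renewal representation.
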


\begin{claim}
	\label{claim:analyticityFreeEnergy}
		\(
		f_{x'}=\lim_{n\to\infty}\frac{1}{n}\log\e_{x'_0}^{(n)}\bigl[\bigl(\tfrac{x'}{x'_0}\bigr)^{\ouv_{\Line_{[0,n]}}}\bigr]
		\)
		exists and is analytic in \(x'\) in a small neighbourhood of \(x_0'\).
\end{claim}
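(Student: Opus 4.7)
I would set $h := \log(x'/x'_0)$ and view
\(
\phi_n(h) := \e_{x'_0}^{(n)}[(x'/x'_0)^{\ouv_{\Line_{[0,n]}}}] = \e_{x'_0}^{(n)}[e^{h\,\ouv_{\Line_{[0,n]}}}]
\)
as the moment generating function of the line-occupation observable. Observing that $\phi_n = Z_{x'}^{(n)}/Z_{x'_0}^{(n)}$ is a ratio of two random-cluster partition functions (which differ only in the weights along $\Line_{[0,n]}$), one sees that $\phi_n$ is a polynomial in $x'$ of degree at most $n+1$, hence entire and strictly positive for real $x'\ge 0$.

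For the existence of the real limit $f_{x'}=\lim_n\frac1n\log\phi_n(x')$ near $x'_0$, I would use approximate super-multiplicativity
\[
\phi_{n+m}(x')\geq c\,\phi_n(x')\,\phi_m(x'),
\]
obtained by pinning the state of the edge of $\Line$ at position $n$ and decoupling the two halves via FKG and Lemma~\ref{lem:UniformExpDecayLine}; Fekete's lemma then gives the limit. Equivalently, the identity $\frac{d}{dx'}\log\phi_n = \frac{1}{x'}\sum_{e\in\Line_{[0,n]}}\p_{x'}^{(n)}(\omega_e=1)$, combined with translation invariance of the unique infinite-volume measure (Remark~\ref{rem:uniqueness}), shows that $\frac1n\sum_{e\in\Line_{[0,n]}}\p_{x'}^{(n)}(\omega_e=1)$ converges to the density of open line-edges under $\p_{x'}$, whence $f_{x'}$ exists by integration in $x'$.

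For analyticity, I would expand in cumulants around $h=0$:
\[
\tfrac1n\log\phi_n(h)=\sum_{k\geq 1}\frac{h^k}{k!}\cdot\frac{\kappa_k^{(n)}}{n},
\]
where $\kappa_k^{(n)}$ denotes the $k$-th cumulant of $\ouv_{\Line_{[0,n]}}$ under $\p_{x'_0}^{(n)}$. The claim will follow from a uniform bound of the form $|\kappa_k^{(n)}|\le n\,C^k\,k!$ for some $C=C(x'_0)$: the series then converges on $\{|h|<1/C\}$ uniformly in $n$, and Vitali's theorem combined with the real convergence established above upgrades this to convergence of $g_n:=\frac1n\log\phi_n$ to a function analytic in $h$, hence in $x'$, in a complex disk around $x'_0$.

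The main obstacle is the cumulant bound, which boils down to uniform exponential decay of truncated correlations of the line variables. In the regime $x'_0>\xpc$, the cleanest route is via the renewal representation of Section~\ref{sec:pureExpDec}: the cone-points of $C_0$ on $\Line$ decompose $C_0$ into i.i.d.\ pieces with exponentially integrable size (cf.~\eqref{eq:expTailsrhop} and the corresponding estimates for $\hat\rho'_\bawa,\hat\rho'_\fowa$), which makes $\phi_n$ essentially a renewal generating function for which the cumulant bound is standard. One still needs to verify that the renewal weights themselves depend analytically on $x'$ in a complex neighborhood of $x'_0$; this ultimately reduces to the finite-energy property and to the uniform exponential decay of Lemma~\ref{lem:UniformExpDecayLine}, and is the most delicate step in the argument.
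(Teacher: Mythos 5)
Your existence argument is sound but more complicated than necessary, and in spirit it is the paper's: since \(\ouv_{\Line_{[0,n+m]}}\) splits as the sum of the open-edge counts on \(\Line_{[0,n]}\) and on \(\Line_{[n,n+m]}\), the weight \(\bigl(\tfrac{x'}{x'_0}\bigr)^{\ouv_{\Line_{[0,n+m]}}}\) factorizes into two increasing (for \(x'\geq x'_0\)) or two decreasing (for \(x'<x'_0\)) functions, so FKG plus translation invariance gives supermultiplicativity directly and Fekete's lemma applies; no pinning of the edge at position \(n\) and no appeal to Lemma~\ref{lem:UniformExpDecayLine} are needed. (Your alternative route via \(\frac{\dd}{\dd x'}\log\phi_n\) would additionally require a local-convergence statement for \(\p^{(n)}_{x'}\) in the bulk of the segment, uniform enough to pass to the Ces\`aro average, which you do not supply.)

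The analyticity half, however, has a genuine gap, and it sits exactly at the step you defer. First, \(\ouv_{\Line_{[0,n]}}\) is not a function of \(C_0\), so the renewal structure of Section~\ref{sec:pureExpDec} (cone-points of \(C_0\) under the conditioning \(0\leftrightarrow n\eone\)) is not the relevant one: you need the decomposition of the cluster of the whole segment \(\Line_{[0,n]}\) (the union of all clusters meeting it) into diamond-confined irreducible pieces under the \emph{unconditioned} measure \(\p^{(n)}_{x'_0}\), because only then does the line-edge count decompose additively over the pieces; this is precisely the representation the paper introduces for the analyticity of \(\iclx\) and reuses, in simpler form, for \(f_{x'}\). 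Second, the cumulant bound \(|\kappa_k^{(n)}|\le n\,C^k k!\) (equivalently, uniform exponential decay of truncated correlations of the line variables) is the whole content of the claim, and declaring it ``standard for renewal generating functions'' does not close it. Note also that the obstacle you single out as most delicate — analytic dependence of the renewal weights on complex \(x'\) — disappears in the correct setup: one tilts around the fixed reference point \(x'_0\), so the renewal measure is built at \(x'_0\) only and the \(x'\)-dependence of each piece is the explicit factor \(\bigl(\tfrac{x'}{x'_0}\bigr)^{\ouv_{\tilde D_i\cap\Line}}\), which conditionally on \(|\tilde D_i|=\ell\) is a polynomial in \(x'\) of degree at most \(\ell\); combined with the exponential tail of \(\ell\) this makes the per-piece generating function \(\Phi(w,z)\) jointly analytic, and \(f_{x'}\) is then obtained from the renewal identity \(\bbA=(1-\bbB)^{-1}\) and the analytic implicit function theorem. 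In short, your skeleton points at the right input (the cone-point/renewal decomposition, valid since \(x'_0>\xpc\)), but the decisive analyticity step — either the cumulant bound or the generating-function/implicit-function argument — is asserted rather than proved, and as written it is attached to the wrong cluster decomposition.
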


Assuming this, we can rewrite
\begin{align*}
\lim_{n\to\infty} \frac{1}{n} \log \p_{x'}^{(n)}(0\leftrightarrow n\eone)
&=
\lim_{n\to\infty} \frac{1}{n} \log \frac{\e_{x'_0}^{(n)}\bigl[\bigl(\frac{x'}{x'_0}\bigr)^{\ouv_{\Line_{[0,n]}}}\IF{0\leftrightarrow n\eone}\bigr]}{\e_{x'_0}^{(n)}\bigl[\bigl(\frac{x'}{x'_0}\bigr)^{\ouv_{\Line_{[0,n]}}}\bigr]} \\
&=
\lim_{n\to\infty} \frac{1}{n} \log\bigl\{e^{-(f_{x'}+\xi_{x'_0})n}\e_{x'_0}^{(n)}\bigl[\bigl(\tfrac{x'}{x'_0}\bigr)^{\ouv_{\Line_{[0,n]}}} \bigm\vert 0\leftrightarrow n\eone\bigr]\bigr\} .
\end{align*}
The same construction as in the previous subsection (coarse-graining and finite energy), together with the strict monotonicity of \(\iclx\) on \((\xpc,\infty)\), guarantee that there exists $\epsilon_0>0$ such that, for any \(x'\) in a neighbourhood of \(x'_0\), we have
\begin{align*}
e^{-(f_{x'}+\xi_{x'_0})n}& \e_{x'_0}^{(n)}\bigl[\bigl(\tfrac{x'}{x'_0}\bigr)^{\ouv_{\Line_{[0,n]}}} \bigm\vert 0\leftrightarrow n\eone\bigr]\\
&=
e^{-(f_{x'}+\xi_{x'_0})n}  \e_{x'_0}^{(n)}\bigl[\bigl(\tfrac{x'}{x'_0}\bigr)^{\ouv_{\Line_{[0,n]}}}\IF{\text{Cp}(\Line_{[0,n]})>\epsilon_0n} \bigm\vert 0\leftrightarrow n\eone\bigr] (1+o(1)),
\end{align*}
where $\text{Cp}(\Line_{[0,n]})$ is the number of cone-points for the cluster of $\Line_{[0,n]}$ (the union of all the clusters containing at least one vertex of $\Line_{[0,n]}$) lying on $\Line_{[0,n]}$. Let $\vartheta^b,\vartheta_1,\ldots,\vartheta_m,\vartheta^f$ be the (random) sequence of diamonds-confined $\Line_{[0,n]}$-clusters (that is, the irreducible, in the sense of Section~\ref{sec:pureExpDec}, cone-confined pieces of the cluster of $\Line_{[0,n]}$). We write $D_1,\ldots,D_m$ the sequence of diamonds containing $\vartheta_1,\ldots,\vartheta_m$ and $|D_i|\geq 1$ the $\eone$ displacement (or length) of $D_i$. We stress at this point that $\vartheta_i$ contains the whole information about all the clusters touching $\Line\cap D_i$. We also denote \(D^b,D^f\) the diamonds containing $\vartheta^b,\vartheta^f$ (their left (resp.\ right) endpoints might not lie on $\Line_{[0,n]}$). As before, the length of these irreducible components has exponential tails.
We obtain
\begin{multline}
\label{eq:diamondExpect1}
\e_{x'_0}^{(n)}\bigl[\bigl(\tfrac{x'}{x'_0}\bigr)^{\ouv_{\Line_{[0,n]}}}  \IF{\text{Cp}(\Line_{[0,n]})>\epsilon_0 n}\bigm\vert 0\leftrightarrow n\eone \bigr]=\\
\e_{x'_0}^{(n)}\Bigl[\bigl(\tfrac{x'}{x'_0}\bigr)^{\ouv_{D^b\cap\Line_{[0,n]}}}\bigl(\tfrac{x'}{x'_0}\bigr)^{\ouv_{D^f\cap\Line_{[0,n]}}}\prod_{i=1}^{m}\bigl(\tfrac{x'}{x'_0}\bigr)^{\ouv_{D_i\cap\Line}} \IF{\text{Cp}(\Line_{[0,n]})>\epsilon_0 n}\bigm\vert 0\leftrightarrow n\eone\bigr].
\end{multline}
Proceeding as in Sections~\ref{sec:RWrep} and~\ref{sec:pureExpDec},
we can partition $\vartheta^b,\vartheta_1,\ldots,\vartheta_m,\vartheta^f$ into finite strings of irreducible pieces \(\tilde\vartheta^b,\tilde\vartheta_1,\dots,\tilde\vartheta_{k-1},\tilde\vartheta^f\), and construct a probability measure 
\(\mathsf{Q}\) on the irreducible components $\tilde{\vartheta}_i$ and two finite measures $\mathsf{p}_L,\mathsf{p}_R$ on \(\tilde\theta^b\) and \(\tilde\theta^f\), respectively. All three measures have exponential tails, so that, up to an error of order \(e^{-cn}\) with \(c>0\) uniform in \(x'\) in a small neighbourhood of \(x'_0\), \eqref{eq:diamondExpect1} becomes
\begin{align}
\label{eq:analyticity1}
C\sum_{k=1}^{n} \sum_{\ell_0+\cdots+\ell_k=n} \mathsf{p}_L(\ell_0)\mathsf{p}_R(\ell_k)\prod_{i=1}^{k-1} \mathsf{Q}\bigl[\bigl(\tfrac{x'}{x_0'}\bigr)^{\ouv_{\tilde{D}_i\cap\Line}}\bigm\vert |\tilde{D}_i|=\ell_i \bigr] \mathsf{Q}(|\tilde{D}_i|=\ell_i),
\end{align}
where the $C$ comes from the pure exponential decay behaviour of $\p_{x_0'}(0\leftrightarrow n\eone)$ and the associated conditioning, and where
\[
\mathsf{p}_{L/R}(\ell) = \mathsf{p}_{L/R}\bigl[\bigl(\tfrac{x'}{x_0'}\bigr)^{\ouv_{\tilde{D}^{b/f}\cap\Line_{[0,n]}}} \IF{|\tilde{D}^{b/f}|=\ell} \bigr],
\]
are exponentially decaying in $\ell$ for \(x'\) in a small neighbourhood of \(x_0'\).
Notice that \(\mathsf{Q}\bigl[\bigl(\tfrac{x'}{x_0'}\bigr)^{\ouv_{\tilde{D}_i\cap\Line}}\bigm\vert |\tilde{D}_i|=\ell_i \bigr]\) is a polynomial in $x'$ of degree at most $\ell_i$. Denote $q_\ell=\mathsf{Q}(|\tilde{D}_1|=\ell)$ (which is exponentially decaying in $\ell$) and $p_\ell(x') = \mathsf{Q}\bigl[\bigl(\tfrac{x'}{x_0'}\bigr)^{\ouv_{\tilde{D}_1\cap\Line}}\bigm\vert |\tilde{D}_1|=\ell\bigr] e^{-(f_{x'}+\xi_{x_0'})\ell}$; observe that $p_\ell(x')$ is an analytic function of $x'$. Define $\alpha_0(x')=1$ and
\begin{align}
\alpha_n(x')
&=
\sum_{k=1}^{n} \sum_{\ell_1+\dots+\ell_k=n} \prod_{i=1}^{k} \mathsf{Q}\bigl[\bigl(\tfrac{x'}{x}\bigr)^{\ouv_{\tilde{D}_i\cap\Line}}\bigm\vert |\tilde{D}_i|=\ell_i\bigr] \mathsf{Q}(|\tilde{D}_i|=\ell_i) e^{-(f_{x'}+\xi_{x_0'})\ell_i}\notag\\
&=
\sum_{k=1}^{n} \sum_{\ell_1+\cdots+\ell_k=n} \prod_{i=1}^{k} q_{\ell_i} p_{\ell_i}(x').
\label{eq:acoef}
\end{align}

Denote $d_n = e^{-(f_{x'}+\xi_{x'_0})n}\e_{x'_0}^{(n)}\bigl[\bigl(\tfrac{x'}{x'_0}\bigr)^{\ouv_{\Line_{[0,n]}}} \bigm\vert 0\leftrightarrow n\eone\bigr]$ and \(\mathbb{D}(z)=\sum_{n\geq 1} d_n z^n\). By definition, the radius of convergence of \(\mathbb{D}\) is $e^{\xi_{x'}}$. Then, consider the generating functions
\[
\mathbb{A}(t) = \sum_{n=0}^{\infty}  \alpha_n(x') e^{tn}\quad\text{ and }\quad
\mathbb{B}(t) = \sum_{\ell=1}^{\infty} q_\ell\, p_\ell(x') e^{t\ell}.
\]
By~\eqref{eq:diamondExpect1} and~\eqref{eq:analyticity1}, the radius of convergence of \(\mathbb{A}\) is given by the one of \(\mathbb{D}(e^t)\), so that the radius of convergence of $\mathbb{A}(t)$ is equal to $\iclx$. Moreover, notice that $\mathbb{B}(t)$ converges for all $t<t_0$ for some $t_0>\iclx$ as $q_\ell$ is exponentially decaying in $\ell$.
Then, \eqref{eq:acoef} implies that
\begin{align*}
\mathbb{A}(t) = \frac{1}{1-\mathbb{B}(t)}.
\end{align*}
Thus $\mathbb{B}(\iclx)=1$, which provides an implicit expression for $\iclx$. Defining
\[
\Phi(w,z) = \sum_{\ell=1}^{\infty} q_\ell\, p_\ell(w)\, e^{z\ell},
\]
analyticity follows from solving $\Phi(w,z)=1$ for $z$ in a neighborhood of $(x',\iclx)$. Analyticity of $\Phi(w,z)$ close to $(x',\iclx)$ follows from the exponential decay property of $q_\ell$, and $\frac{\partial\Phi}{\partial z}\big|_{(x',\iclx)}\neq 0$ from a direct computation. The claim follows using the (analytic version of the) implicit function theorem.
\begin{proof}[Proof of Claim~\ref{claim:analyticityLineToInterval}]
	To simplify notations, we will write $j\equiv j\eone$ when the meaning is clear from the context.
	First notice that $\p_{x'}^{(n)}(0\leftrightarrow n)\leq \p_{x'}(0\leftrightarrow n)$ by monotonicity, since \(x'\geq x\). Thus $\iclx\leq -\lim_{n\to\infty}\frac{1}{n} \log \p_{x'}^{(n)}(0\leftrightarrow n)$. To obtain the reverse inequality, we partition, for $M>0$,
	\begin{multline}
	\p_{x'}(0\leftrightarrow n)
	=
	\p_{x'}(0\leftrightarrow n, 0\leftrightarrow(\Line_{<-M}\cup\Line_{>n+M}))\\
	+
	\p_{x'}(0\leftrightarrow n, 0\nleftrightarrow(\Line_{<-M}\cup\Line_{>n+M})).\label{eq:LineToInterval1}
	\end{multline}
	Now, we bound separately the two terms in the RHS.
	\begin{align}
		\p_{x'}(0\leftrightarrow n, 0\leftrightarrow&(\Line_{<-M}\cup\Line_{>n+M})) \nonumber\\
		&\leq
		\sum_{k<-M}\p_{x'}(k\leftrightarrow 0 \leftrightarrow n) + \sum_{k>n+M}\p_{x'}(0\leftrightarrow n\leftrightarrow k)\nonumber\\
		&\leq
		2\sum_{k=0}^{\infty}\p_{x'}(0\leftrightarrow n+M+k)\nonumber\\
		&\leq
		2e^{-\iclx n}e^{-\iclx M}\frac{1}{1-e^{-\iclx}}\nonumber\\
		&\leq
		C\p_{x'}(0\leftrightarrow n)e^{-\iclx M}\frac{1}{1-e^{-\iclx}}\equiv C_{x'}e^{-M\iclx}\p_{x'}(0\leftrightarrow n), \label{eq:LineToInterval2}
	\end{align}
	where the first inequality is a union bound, the second uses invariance under translation, the third is by Lemma~\ref{lem:BasicProp}, Item~\ref{it:unifZd}, and the fourth follows from Theorem~\ref{thm:PinnedRegime}, Item~\ref{Item:quatre} with $C\geq 0$ not depending on $n$. Then,
	\begin{align}
		\p_{x'}(0&\leftrightarrow n, 0\nleftrightarrow(\Line_{<-M}\cup\Line_{>n+M})) =\nonumber\\
		&=
		\!\!\!\!\!\!\!\sum_{\substack{C\ni 0,n\\C\cap(\Line_{<-M}\cup\Line_{>n+M})=\varnothing}} \!\!\!\!\!\!\!\p_{x'}(\partial C_0 \text{ closed})\p_{x'}(C\text{ open}\given\partial C_0 \text{ closed})\nonumber\\
		&\leq
		\!\!\!\!\!\!\!\sum_{\substack{C\ni 0,n\\C\cap(\Line_{<-M}\cup\Line_{>n+M})=\varnothing}} \!\!\!\!\!\!\!\p_{x'}^{(-M,n+M)}(\partial C_0 \text{ closed})\p_{x'}^{(-M,n+M)}(C\text{ open}\given\partial C_0 \text{ closed})\nonumber\\
		&\leq
		\p_{x'}^{(-M,n+M)}(0\leftrightarrow n)\nonumber\\
		&=
		\frac{\e\Big[\big(\frac{x'}{x}\big)^{\ouv_{\Line_{[-M,n+M]}}} \IF{0\leftrightarrow n}\Big]}{\e\Big[\big(\frac{x'}{x}\big)^{\ouv_{\Line_{[-M,n+M]}}}\Big]}\leq\Big(\frac{x'}{x}\Big)^{2M} \p_{x'}^{(n)}(0\leftrightarrow n), \label{eq:LineToInterval3}
	\end{align} where $\p_{x'}^{(-M,n+M)}$ denotes the measure with modified weights on $\Line_{[-M,n+M]}$. The first inequality is by monotonicity (and $\p_{x'}(\cdot\given\partial C_0 \text{ closed})=\p_{x'}^{(-M,n+M)}(\cdot\given\partial C_0 \text{ closed})$ as the interior of $\partial C_0$ does not contain edges from $\Line_{<-M}$ or $\Line_{>n+M}$).
	
	Putting together \eqref{eq:LineToInterval1},\eqref{eq:LineToInterval2}, \eqref{eq:LineToInterval3}, and choosing $M$ large enough so that $C_{x'}e^{-M\iclx}\leq\frac{1}{2}$ we obtain:
	\[
	\p_{x'}(0\leftrightarrow n)\leq \Big(\frac{x'}{x}\Big)^{2M}\frac{1}{1-C_{x'}e^{-M\iclx}}\p_{x'}^{(n)}(0\leftrightarrow n)\leq 2\Big(\frac{x'}{x}\Big)^{2M}\p_{x'}^{(n)}(0\leftrightarrow n),
	\] implying $\iclx \geq -\lim\limits_{n\to\infty}\frac{1}{n}\log \p_{x'}^{(n)}(0\leftrightarrow n\eone)$.
\end{proof}
\begin{proof}[Proof of Claim~\ref{claim:analyticityFreeEnergy}]
The proof will be done using the same line of ideas as described above for the analyticity of $\iclx$. First notice that, for $x'\geq x'_0$,
\begin{align*}
\e_{x'_0}^{(n)}\bigl[\bigl(\tfrac{x'}{x}\bigr)^{\ouv_{\Line_{[0,n+m]}}}\bigr]
\geq \e_{x'_0}^{(n)}\bigl[\bigl(\tfrac{x'}{x}\bigr)^{\ouv_{\Line_{[0,m]}}}\bigr]
\e_{x'_0}^{(n)}\bigl[\bigl(\tfrac{x'}{x}\bigr)^{\ouv_{\Line_{[0,n]}}}\bigr],
\end{align*}
by FKG, as $\ouv_{\Line_{[0,n]}}$ is an increasing function, and by translation invariance of $\e$ (for $x'<x'_0$, the reverse inequality holds). Thus, existence of $f_{x'}$ follows by Fekete's Lemma. Analyticity of $f_{x'}$ follows the same lines as $\iclx$: the same representation of $\Line_{[0,n]}$ cluster holds under $\e_{x'_0}^{(n)}$, and the rest of the argument carries out in the same (in fact, simpler) fashion as in the \(\iclx\) case.
\end{proof}

\subsection{Interface localization}
\label{sec:LocalizationLT}

Theorem~\ref{thm:PottsScalingInterface} is an essentially immediate corollary of the analysis leading to Theorem~\ref{thm:CPdensity} and classical tools for the analysis of the random-cluster model (see~\cite{Grimmett-2006}): the Edwards--Sokal coupling and the coupling between the high- and low-temperature random-cluster measures on \(\Ztwo\). It is enough to make the following observations:
\begin{itemize}
\item Whenever \(\{i,j\}\in\Etwo\) is such that \(\{i,j\}^*\) is part of the interface, the edge \(\{i,j\}\) is closed in the random-cluster configuration associated to the Potts configuration by the Edwards--Sokal coupling. Note that this random-cluster model has wired boundary condition and a constraint that \(\setof{i\in\Z^2}{i^\perp>0} \setminus \Lambda_n \) must no be connected (in \(\Lambda_n)\) to \(\setof{i\in\Z^2}{i^\perp\leq 0} \setminus \Lambda_n \).
\item By the standard coupling between the random-cluster model on \(\Z^2\) and its dual (which has parameters \(p^*<p_{\rm c}\), \(J^*>1\) and the same value of \(q\)), the latter has free boundary condition and is conditioned on the two dual vertices \((-n-\tfrac12,\tfrac12)\) and \((n+\tfrac12,\tfrac12)\) being connected. Let us denote by \(C_n\) the corresponding cluster.
\item By the above, the Potts interface is a subset of the (dual) cluster \(C_n\).
\item The analysis leading to Theorem~\ref{thm:CPdensity} can be repeated essentially verbatim, the fact that one is working in a finite system having no incidence.
\item This implies that the cone-points of \(C_n\) are also cone-points of the Potts interface, from which the desired result follows immediately.
\end{itemize}

\begin{figure}[h]
	\includegraphics[scale=0.6]{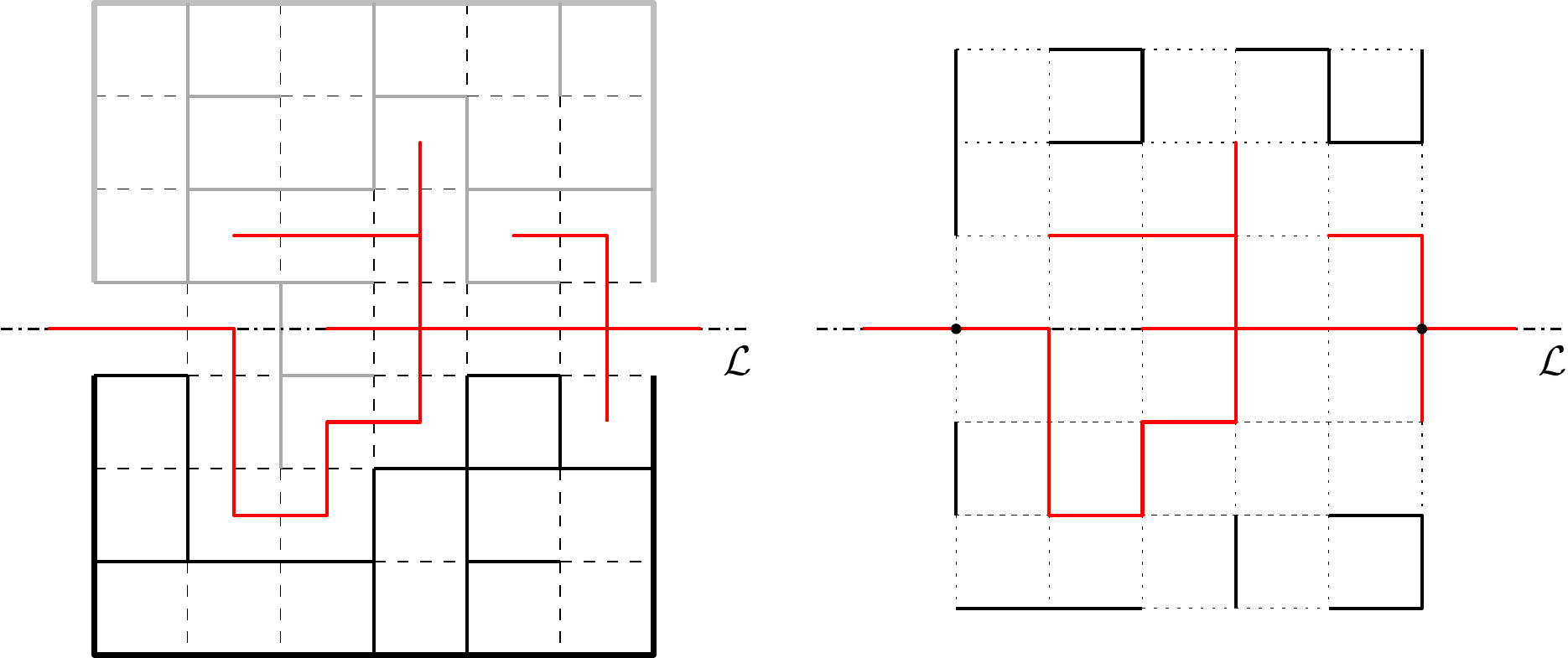}
	\caption{The FK representation of low temperature Potts model with Dobrushin boundary condition (the top is conditioned to not intersect the bottom) and the corresponding high temperature dual FK configuration (where the two points are conditioned to be connected).}
	\label{fig:interfaceFK}
\end{figure}


\appendix


\section{Couplings}
\label{app:Couplings}

We sketch here the proofs of the existence of some couplings used in the paper. Similar construction (with more details) can be found, for example, in \cite{Graham+Grimmett-2011} and \cite{Duminil-Copin+Manolescu-2016}.
\begin{lemma}
	Let $G$ be a finite graph and let $\p_{\vec{x},q}$ be the random-cluster measure with edges weights $(x_e)_{e\in E_G}$ and cluster weight $q$ on $G$. Then, for any $e\in E_G$, there exists a coupling $(\omega,\eta)\sim\Phi$ of $\p_{\vec{x},q}(\cdot \given \omega_e= 1)$ and $\p_{\vec{x},q}(\cdot \given \omega_e= 0)$ such that
	\begin{enumerate}[label=(\roman*)]
		\item $\omega\sim\p_{\vec{x},q}(\cdot \given \omega_e=1)$ and $\eta\sim\p_{\vec{x},q}(\cdot \given \eta_e=0)$,
		\item $\Phi(\omega\geq\eta)=1$.
	\end{enumerate} 
\end{lemma}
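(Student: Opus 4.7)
The plan is a two-step standard argument: first I would establish the stochastic domination \(\p_{\vec{x},q}(\cdot \given \omega_e = 1) \succeq \p_{\vec{x},q}(\cdot \given \omega_e = 0)\), and then either invoke Strassen's theorem or, preferably, build the coupling by hand via coupled heat-bath dynamics, since the latter is self-contained and the same type of monotone construction is used implicitly elsewhere in the paper.

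For the domination step, the natural tool is Holley's criterion applied to the two conditional measures, which both live on \(\{0,1\}^{E_G\setminus\{e\}}\). The criterion reduces to showing, for every edge \(f\neq e\) and every pair of configurations \(\xi\leq\xi'\) on \(E_G\setminus\{e,f\}\), the inequality
\[
\p_{\vec{x},q}(\omega_f=1 \given \omega_e=1,\, \omega_{\setminus\{e,f\}}=\xi') \;\geq\; \p_{\vec{x},q}(\omega_f=1 \given \omega_e=0,\, \omega_{\setminus\{e,f\}}=\xi).
\]
Using the explicit form of the single-edge conditional, the left- and right-hand sides equal either \(x_f/(x_f+1)\) or \(x_f/(x_f+q)\) depending on whether the endpoints of \(f\) are already connected in the rest of the configuration. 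Opening more edges (including \(e\)) can only create more connections, and \(q\geq 1\) gives \(x_f/(x_f+1)\geq x_f/(x_f+q)\); combining these two monotonicities yields the required bound.

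Alternatively, and perhaps more transparently, I would construct the coupling explicitly by running two discrete-time chains \((\omega_t)_{t\geq 0}\) and \((\eta_t)_{t\geq 0}\) on \(\{0,1\}^{E_G}\), initialised by \(\omega_0\equiv 1\) and \(\eta_0\equiv 0\), which freeze the edge \(e\) throughout and, at each step, select an edge \(f\neq e\) uniformly and an independent uniform \(U\in[0,1]\), then set \(\omega_{t+1}(f) = \IF{U\leq p_\omega}\) and \(\eta_{t+1}(f) = \IF{U\leq p_\eta}\), where \(p_\omega\) and \(p_\eta\) are the heat-bath probabilities of opening \(f\) given the current configurations of \(\omega_t\) and \(\eta_t\) respectively. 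The monotonicity inequality above, applied at each step, propagates the initial ordering \(\omega_0\geq\eta_0\) to all later times. Standard irreducibility and aperiodicity arguments ensure that \(\omega_t\) converges in distribution to \(\p_{\vec{x},q}(\cdot\given\omega_e=1)\) and \(\eta_t\) to \(\p_{\vec{x},q}(\cdot\given\omega_e=0)\); extracting a subsequential limit of the joint law of \((\omega_t,\eta_t)\) (the state space is finite) yields the desired coupling \(\Phi\).

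The main obstacle, if one can call it that, is the verification of pointwise monotonicity of the single-edge conditional probabilities. This step uses the hypothesis \(q\geq 1\) in an essential way and is the only nontrivial ingredient; once in hand, the rest is a routine propagation-of-monotonicity argument, and everything else (convergence of the chains, existence of a limiting coupling) is standard finite-state Markov chain theory.
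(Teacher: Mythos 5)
Your proposal is correct and follows essentially the same route as the paper: the paper's proof is exactly a coupled heat-bath (Markov chain) construction run simultaneously on the two conditioned measures, started from ordered configurations, with the ordering propagated by the single-edge monotonicity and the coupling obtained in the limit. Your additional verification of the single-bond inequality \(x_f/(x_f+1)\geq x_f/(x_f+q)\) for \(q\geq 1\) simply makes explicit the monotonicity step the paper leaves implicit.
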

\begin{proof}
	This lemma is standard and follows from a Markov chain argument: start from $\omega^{(0)}\geq\eta^{(0)}$ and perform a heat bath dynamic simultaneously on the two configurations. Having constructed \(\omega^{(n-1)}, \eta^{(n-1)}\), construct \(\omega^{(n)}, \eta^{(n)}\) in the following way: select an edge $f$ uniformly at random from $E_G$; resample its state in \(\omega^{(n-1)}\) according to $\p_{\vec{x},q}(\cdot \given \omega_g=\omega^{(n-1)}_g\ \forall g\notin \{f,e\},\omega_e=1)$ to obtain $\omega^{(n)}$; resample its state in \(\eta^{(n-1)}\) according to $\p_{\vec{x},q}(\cdot \given \eta_g=\eta^{(n-1)}_{g}\ \forall g\notin\{f,e\},\eta_e=0)$ to obtain $\eta^{(n)}$. The two dynamics can be coupled so that for every $n$, the law of $\omega^{(n+1)}$ dominates the law of $\eta^{(n+1)}$. Letting $n\to\infty$, this gives the desired coupling.
\end{proof}

\begin{lemma}
	Let $G$ be a finite graph, let $\p_{\vec{x},q}$ be the random cluster measure with edges weights $(x_e)_{e\in E_G}$ and cluster weight $q$ on $G$ and, for $E\subset E_G$, let $\p_{\vec{y},q}$ be the random cluster measure with edges weights $y_e=
	\begin{cases}
		x_e 	& \text{ if } e\notin E\\
		y_e<x_e & \text{ if } e\in E
	\end{cases}$, and cluster weight $q$. Then, there exists a coupling $(\omega,\eta)\sim\Phi$ of $\p_{\vec{x},q}$ and $\p_{\vec{y},q}$ such that
	\begin{enumerate}[label=(\roman*)]
		\item $\omega\sim\p_{\vec{x},q}$ and $\eta\sim\p_{\vec{y},q}$,
		\item $\Phi(\omega\geq\eta)=1$,
		\item $\Phi(\omega|_{(C_E(\omega))^\comp} = \eta|_{(C_E(\omega))^\comp})=1$.
	\end{enumerate} 
\end{lemma}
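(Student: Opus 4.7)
Plan: I would adapt the heat-bath Markov chain construction of the previous lemma, running a chain on pairs $(\omega,\eta)$ started at $(\mathbf{1},\mathbf{1})$. At each step a uniformly chosen edge $f$ is resampled in both configurations using a common uniform random variable $U$, through the heat-bath conditional probabilities of $\p_{\vec{x},q}$ and $\p_{\vec{y},q}$. Monotonicity of these conditional probabilities in the configuration (FKG) and in the edge weights (which uses $y_e \leq x_e$ together with $q\geq 1$) ensures that $\omega \geq \eta$ is preserved, and the standard ergodicity argument then yields items (i) and (ii) in the stationary limit.

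The new point is to propagate the invariant (iii) through the dynamics. Assume inductively that $\omega=\eta$ on every edge both of whose endpoints lie outside $C_E(\omega)$. If the resampled edge $f$ lies in that exterior region then $f\notin E$, so $x_f = y_f$; moreover, $\partial C_E(\omega)$ is closed in $\omega$, and since $\omega$ agrees with $\eta$ on the exterior, the connectivities between the endpoints of $f$ in $\omega_{\setminus f}$ and $\eta_{\setminus f}$ coincide (a connecting path through the exterior uses only edges on which $\omega=\eta$, while no path can enter $C_E(\omega)$). Thus the two heat-bath probabilities agree and $\omega_f=\eta_f$ after the update, preserving the invariant. When $f$ lies inside $C_E(\omega)$ the monotone coupling may produce $\omega_f\neq\eta_f$, but both endpoints of $f$ remain in $C_E(\omega)$ and the invariant is vacuously maintained for $f$ itself.

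The delicate case, which I expect to be the main obstacle, is when $f$ touches $C_E(\omega)$ and the update (typically closing $f$) detaches a connected piece from $C_E$, so that the new $C_E(\omega^{\mathrm{new}})$ is a strict subset of $C_E(\omega)$: vertices in the detached region, where $\omega$ and $\eta$ were allowed to disagree, are suddenly in the exterior and violate (iii). My proposed resolution is to append a deterministic repair step that overwrites $\eta$ by $\omega$ on every component that has just detached from $C_E$. Verifying that this augmented chain still has the correct stationary marginals is the crux; I would establish it via the following matching provided by the domain Markov property: conditionally on any fixed vertex set $K$ with $\partial K$ closed, the laws of $\omega|_{V_G\setminus K}$ under $\p_{\vec{x},q}$ and of $\eta|_{V_G\setminus K}$ under $\p_{\vec{y},q}$ agree, because $E\subset E_K$ forces the edge weights outside $E_K$ to coincide in the two models. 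Recognising the augmented chain as producing, in the stationary limit, the law "sample $\omega\sim\p_{\vec{x},q}$, set $\eta=\omega$ on edges outside $C_E(\omega)$, and then sample $\eta|_{E_{C_E(\omega)}}$ from a monotone coupling of the two conditional random-cluster measures on $C_E(\omega)$ with free boundary condition, provided by Strassen's theorem" would close the argument.
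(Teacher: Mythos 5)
Your reduction of the difficulty to the ``detachment'' case is correct, and the propagation of the invariant through exterior and interior updates is fine. The gap is in the resolution of that case. The deterministic repair step (overwriting \(\eta\) by \(\omega\) on pieces that detach from \(C_E\)) is a move whose effect on the \(\eta\)-marginal is not governed by any heat-bath conditional law, so stationarity of \(\p_{\vec y,q}\) for the \(\eta\)-coordinate is genuinely lost, and the verification you propose does not repair this: the limiting law you want to ``recognise'' is not even a coupling of the two measures. Indeed, if you sample \(\omega\sim\p_{\vec x,q}\), set \(\eta=\omega\) outside \(C_E(\omega)\) (with \(\partial C_E(\omega)\) closed) and then draw \(\eta\) on the edges inside \(C_E(\omega)\) from the free-boundary measure with weights \(\vec y\), the \(\eta\)-marginal is wrong. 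Since \(\{C_E(\omega)=C\}=\{\partial C\text{ closed}\}\cap F_C\) with \(F_C\) an interior event, and interior and exterior are independent given a closed boundary, one computes for instance
\[
\Phi(\eta\equiv 1)\;=\;\p_{\vec x,q}\bigl(C_E(\omega)=V_G\bigr)\,\p_{\vec y,q}(\eta\equiv 1)\;<\;\p_{\vec y,q}(\eta\equiv 1),
\]
and more generally \(\Phi(\eta=\zeta)=\p_{\vec y,q}(\zeta)\sum_{C}\p_{\vec x,q}(C_E(\omega)=C)/\p_{\vec y,q}(\partial C\text{ closed})\), the sum running over the sets \(C\) compatible with \(\zeta\); this is not identically \(1\). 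The point is that the region \(C_E(\omega)\) is selected by \(\omega\), and this selection biases \(\eta\): the correct conditional law of \(\eta\) inside given \(C_E(\omega)=C\) and the exterior is \emph{not} the free-boundary measure on \(C\); it is only defined implicitly through the joint construction. The same objection applies to the domain Markov ``matching'' you invoke: it matches conditional laws given a closed boundary, but your construction conditions \(\omega\) on the strictly stronger, \(\omega\)-measurable event \(\{C_E(\omega)=C\}\) while treating \(\eta\) as if only the closed boundary had been imposed.

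The paper avoids this entirely by a one-shot exploration rather than a dynamics: fix an ordering of the edges and grow the cluster of \(E\) in \(\omega\), revealing one boundary edge of the explored region at a time and sampling its state in \(\omega\) and in \(\eta\) with a common uniform variable through their respective conditional probabilities given the already revealed edges (this preserves \(\eta\le\omega\) and, since each edge is drawn from its correct conditional law along an adapted exploration, both marginals are exact). When the exploration stops, all edges of \(\partial C_E(\omega)\) are closed in both configurations, so by the domain Markov property the unexplored region has the same conditional law under both measures (the weights agree there), and one samples it once and assigns it to both configurations, giving (iii) with no repair step needed. If you want to keep a Markov-chain construction, you would have to couple the full conditional laws inside \(C_E(\omega)\) as they arise from the joint exploration, not replace them by free-boundary measures.
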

\begin{proof}
	This coupling is slightly more involved and is done via an exploration process. Fix an arbitrary ordering of $E_G$. We will explore the configurations by exploring the cluster of $E$. Denote $C_E^{(n)}(\omega)$ the cluster of $E$ in $\omega$ (that is, the union of the clusters of the endpoints of the edges in $E$) restricted to the explored edges after step $n$ (it always contains the endpoints of the edges in $E$) and let $\partial C_E^{(n)}(\omega)$ be the unexplored edge-boundary of $C_E^{(n)}(\omega)$. At step $n$, sample the smallest edge $e_n\in\partial C_E^{(n-1)}(\omega)$ as follows: sample $U_n\sim\text{Unif}([0,1])$ and set
	\[
		\omega_{e_n}
		=
		\IF{U_n\leq \p_{\vec{x},q}(\cdot \given \omega_{e_1},\ldots,\omega_{e_{n-1}})}
		\quad\text{ and }\quad
		\eta_{e_n}
		=
		\IF{U_n\leq \p_{\vec{y},q}(\cdot \given \eta_{e_1},\ldots,\eta_{e_{n-1}})}.
	\]
	In this way, when an edge is open in \(\eta\), it is also open in \(\omega\). Observe that, once the cluster of $E$ in $\omega$ is explored, its boundary will be closed in both configurations. We can thus sample the remaining edges in both configurations according to $\p_{\vec{x},q;(C_E(\omega)\cup\partial C_E(\omega))^\comp}^f$, so the two agree outside of $C_E(\omega)$.
\end{proof}

\section{Basic results in FK percolation}

\subsection{A decoupling inequality}
The following lemma is inspired by an analogous claim in~\cite{Campanino+Ioffe+Velenik-2008}.
\begin{lemma}
	\label{lem:LfreeEst}
	Let $R>0$ and let $A$ be an increasing event depending only on edges in a finite set $D\subset\Zd$. Define $D_R=\bigcup_{i\in D}(i+\llbracket-R,R\rrbracket^d)$. Then, for all $R$ large enough,
	\[
	\frac{\p_{D_R}^{\wired}(A)}{\p(A)}
	\leq
	\bigl( 1 - |\partial D_{R/2}| |\partial D_R| e^{-\Cr{decayFK}R/2} \bigr)^{-1} .
	\]
\end{lemma}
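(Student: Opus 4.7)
The plan is to decouple the wired boundary condition at $\partial D_R$ from the local event $A$ by means of a ``good'' event $B$ carrying a closed barrier around $D$. Concretely, set $B = \{\partial D_{R/2}\nleftrightarrow\partial D_R\text{ using edges of } D_R\}$, a decreasing event, so that on $B$ no open path of the configuration crosses the annulus $D_R\setminus D_{R/2}$.

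The key step is the decoupling bound $\p_{D_R}^{\wired}(A\cap B)\leq\p(A)\,\p_{D_R}^{\wired}(B)$. Under $\p_{D_R}^{\wired}$, I would expose the cluster $C^*$ of $\partial D_R$. On $B$ we have $C^*\cap D_{R/2}=\emptyset$, so in particular $D\subset D_R\setminus C^*$. By the spatial Markov property, conditional on $C^*=c$, the configuration on edges internal to $D_R\setminus c$ is distributed as $\p^{\free}_{D_R\setminus c}$ (the free boundary condition arises because the inner edge-boundary of the exposed cluster is, by definition, closed). Since $q\geq 1$, a finite-volume free measure is stochastically dominated by the infinite-volume measure, so $\p^{\free}_{D_R\setminus c}(A)\leq\p(A)$ for every increasing $A$ depending only on edges in $D$. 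Averaging over $c$ yields the claim.

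Next, I would estimate $\p_{D_R}^{\wired}(B^c)$ by a union bound. The event $B^c$ implies an open path in $D_R$ from some $u\in\partial D_{R/2}$ to some $v\in\partial D_R$, and any such pair satisfies $\normsup{u-v}\geq R/2$. Stochastic monotonicity ($\p_{\Lambda_{R/2}(u)}^{\wired}\succeq\p_{D_R}^{\wired}$ for $q\geq 1$, since $\Lambda_{R/2}(u)\subset D_R$ once $R$ is large enough) transfers the bound of Lemma~\ref{lem:expoDecFKwired} to the wired measure on $D_R$:
\[
\p_{D_R}^{\wired}(u\leftrightarrow v) \leq \p_{\Lambda_{R/2}(u)}^{\wired}(u\leftrightarrow\partial\Lambda_{R/2}(u)) \leq e^{-\Cr{decayFK}R/2},
\]
valid for all $R$ sufficiently large. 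Summing over the $|\partial D_{R/2}||\partial D_R|$ candidate pairs gives $\p_{D_R}^{\wired}(B^c)\leq|\partial D_{R/2}||\partial D_R|e^{-\Cr{decayFK}R/2}$, which is exactly the $\varepsilon$ appearing in the denominator of the claimed bound.

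Finally I would combine the two estimates via the decomposition $\p_{D_R}^{\wired}(A)=\p_{D_R}^{\wired}(A\cap B)+\p_{D_R}^{\wired}(A\cap B^c)$ and rearrange to isolate $\p_{D_R}^{\wired}(A)$ on the left-hand side, producing a bound of the form $\p_{D_R}^{\wired}(A)(1-\varepsilon)\leq\p(A)$, which is the target inequality. The main obstacle I anticipate lies precisely in this rearrangement: the naive decomposition combined with the crude bound $\p_{D_R}^{\wired}(A\cap B^c)\leq\p_{D_R}^{\wired}(B^c)$ gives $\p_{D_R}^{\wired}(A)\leq\p(A)+\varepsilon$, which is slightly weaker than the claimed ratio form. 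Reaching exactly the $(1-\varepsilon)^{-1}$ form will require either a refined FKG-style manipulation that produces an additional factor of $\p_{D_R}^{\wired}(A)$ in the remainder term, or a careful accounting that exploits the fact that the conditional measure inside the barrier is identical under both $\p_{D_R}^{\wired}$ and $\p$ (so that the ratio is entirely controlled by the difference of the distributions of the separating surface). The decoupling on $B$ itself, via the spatial Markov property applied to the exposed cluster $C^*$, is standard but worth writing out with care.
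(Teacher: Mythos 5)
Your skeleton is the same as the paper's: split according to whether the annulus is crossed, screen off the wired boundary on the non-crossing event (your exploration of the cluster $C^*$ of $\partial D_R$, with the free measure in the unexplored region dominated by $\p$, is a correct way to get $\p_{D_R}^{\wired}(A\cap B)\leq\p(A)$ and is essentially the paper's second inequality), and control the crossing probability by a union bound over pairs in $\partial D_{R/2}\times\partial D_R$ together with Lemma~\ref{lem:expoDecFKwired}. However, the step you flag as an obstacle is a genuine gap, not a technicality: with only $\p_{D_R}^{\wired}(A\cap B^\comp)\leq\p_{D_R}^{\wired}(B^\comp)\leq\varepsilon$ you obtain the additive bound $\p_{D_R}^{\wired}(A)\leq\p(A)+\varepsilon$, which does not imply the stated ratio bound and is useless in the regime where the lemma is applied (it is invoked in the coarse-graining arguments for events $A$, such as long connections, whose probability is far smaller than $\varepsilon$). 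So as written the proof does not reach the conclusion.

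The missing idea — and the way the paper closes it — is to keep the factor $\p_{D_R}^{\wired}(A)$ in the crossing term. Note first that any open path from $\partial D_{R/2}$ to $\partial D_R$ contains, after its last visit to $D_{R/2}$, a crossing of $D_R\setminus D_{R/2}$, so one may work with a crossing event that is increasing and measurable w.r.t.\ the edges outside $D_{R/2}\supset D$. Now condition on the configuration of the edges in $D$, which determines $\IF{A}$: for $q\geq 1$, given any such configuration and the wired exterior of $D_R$, the conditional law on the remaining edges is stochastically dominated by $\p_{D_R\setminus D}^{\wired}$, so the conditional probability of the crossing is at most $\p_{D_R\setminus D}^{\wired}(\partial D_{R/2}\leftrightarrow\partial D_R)\leq|\partial D_{R/2}||\partial D_R|e^{-\Cr{decayFK}R/2}$, \emph{uniformly over the configuration on $D$}. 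Averaging against $\IF{A}$ gives $\p_{D_R}^{\wired}(A\cap B^\comp)\leq\p_{D_R}^{\wired}(A)\,\varepsilon$, and the rearrangement then yields the claimed $\bigl(1-\varepsilon\bigr)^{-1}$ bound. Be aware that your suggested ``FKG-style manipulation'' points the wrong way: for two increasing events FKG gives $\p_{D_R}^{\wired}(A\cap B^\comp)\geq\p_{D_R}^{\wired}(A)\p_{D_R}^{\wired}(B^\comp)$, so it is the spatial Markov property combined with monotonicity in boundary conditions (a crossing bound uniform in the configuration on $D$), not positive association, that produces the required multiplicative factor.
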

\begin{proof}
	Notice first that, for $u\in\partial D_{R/2}$ and $v\in\partial D_R$, the distance between $u$ and $v$ is at least $R/2$. Then, partitioning according to whether the event $\partial D_R\leftrightarrow\partial D_{R/2}$ occurs, we get
	\begin{align*}
	\p_{D_R}^{\wired}(A)
	&\leq
	\p_{D_R}^{\wired}(A)\, \p_{D_R\setminus D}^{\wired}(\partial D_R\leftrightarrow\partial D_{R/2}) + \p(A)\\
	&\leq
	\p_{D_R}^{\wired}(A) |\partial D_{R/2}| |\partial D_R| e^{-\Cr{decayFK}R/2} + \p(A) ,
	\end{align*}
	where we used monotonicity in volume and in boundary conditions for the first inequality and Lemma~\ref{lem:expoDecFKwired}
	for the second one.
\end{proof}

\subsection{A Russo-like formula}
\label{app:Russo}

There exist various extensions of the Russo formula from Bernoulli percolation to FK percolation. However, we will need the following version, which we did not find in the literature. Recall that an edge \(e\) is pivotal for an event \(A\) in a configuration \(\omega\) if the value of \(\IF{\omega\in A}\) depends on the value of \(\omega_e\). Denote \(\textnormal{Piv}_A(\omega)\) the set of edges pivotal for \(A\) in \(\omega\).
\begin{lemma}
	\label{lem:RussoGen}
	Let $\p_{\vec{x},q}$ be the random-cluster measure on a finite graph $G$, with weights $(x_e)_{e\in E_G}$ and $q\geq 1$.
	Let $E\subset E_G$ a collection of edges in $G$. Denote by $\p_{\vec{x},q}^s$ the random-cluster measure obtained by modifying the weights $\vec x$ by setting $x_e=s,\ \forall e\in E$. Then, for $s_2>s_1$ and any nondecreasing event \(A\), we have
	\[
	\frac{\p_{\vec{x},q}^{s_2}(A)}{\p_{\vec{x},q}^{s_1}(A)}
	\geq
	\exp\Bigl(\int_{s_1}^{s_2} \frac{1}{s(1+s)} \sum_{e\in E} \p_{\vec{x},q}^s(e\in\textnormal{Piv}_A \given A)\dd s \Bigr) .
	\]
\end{lemma}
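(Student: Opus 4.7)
The plan is to establish the differential inequality
\[
\frac{d}{ds} \log \p_{\vec{x},q}^s(A) \geq \frac{1}{s(1+s)} \sum_{e \in E} \p_{\vec{x},q}^s(e \in \textnormal{Piv}_A \given A)
\]
and integrate it from $s_1$ to $s_2$. To lighten notation I abbreviate $\p^s \equiv \p_{\vec{x},q}^s$ and $\e^s \equiv \e_{\vec{x},q}^s$. Differentiating the explicit expression for $\p^s(A)$ and for $Z_s$ (both of which contain the factor $s^{\ouv_E(\omega)}$), one gets
\[
\frac{d}{ds} \log \p^s(A) = \frac{1}{s}\bigl(\e^s[\ouv_E \given A] - \e^s[\ouv_E]\bigr) = \frac{1}{s\,\p^s(A)} \sum_{e \in E} \mathrm{Cov}_s\bigl(\IF{\omega_e = 1}, \IF{A}\bigr),
\]
using $\ouv_E = \sum_{e\in E}\IF{\omega_e=1}$ in the second equality.

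Next I would estimate each per-edge covariance by conditioning on $\omega_{\setminus e}$. Writing $p_e := \p^s(\omega_e = 1 \given \omega_{\setminus e})$, the total-variance decomposition reads
\[
\mathrm{Cov}_s\bigl(\IF{\omega_e = 1}, \IF{A}\bigr) = \e^s\bigl[\mathrm{Cov}_s(\IF{\omega_e = 1}, \IF{A} \given \omega_{\setminus e})\bigr] + \mathrm{Cov}_s\bigl(p_e,\, \p^s(A \given \omega_{\setminus e})\bigr).
\]
Given $\omega_{\setminus e}$, $\omega_e$ is Bernoulli$(p_e)$, and monotonicity of $A$ together with the definition of pivotality forces $\IF{A}$ (viewed as a function of $\omega_e$) to equal $\IF{\omega_e = 1}$ when $e \in \textnormal{Piv}_A$ and to be independent of $\omega_e$ otherwise. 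A short case check then gives
\[
\mathrm{Cov}_s(\IF{\omega_e = 1}, \IF{A}\given\omega_{\setminus e}) = p_e(1-p_e)\,\IF{e \in \textnormal{Piv}_A}.
\]
For the outer term I would argue that both $\omega_{\setminus e}\mapsto p_e$ (which, thanks to $q\geq 1$, equals $s/(s+1)$ when the endpoints of $e$ are connected in $\omega_{\setminus e}$ and the smaller value $s/(s+q)$ otherwise) and $\omega_{\setminus e}\mapsto \p^s(A\given\omega_{\setminus e})$ are nondecreasing; viewed as functions of $\omega$ they are then increasing, so the FKG inequality for $\p^s$ (positively associated since $q\geq 1$) implies that this second term is nonnegative.

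Finally, finite energy yields $1 - p_e \geq 1/(1+s)$ uniformly in $\omega_{\setminus e}$, hence
\[
\e^s\bigl[p_e(1-p_e)\IF{e \in \textnormal{Piv}_A}\bigr] \geq \frac{1}{1+s}\,\e^s\bigl[p_e\IF{e \in \textnormal{Piv}_A}\bigr] = \frac{1}{1+s}\,\p^s(e \in \textnormal{Piv}_A,\, A),
\]
the last identity holding because on $\{e \in \textnormal{Piv}_A\}$ the increasing event $A$ occurs precisely when $\omega_e = 1$. Summing over $e \in E$, dividing by $\p^s(A)$ and substituting into the log-derivative formula yields the differential inequality; integrating from $s_1$ to $s_2$ and exponentiating gives the claim. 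The one delicate step is verifying the monotonicity of $\omega_{\setminus e}\mapsto \p^s(A\given\omega_{\setminus e})$ required for the FKG bound on the outer covariance; this needs a short case analysis distinguishing whether $e$ is pivotal at $\omega_{\setminus e}$ and at $\omega'_{\setminus e}\geq\omega_{\setminus e}$, with everything else being a routine computation.
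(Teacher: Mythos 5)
Your proof is correct, but it reaches the key inequality by a genuinely different route than the paper. Both arguments start from the same log-derivative identity $\frac{\dd}{\dd s}\log\p^s_{\vec x,q}(A)=\frac{1}{s\,\p^s_{\vec x,q}(A)}\sum_{e\in E}\mathrm{Cov}_s(\IF{\omega_e=1},\IF{A})$, use finite energy to extract the factor $1/(1+s)$, and exploit the identity $\{e\in\textnormal{Piv}_A,\,\omega_e=1\}=\{e\in\textnormal{Piv}_A\}\cap A$ for increasing $A$. The paper then bounds each covariance by writing it as $\p^s(\omega_e=1)\p^s(\omega_e=0)\bigl(\p^s(A\given\omega_e=1)-\p^s(A\given\omega_e=0)\bigr)$ and invoking a monotone coupling of the two conditioned measures (constructed in Appendix~\ref{app:Couplings}) to lower-bound the difference by $\p^s(e\in\textnormal{Piv}_A\given\omega_e=1)$. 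You instead condition on $\omega_{\setminus e}$, compute the conditional covariance exactly as $p_e(1-p_e)\IF{e\in\textnormal{Piv}_A}$ (legitimate, since pivotality is $\omega_{\setminus e}$-measurable and, for increasing $A$, the conditional indicator of $A$ is either constant or equal to $\IF{\omega_e=1}$), and discard the cross term $\mathrm{Cov}_s\bigl(p_e,\p^s(A\given\omega_{\setminus e})\bigr)$ by FKG after checking that both conditional functions are nondecreasing in $\omega_{\setminus e}$. The monotonicity check you flag does go through: writing $\p^s(A\given\omega_{\setminus e})=\IF{A\text{ holds with }\omega_e=0}+p_e\,\IF{e\in\textnormal{Piv}_A}$, the case analysis using monotonicity of $A$ and of $p_e$ (which needs $q\geq 1$) is routine. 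The trade-off is that your argument is self-contained modulo FKG and finite energy, avoiding the coupling lemma of the appendix, at the price of this extra monotonicity verification; the paper's coupling argument sidesteps that verification entirely but relies on the auxiliary coupling construction.
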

\begin{proof}
	First, we compute
	\begin{align}
	\frac{\dd}{\dd s}\log \p_{\vec{x},q}^s(A)
	&=
	\frac{1}{s\p_{\vec{x},q}^s(A)} \textnormal{Cov}_{\vec{x},q}^s(\mathds{1}_A,\ouv_E)\nonumber\\
	&=
	\frac{1}{s\p_{\vec{x},q}^s(A)} \Bigl( \sum_{e\in E} \p_{\vec{x},q}^s(\omega_e=1)
	\bigl(\p_{\vec{x},q}^s(A \given \omega_e=1) - \p_{\vec{x},q}^s(A)\bigr)\Bigr)\nonumber\\
	&=
	\frac{1}{s\p_{\vec{x},q}^s(A)} \Bigl( \sum_{e\in E} \p_{\vec{x},q}^s(\omega_e=1) \p_{\vec{x},q}^s(\omega_e=0)\notag\\
	&\pushright{\times\bigl(\p_{\vec{x},q}^s(A \given \omega_e=1) - \p_{\vec{x},q}^s(A \given \omega_e=0)\bigr)\Bigr).}\label{eq:deriv}
	\end{align}
	Consider a coupling $(\omega,\eta)\sim\Phi$ of $\p_{\vec{x},q}^s(\cdot \given \omega_e=1)$ and $\p_{\vec{x},q}^s(\cdot \given \omega_e=0)$ such that $\omega\geq\eta$ and $\omega\sim \p_{\vec{x},q}^s(\cdot \given \omega_e=1)$ and $\eta\sim \p_{\vec{x},q}^s(\cdot \given \omega_e=0)$. Then compute
	\begin{align*}
	\p_{\vec{x},q}^s( A \given \omega_e=1) - \p_{\vec{x},q}^s(A \given \omega_e=0)
	&=
	\Phi(\mathds{1}_A(\omega) - \mathds{1}_A(\eta)) \\
	&=
	\Phi(\mathds{1}_A(\omega) \mathds{1}_{A^\comp}(\eta))\\
	&\geq 
	\Phi(\IF{e\in\textnormal{Piv}_A}(\omega))\\
	&=
	\p_{\vec{x},q}^s(e\in\textnormal{Piv}_A \given \omega_e=1)\\
	&=
	\frac{\p_{\vec{x},q}^s(e\in\textnormal{Piv}_A,\omega_e=1)}{\p_{\vec{x},q}^s(\omega_e=1)}
	=
	\frac{\p_{\vec{x},q}^s(e\in\textnormal{Piv}_A,A)}{\p_{\vec{x},q}^s(\omega_e=1)} ,
	\end{align*}
	where we used, in the second line, that $A$ is increasing and $\omega\geq\eta$, so that $\eta\in A\implies\omega\in A$ and $\omega\in A^\comp\implies\eta\in A^\comp$ and thus $\mathds{1}_A(\omega)-\mathds{1}_A(\eta) = \mathds{1}_A(\omega)\mathds{1}_{A^\comp}(\eta)$; we have also used the fact that $e\in\textnormal{Piv}_A(\omega)\implies \omega\in A$ and $\eta\in A^\comp$ for the inequality.
	Plugging this into~\eqref{eq:deriv} gives
	\begin{align*}
	\frac{\dd}{\dd s}\log(\p_{\vec{x},q}^s(A))
	&\geq
	\frac{1}{s\p_s(A)} \Bigl( \sum_{e\in E} \p_{\vec{x},q}^s(\omega_e=0) \p_{\vec{x},q}^s(e\in\textnormal{Piv}_A,A)\Bigr)\\
	&=
	\frac{1}{s} \Bigl( \sum_{e\in E} \p_{\vec{x},q}^s(\omega_e=0) \p_{\vec{x},q}^s(e\in\textnormal{Piv}_A \given A)\Bigr)\\
	&\geq
	\frac{1}{s(1+s)} \Bigl( \sum_{e\in E} \p_{\vec{x},q}^s(e\in\textnormal{Piv}_A \given A) \Bigr),
	\end{align*}
	where the last inequality follows from finite energy of $\p_{\vec{x},q}^s$. Integrating both sides between $s_1$ and $s_2$ and taking the exponential leads to the desired inequality.
\end{proof}


\section{Renewal for long-range memory process}
\label{sec:RenLongRange}

The goal of this appendix is to present a way to factorize measures on sequences with exponential mixing. The procedure employed is a representation of the mixing property as a memory-percolation picture. The ideas used here are inspired from the construction done in \cite{Comets+Fernandez+Ferrari-2002}, but our set-up being a bit different (we deal with general kernels instead of probability kernels and we need ``finite volume'' estimates rather than estimates on the stationary measure), the results from \cite{Comets+Fernandez+Ferrari-2002} do not immediately apply, so we provide here a self-contained exposition. 

\subsection{Setting, Notations and Definition}
\label{app:Ren_Setting}

We will work with $\alp$ an alphabet (finite or countable), and $\bc_L,\bc_R$ two sets containing $\emptyset$ (finite or countable). The objects of study will be measures on sequences of the form \[(b_L,x_1,x_2,\dots,x_n,b_R)\in \bc_L\times\alp^n\times\bc_R.\]
We will say and assume:
\begin{itemize}
	\item elements of $\alp$ are called letters, sequences (or concatenation) of letters are called words;
	\item \(\alp\) does not contain words;
	\item for $\bfx \in\alp^n$, denote \(|\bfx|=n\) the length of the word $\bfx$.
\end{itemize}
As we work with sequences, it will be useful to have a few operations on them. We first define the concatenation operation.
\begin{definition}
	For $\mathbf{x}=(\dots,x_{k-1},x_{k})$ a right-finite sequence and $\mathbf{y}=(y_l,y_{l+1},\dots)$ a left-finite sequence, the \emph{concatenation} of $\mathbf{x}$ and $\mathbf{y}$ is the sequence
	\[
		\mathbf{x}\sqcup\mathbf{y}=(\dots,x_{k-1},x_{k},y_l,y_{l+1},\dots).
	\]
	By convention, the labels of the new sequence will be chosen to be consistent with the labels of $\mathbf{x}$:
	\[
		(\mathbf{x}\sqcup\mathbf{y})_{i}
		=
		\begin{cases}
			x_i 		& \text{ if } i\leq k\\
			y_{l+i-k-1} & \text{ if } i > k
		\end{cases}.
	\]
\end{definition}
Elements of \(\alp\) will be considered as one-element sequences for concatenation.
We then define the extraction operation.
\begin{definition}
	For $k\leq l\leq m\leq n \in\Z$ and $\mathbf{x}=(\dots,x_{k},\dots,x_{l},\dots,x_{m},\dots,x_n,\dots)$ a sequence, the \emph{$(l,m)$-extraction} of $\bfx$ is the sequence
	\[
		\bfx_l^m=(x_l,x_{l+1},\dots,x_{m-1},x_m).
	\]
\end{definition}
We will use the following notations:
\begin{itemize}
	\item $\calS=\bigcup_{n\geq 0}\alp^n$ the set of finite sequences ($\alp^0=\{\emptyset\}$), and $\calS^*=\bigcup_{n\geq 1}\alp^n$ the set of non-empty finite sequences;
	\item $\calS^+ = \alp^{\Z_{\geq 0}}$, resp.\ $\calS^-= \alp^{\Z_{< 0}}$, the set of right-infinite, resp.\ left-infinite, sequences;
	\item $\compl{\bc}_L = \setof{b_L\sqcup \mathbf{x}}{b_L\in\bc_L, \mathbf{x}\in\calS}$ and $\compl{\bc}_R=\setof{\mathbf{x}\sqcup b_R}{b_R\in\bc_R, \mathbf{x}\in\calS}$.
\end{itemize}

In all this Appendix, when not explicitly said otherwise, $b_L,b_R,\bfx$ will always denote elements of $\bc_L,\bc_R,\calS^*$ and $\bfx=(x_1,\dots,x_n)$.

\bigskip

We will consider measures $\subProb_n$ on $\bc_L\times\calS^n\times\bc_R$ that are given by a kernel $\subProb:\compl{\bc}_L\times (\alp\cup\bc_R)\to \R_+$ and a weight function \(\subProb:\bc_L\to \R_+\) (for simplicity, we denote both by the same letter...). Namely, writing
\[
	\subProb(b,s)\equiv\subProb(s\given b),
\]
we assume that
\begin{align*}
\subProb_n(b_L\sqcup\bfx\sqcup b_R)
&=
\subProb(b_L)\Bigl( \prod_{k=1}^{n} \subProb(x_k\given b_L\sqcup\bfx_1^{k-1}) \Bigr)
\subProb(b_R\given b_L\sqcup\bfx_1^n)
\end{align*}
To lighten the notations, we will sometimes write
\[
	\subProb(b_L\sqcup \bfx_1^n) = \subProb(b_L)\prod_{i=1}^{n}\subProb(x_i\given b_L\sqcup \bfx_1^{i-1}).
\]

We will make the following additional assumptions on $\subProb$:
\begin{enumerate}[label=(H\arabic*)]
	\item \label{hyp:unifSumability} uniform summability: there exists $K<\infty$ such that
	\[
		\sum_{s\in\alp\cup\bc_{R/L}}\subProb(s\given b) \leq K
	\]
	for all $b\in\compl{\bc}_{L/R}$;
	\item \label{hyp:expMix} ratio exponential mixing: there exist $L_0,c>0$ such that
	\[
		\Bigl|1-\frac{\subProb(\mathbf{z}|b_2\sqcup\bfx)}{\subProb(\mathbf{z}|b_1\sqcup\bfx)}\Bigr| \leq e^{-c |\bfx|}
	\]
	for all $n\geq L_0,\mathbf{z},\bfx=(x_1,\dots,x_n)\in\calS,b_1,b_2\in\compl{\bc}_L$;
	\item \label{hyp:subExpDecOfMass} sub-exponential decay (or growth) of the mass:
	\[
		\lim_{n\to\infty} \frac{1}{n}\log(\mu_n) = 0,
	\]
	where $\mu_n=\sum_{b_L,b_R}\sum_{\bfx\in\alp^n}\subProb_n(b_L\sqcup\bfx\sqcup b_R)$.
	\item \label{hyp:azero} there exist \(s\in\alp\) and \(\epsilon_0>0\) such that $\inf_{l\in\compl{\bc}_L} \subProb(s\given l)\geq\epsilon_0$.
\end{enumerate}

\subsection{The Memory Percolation Picture}

A stick-percolation configuration on an interval $J\subset\R$ is a partitioning of $J$ into disjoints open intervals (called \emph{clusters}) and their endpoints (called \emph{cuts}). Given a stick-percolation configuration $\omega$, denote by $\cut(\omega)$ the set of its cuts. We will consider stick-percolation configurations induced by functions $I:\Z\to\Z_{\geq 0}$ via the following procedure: to every $k\in\Z$, associate the (open) interval $(k-I_k-\frac{1}{2},k+\frac{1}{2})$. Then, the connected components of the union of those intervals give the clusters of the stick-percolation configuration, while the complement of this union gives the set of cuts. We will say that an edge $e_k=(k,k+1)$ is a cut if $k+\frac{1}{2}$ is.

This definition extends straightforwardly for stick-percolation configurations on finite subsets of $\Z$.
\begin{figure}[h]
	\label{fig:StickPerco}
	\includegraphics[scale=0.75]{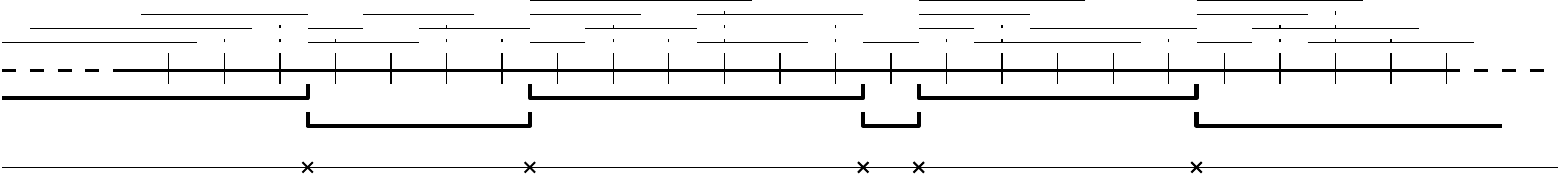}
	\caption{Left stick-percolation configuration and the cut process.}
\end{figure}

\bigskip
With this in hand, we augment each sequence $b_L\sqcup\bfx\sqcup b_R$, \(\bfx\in\alp^n\), with a stick-percolation realization on $[0,n+1]\cap\Z$. This will be done with the help of a memory threshold sequence (following~\cite{Comets+Fernandez+Ferrari-2002}). Let $\bfx=(x_1,\dots,x_n)\in\calS^*, b\in\compl{\bc}_L$ and define
\begin{align*}
a_0(s\given b\sqcup\bfx)\equiv a_0(s \given *)
&=
\inf_{l\in\compl{\bc}_L} \subProb(s\given l) ,\\
a_k(s \given b\sqcup\bfx)\equiv a_k(s \given \bfx_{n-k+1}^{n})
&=
\inf_{l\in\compl{\bc}_L} \subProb(s\given l\sqcup\mathbf{x}_{n-k+1}^{n}) , \\
a_k(b\sqcup\mathbf{x})
&=
\sum_{s\in\alp}a_{k}(s \given b\sqcup\mathbf{x}_{1}^{n}) ,
\end{align*}
if $0\leq k\leq n$, and $a_k(s \given b\sqcup \mathbf{x}) = a_{\infty}(s \given b\sqcup \mathbf{x}) = \subProb(s\given b\sqcup \mathbf{x})$ for $k> n$.
Under Assumption~\ref{hyp:unifSumability}, all those numbers are in $ [0,K]$ for all $k$ and $a_k(s \given b\sqcup \mathbf{x})$, $a_k(b\sqcup \bfx)$ are nondecreasing sequences in $k$ for any $s$, $b$ and $\bfx$.
One can thus consider the ``covered mass at depth $k$'':
\begin{align*}
\Delta_0(s \given b\sqcup \bfx) &= a_0(s \given *),\\
\Delta_k(s \given b\sqcup \bfx) &= a_k(s \given b\sqcup \bfx)- a_{k-1}(s \given b\sqcup \bfx),\\
\Delta_{k}(b\sqcup \bfx)&=\sum_{s\in\alp} \Delta_k(s \given b\sqcup \bfx).
\end{align*}
All these definitions are for $s\in\alp$, but they extend straightforwardly to the case where $s$ is replaced by $b'\in\bc_R$.
Observe that Assumption~\ref{hyp:azero} is equivalent to the existence of \(\epsilon_0>0\) such that $a_0\geq\epsilon_0$.

Now, noticing that
\begin{align*}
\subProb(s\given b\sqcup\bfx_1^{n})
&=
a_{n+1}(s\given b\sqcup\bfx_1^{n})\\
&=
a_{0}(s\given b\sqcup\bfx_1^{n}) + \sum_{k=0}^{n} \bigl\{ a_{k+1}(s\given b\sqcup\bfx_1^{n}) - a_{k}(s\given b\sqcup\bfx_1^{n}) \bigr\}\\
&=
\sum_{k=0}^{n+1}\Delta_{k}(s\given b\sqcup\bfx_1^{n}) ,
\end{align*}
one can write
\begin{align*}
\subProb_n(b_L\sqcup\bfx\sqcup b_R)
&=
\subProb(b_L) \Bigl( \prod_{k=1}^{n} \subProb(x_k\given b_L\sqcup\bfx_1^{k-1}) \Bigr) \subProb(b_R\given b_L\sqcup\bfx_1^{n})\\
&=
\subProb(b_L) \Bigl( \prod_{k=1}^{n} \sum_{i= 0}^{k} \Delta_i(x_k \given b_L\sqcup \bfx_{1}^{k-1}) \Bigr) \Bigl( \sum_{i= 0}^{n+1} \Delta_i(b_R \given b_L\sqcup \bfx_{1}^{n}) \Bigr)\\
&=
\subProb(b_L) \sum_{I\in\calI_n} \Delta_{I_{n+1}}(b_R \given b_L\sqcup \bfx_{1}^{n}) \prod_{k=1}^{n}\Delta_{I_k}(x_k \given b\sqcup \bfx_{1}^{k-1}),
\end{align*}
where $\calI_n = \bsetof{I:\{0,1,\dots,n+1\}\to\Z_{\geq 0}}{I_k\equiv I(k)\leq k}$. Now enters the memory-percolation picture: $I$ can be seen as the realization of a stick-percolation. In this way, each $I$ can be associated to a cluster set that we will represent as the sequence of the lengths of its clusters:
\[
	\clusterSet_n = \bigcup_{k=0}^{n+1}\bsetof{(l_0,\dots,l_k)}{l_i\geq 1,\ \sum_{i=0}^{k}l_i=n+2}.
\]
We will write $I\sim (l_0,\dots,l_k)$ if the cuts of the stick-percolation configuration induced by $I$ are $(l_0-1,l_0), (l_0+l_1-1,l_0+l_1),\dots,(l_0+\dots+l_{k-1}-1,l_0+\dots+l_{k-1})$. One can thus see $\subProb_n$ as a measure on $(\bc_L\sqcup\alp^n\sqcup \bc_R)\times \clusterSet_{n}$ (equipped with the discrete sigma-algebra):
\begin{multline}\label{eq:PsinOnSticks}
\subProb_n(b_L\sqcup\bfx\sqcup b_R, (l_0,l_1,\dots,l_k)) \\
=
\subProb(b_L) \sum_{\substack{I\in\calI_n\\I\sim(l_0,\dots,l_k)}} \Delta_{I_{n+1}}(b_R \given b_L\sqcup \bfx_{1}^{n}) \prod_{i=1}^{n} \Delta_{I_i}(x_i \given b_L\sqcup \bfx_{1}^{i-1}).
\end{multline}

\begin{figure}[h]
	\label{fig:seqWithMemory}
	\includegraphics[scale=0.8]{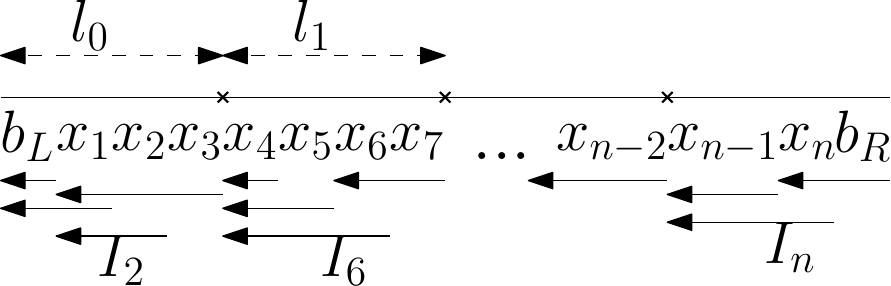}
	\caption{Sequence with memory.}
\end{figure}

Notice that, for a given cluster realization $(l_0,\dots,l_k)$, the value of the weight $\Delta_{I_i}(x_i \given b_L\sqcup \bfx_{1}^{i-1})$ for a given $i$ is independent of the value of \(x_j\) for \(j<i-I_i\). It is this essential property that will be exploited in our analysis.

Define
\begin{align}
\begin{split}
\label{eq:weights}
\rho_L(b_L\sqcup \bfx_1^n)
&=
\subProb(b_L) \sum_{I\sim (n+1)}\prod_{k=1}^{n}\Delta_{I_k}(x_k \given b_L\sqcup \bfx_{1}^{k-1})\\
\rho_R(\bfx_1^n\sqcup b_R) ,
&=
\sum_{I\sim (1,n+1)}\Delta_{I_{n+1}}(b_R \given \bfx_{1}^{n})\prod_{k=1}^{n}\Delta_{I_k}(x_k \given \bfx_{1}^{k-1})\\
\statMes(\bfx_1^n) ,
&=
\sum_{I\sim (1,n)}\prod_{k=1}^{n}\Delta_{I_k}(x_k \given \bfx_{1}^{k-1}) .
\end{split}
\end{align}
These are obviously nonnegative measures on, respectively, $\bc_L\times\calS$,$\calS\times\bc_R$ and $\calS$. Moreover, denoting $M+1$ the (variable) number of clusters in the percolation configuration, and defining $A=\{M\geq 1, l_0+l_M<n+2\}$, we have
\begin{align}
&\subProb_n(b_L\sqcup\bfx\sqcup b_R, A) = \sum_{L=2}^{n+1}\sum_{\substack{l\geq 1,r\geq 1\\ r+l =L}}\subProb_n(b_L\sqcup\bfx\sqcup b_R, l_0=l, l_M =r) \notag\\
&\qquad
= \sum_{\substack{r+l=L=2\\r,l\geq 1}}^{n+1}\sum_{k=1}^{n+2-L}\sum_{\substack{l_1,\dots,l_k\geq 1\\ \sum l_i= n+2-L}}\subProb(b_L)\Bigl\{\sum_{I\sim(l)}\prod_{i=1}^{l-1}\Delta_{I_i}(x_i \given b_L\sqcup \bfx_{1}^{i-1})\Bigr\}\times \notag\\
&\hspace{1.2cm}
\times\prod_{j=1}^{k}\Bigl\{\sum_{I\sim(1,l_j)} \prod_{i=1}^{l_j} \Delta_{I_i}\bigl(x_{l+l_1\dots+l_{j-1}+i}\bgiven\bfx_{l+l_1\dots+l_{j-1}+1}^{l+l_1+\dots+l_{j-1}+i-1}\bigr) \Bigr\}\times \notag\\
&\hspace{1.2cm}
\times\Bigl\{ \sum_{I\sim(1,r)} \Delta_{I_{r}}\bigl(b_R\bgiven\bfx_{l+\dots+l_{k}+1}^{l+\dots+l_{k}+r-1} \bigr) \prod_{i=1}^{r-1} \Delta_{I_i}\bigl(x_{l+\dots+l_{k}+i}\bgiven\bfx_{l+\dots+l_{k}+1}^{l+\dots+l_{k}+i-1}\bigr) \Bigr\} \notag\\
&\qquad
= \sum_{\substack{r+l=L=2\\r,l\geq 1}}^{n+1} \sum_{k=1}^{n+2-L} \sum_{\substack{l_1,\dots,l_k\geq 1\\ \sum_i l_i= n+2-L}} \rho_L\bigl(b_L\sqcup \bfx_1^{l-1} \bigr) \prod_{j=1}^{k} \statMes \bigl(\bfx_{l+\dots+l_{j-1}+1}^{l+\dots+l_{j-1}+l_j}\bigr)\times \notag\\
&\hspace{1.2cm}
\times\rho_R\bigl(\bfx_{l+\dots+l_{k}+1}^{l+\dots+l_{k}+r-1}\sqcup b_R \bigr).
\label{eq:factorisation}
\end{align}

\subsection{Decoupling of Random Sequences}
\label{sec:DecouplingRandomSequences}
We now present a factorisation result for weakly coupled measures. We always see $\subProb_n$ as a measure on $(\bc_L\times\alp^n\times\bc_R)\times\calI_n$ (with the discrete $\sigma$-algebra)
as the percolation picture is induced by the memory values ($I\in\calI_n$) and thus all weights that we consider can be expressed as sums of weight of elements in $(\bc_L\times\alp^n\times\bc_R)\times\calI_n$.

The idea being to approximate $\subProb_n$ by a factorized measure, we introduce the product measure $\p = \statMes^{\mathbb{Z}_{>0}}$, $\bfX=(\bfX_1,\bfX_2,\dots)$ and $\bfB_{L/R}$ sequences ``sampled'' from $\p,\rho_{L/R}$ (one can just think as if they were random variables, and look at them as a convenient way of defining certain sets). Then define $\mathcal{R}_L=\{\exists k\geq 1:\sum_{i=1}^k|\bfX_i|=L\}$ and
\begin{gather}
	\label{eq:factoMeasDef}
	\decouplLaw = \rho_L\times\p\times\rho_R,\\
	\decouplLaw_n(\,\cdot\,)= \decouplLaw(\,\cdot\,,|\bfB_L|+|\bfB_R|<n,\calR_{n-|\bfB_L|-|\bfB_R|}),
\end{gather}
where $\decouplLaw_n$ is understood as a measure on $(\bc_L,\alp^n,\bc_R)\times\clusterSet_n$.
Percolation estimates and the construction described in the previous section allow one to prove the following result.

\begin{lemma}
	\label{lem:decoupling}
	Let $(\subProb_n)_n$ and $\subProb$ be as described in Section~\ref{app:Ren_Setting} and such that Conditions~\ref{hyp:unifSumability}, \ref{hyp:expMix}, \ref{hyp:subExpDecOfMass} and~\ref{hyp:azero} are satisfied. Let $\rho_L,\rho_R,\statMes$ be defined by~\eqref{eq:weights}. Then, $\statMes$ is a probability measure and
	\begin{enumerate}[label=(\roman*)]
		\item \label{item:expDecOfSeqWeights} there exist $\Cl[csts]{wCDec},\Cl{wExpDec}>0$ such that
		\[
				\rho_R(|\bfB_R|=l) \vee \rho_L(|\bfB_L|=l) \vee \statMes(|\bfX|=l)\leq \Cr{wCDec}e^{-\Cr{wExpDec}l};
		\]
		\item \label{item:totalVarEstimate} there exist $\Cl[csts]{TVC},\Cl{TVexpc}>0$ such that, for any \(f:\bc_L\times\alp^n\times\bc_R\to \R\) bounded,
		\[
			\big|\sum f(\bfy)\subProb_n(\bfy)-\sum f(\bfy)\decouplLaw_n(\bfy)\big|\leq \norm{f}_{\infty}\Cr{TVC}e^{-\Cr{TVexpc}n},
		\] where the sum is over \(\bfy \in \bc_L\times\alp^n\times\bc_R\).
	\end{enumerate}
\end{lemma}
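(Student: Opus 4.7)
The proof combines the memory-percolation representation~\eqref{eq:PsinOnSticks} with the factorization identity~\eqref{eq:factorisation}. Let $A$ denote the event $\{M \geq 1,\, l_0 + l_M < n+2\}$ which, since the cluster lengths always sum to $n+2$, amounts to requiring that the stick-percolation configuration contains at least three clusters (the left-boundary cluster containing $b_L$, the right-boundary cluster containing $b_R$, and at least one middle cluster). Comparing~\eqref{eq:factorisation} term by term with the definition~\eqref{eq:factoMeasDef} of $\decouplLaw_n$ shows immediately that, as measures on $\bc_L\times\alp^n\times\bc_R$,
\[
\subProb_n(\,\cdot\,, A) \;=\; \decouplLaw_n(\,\cdot\,).
\]
For any bounded $f$, this already gives
\[
\Bigl| \sum_{\mathbf{y}} f(\mathbf{y})\bigl(\subProb_n(\mathbf{y}) - \decouplLaw_n(\mathbf{y})\bigr) \Bigr| \;\leq\; \|f\|_\infty\, \mu_n(A^c),
\]
so the whole problem reduces to establishing~\ref{item:expDecOfSeqWeights} and then bounding $\mu_n(A^c)$ exponentially.

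The key ingredient for~\ref{item:expDecOfSeqWeights} is the tail estimate
\[
\sum_{j \geq k+1} \Delta_j(s \given \mathbf{z}) \;=\; \subProb(s\given b\sqcup\mathbf{z}) - a_k(s\given\mathbf{z}) \;\leq\; C\, e^{-c k} \qquad (k \geq L_0),
\]
which follows directly from the ratio exponential mixing~\ref{hyp:expMix} together with the uniform summability~\ref{hyp:unifSumability}. In the stick-percolation picture, producing a single cluster of length $\ell$ imposes a ``cover'' constraint on the $I_k$'s (their sticks must jointly cover the $\ell-1$ internal cut positions), forcing a substantial fraction of total reach $\sum I_k$; one can therefore extract exponential decay in $\ell$ from the product $\prod_k \Delta_{I_k}(x_k|\cdot)$ by an inductive peeling argument, splitting at each position according to whether the reach is short (bounded through~\ref{hyp:unifSumability}) or long (bounded through the tail estimate above), and using~\ref{hyp:azero} to guarantee enough ``renewal mass'' at depth $0$ to prevent the number of short-reach positions from proliferating. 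This yields $|\rho_L|_l \vee |\rho_R|_l \vee |\statMes|_l \leq C\, e^{-c l}$, where $|\rho|_l := \sum_{\mathbf{y}:|\mathbf{y}|=l} \rho(\mathbf{y})$. The fact that $\statMes$ is a \emph{probability} measure then follows by a generating-function argument: summing~\eqref{eq:factorisation} over $\mathbf{y}$ gives
\[
\sum_{n \geq 0} \mu_n(A)\, z^n \;=\; \hat\rho_L(z)\, \hat\rho_R(z)\,\frac{\hat\statMes(z)}{1-\hat\statMes(z)},
\]
where $\hat\rho_L(z) = \sum_l |\rho_L|_l\, z^l$ and similarly for $\hat\rho_R$ and $\hat\statMes$. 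The three series on the right are analytic in a disc of radius $>1$ by the decay just proved; combined with the bound $\mu_n(A^c) = O(e^{-cn})$ (below) and the sub-exponential growth assumption~\ref{hyp:subExpDecOfMass}, this forces $\sum_n \mu_n z^n$ to have radius of convergence exactly $1$, which in turn forces $\hat\statMes(1)=1$, i.e.\ $\statMes$ is a probability measure.

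For~\ref{item:totalVarEstimate} it remains to bound $\mu_n(A^c)$. Since the cluster lengths sum to $n+2$, we have $A^c = \{M=0\} \cup \{M=1\}$: in the first case $b_L \sqcup \bfx \sqcup b_R$ is a single cluster of length $n+2$, whose weight is controlled by the same single-cluster estimate as in~\ref{item:expDecOfSeqWeights} and decays as $C e^{-cn}$; in the second case the configuration consists of exactly two clusters (the two boundary ones), so $\mu_n(M=1) = \sum_{l+r = n+2} |\rho_L|_{l-1}\, |\rho_R|_{r-1} \leq C(n+2) e^{-cn} \leq C' e^{-c' n}$ by the exponential tails of $\rho_L$ and $\rho_R$.

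The main technical obstacle is the single-cluster exponential bound underlying~\ref{item:expDecOfSeqWeights}: one must extract exponential decay from the cover constraint on the $I_k$'s without losing it to the combinatorial entropy of the compatible stick configurations. The proper way to do this is the peeling argument sketched above, exploiting the interplay between~\ref{hyp:unifSumability}, \ref{hyp:expMix} and~\ref{hyp:azero}; everything else is bookkeeping on~\eqref{eq:factorisation} and a standard singularity analysis.
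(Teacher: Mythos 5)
Your plan follows essentially the same route as the paper's proof: the identity \(\subProb_n(\,\cdot\,,A)=\decouplLaw_n\) read off from~\eqref{eq:factorisation}, which reduces Item~\ref{item:totalVarEstimate} to an exponential bound on \(\subProb_n(A^\comp)\); the single-cluster (no-cut) exponential estimate built from the mixing-based tail on the memory depth (Hypothesis~\ref{hyp:expMix} with~\ref{hyp:unifSumability}) and the uniform cut probability supplied by~\ref{hyp:azero}; and the renewal/generating-function argument using~\ref{hyp:subExpDecOfMass} to force \(\statMes\) to be a probability measure. Your ``peeling'' sketch is exactly the paper's exploration argument (uniform conditional tail of the reach, uniform probability \(\epsilon\) of a cut at the current frontier, then the dichotomy between many exploration steps and an exponential-Markov bound on the total reach), so the only difference is that you leave the details of this core step implicit.
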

\begin{proof}
	We start by showing Item~\ref{item:expDecOfSeqWeights}, as the second point follows from it and~\eqref{eq:factorisation}.
	
	\smallskip
	To lighten notations, we will use the notation \(s_{k}^{n}\equiv(s_i)_{i=k}^{n}\) for any kind of sequence (not just words) and write \(s_{k}^{n}=\tilde{s}_{k}^{n}\) instead of: \(s_i=\tilde{s}_i\) for \(i=k,k+1,...,n\).
	First observe that, using~\ref{hyp:expMix} and the definition of $a_n$,
	\begin{align}
	\label{eq:seqElExpMix}
		\Bigl| \frac{a_{n}(s\given\bfx_{1}^{n})}{\subProb(s\given b\sqcup\bfx_{1}^{n})} - 1 \Bigr|
		=
		1 - \frac{a_{n}(s\given\bfx_{1}^{n})}{\subProb(s\given b\sqcup\bfx_{1}^{n})}
		\leq
		e^{-cn},
	\end{align}
	uniformly in $b\in\compl{\bc}_L$ and $s\in\alp$ (or $s\in\bc_R$) whenever $n\geq L_0$. Thus, for any $b\in\compl{\bc}_L$ and $\bfx\in\alp^n, n\geq L_0$,
	\begin{align*}
	\sum_{s\in\alp} \bigl\{ \subProb(s\given b\sqcup\bfx_{1}^{n}) - a_{n}(s\given\bfx_{1}^{n}) \bigr\}
	&\leq e^{-cn} \sum_{s\in\alp} \subProb(s\given b\sqcup\bfx_{1}^{n}).
	\end{align*}
	This and the fact that \(a_n\geq a_0\geq \epsilon_0\) imply that
	\begin{align}
	\label{eq:seqExpMix}
		\frac{\sum_{s\in\alp} a_{n}(s\given\bfx_{1}^{n})}{\sum_{s\in\alp} \subProb(s\given b\sqcup\bfx_{1}^{n})}
		\geq
		\begin{cases}
			1-e^{-cn}	& \text{ if } n\geq L_0, \\
			\frac{\epsilon_0}{K}	& \text{ if } n<L_0.
		\end{cases}
	\end{align}
	We can then use \eqref{eq:seqElExpMix} to obtain a ``uniform'' exponential decay estimate on $I_k$: for $L_0\leq l< k$,
	\begin{align}
		&\subProb_n(I_k\leq l, I_{k+1}^{n+1}=r_{k+1}^{n+1}) = \notag\\
		&\qquad = \sum_{b_L,b_R,\bfx_1^n}\subProb(b_L\sqcup\bfx_1^{k-1})a_{l}(x_k\given b_L\sqcup\bfx_1^{k-1})\prod_{i=k+1}^{n}\Delta_{r_i}(x_i\given b_L\sqcup \bfx_1^{i-1})\notag\\
		&\qquad \geq \sum_{b_L,b_R,\bfx_1^n}\subProb(b_L\sqcup\bfx_1^{k}) (1-e^{-cl})\prod_{i=k+1}^{n}\Delta_{r_i}(x_i\given b_L\sqcup \bfx_1^{i-1})\notag\\
		&\qquad = (1-e^{-cl})\subProb_n(I_{k+1}^{n+1}=r_{k+1}^{n+1}) .
		\label{eq:UnifExpDecComput}
	\end{align}
	For $l\leq L_0$, the same computation and \eqref{eq:seqExpMix} gives
	\[
	\subProb_n(I_k\leq l, I_{k+1}^{n+1}=r_{k+1}^{n+1}) \geq \epsilon_0\subProb_n(I_{k+1}^{n+1}=r_{k+1}^{n+1}).
	\]
	Reformulating, one has
	\begin{equation}
	\label{eq:UnifExpDec}
		\frac{\subProb_n(I_k> l, I_{k+1}^{n+1}=r_{k+1}^{n+1})}{\subProb_n(I_{k+1}^{n+1}=r_{k+1}^{n+1})}
		\leq
		\begin{cases}
			e^{-cl}			& \text{ if } l\geq L_0,\\
			1-\frac{\epsilon_0}{K}	& \text{ if } l<L_0.
		\end{cases}
	\end{equation}
	For $0< i\leq j\leq n+1$, let $X_{[i,j]}=\max\setof{i-m+I_m}{i\leq m\leq j}$ be ``the distance reached by $[i,j]$''; note that it is nonnegative. Doing (almost) the same computation as in~\eqref{eq:UnifExpDecComput}, one obtains the following
	\begin{claim}
		\label{claim:UnifExpDec}
		There exist $c>0, L_1\geq 0$ such that,
		\[
			\frac{\subProb_n(X_{[i,j]}> l, I_{j+1}^{n+1}=r_{j+1}^{n+1})}{\subProb_n(I_{j+1}^{n+1}=r_{j+1}^{n+1})}
			\leq
			\begin{cases}
				e^{-cl}					& \text{ if } l\geq L_1,\\
				1 - \bigl(\frac{\epsilon_0}{K}\bigr)^{L_1}	& \text{ if } l<L_1,
			\end{cases}
		\]
		uniformly in $i$, $j$ and $r_{j+1}^{n+1}$. In particular, there exist $C\geq 0, c>0$ such that:
		\[
		\frac{\subProb_n(X_{[i,j]}> l, I_{j+1}^{n+1}=r_{j+1}^{n+1})}{\subProb_n(I_{j+1}^{n+1}=r_{j+1}^{n+1})}
		\leq
		Ce^{-cl}.
		\]
	\end{claim}
	We will also need a uniform cut estimate.
	\begin{claim}
		\label{claim:UnifCut}
		There exists \(\epsilon>0\) such that
		\[
			\frac{\subProb_n(X_{[i,j]}= 0, I_{j+1}^{n+1}=r_{j+1}^{n+1})}{\subProb_n(I_{j+1}^{n+1}=r_{j+1}^{n+1})} \geq \epsilon ,
		\]
		uniformly in $i$, $j$ and $r_{j+1}^{n+1}$.
	\end{claim}
	\begin{proof}
		Proceeding as in~\eqref{eq:UnifExpDecComput},
		\begin{align*}
			\subProb_n(X_{[i,j]}= 0, I_{j+1}^{n+1}=r_{j+1}^{n+1})
			&=
			\subProb_n(I_{i+k}\leq k, 0\leq k\leq j-i, I_{j+1}^{n+1}=r_{j+1}^{n+1})\\
			&\geq
			\Bigl(\frac{\epsilon_0}{K}\Bigr)^{L_0}\prod_{k=L_0}^{j-i}(1-e^{-ck})\subProb_n(I_{j+1}^{n+1}=r_{j+1}^{n+1})\\
			&\geq
			\subProb_n(I_{j+1}^{n+1}=r_{j+1}^{n+1})\Bigl(\frac{\epsilon_0}{K}\Bigr)^{L_0}\prod_{k=1}^{\infty}(1-e^{-ck})\\
			&=
			\epsilon \subProb_n(I_{j+1}^{n+1}=r_{j+1}^{n+1}),
		\end{align*}
		since the infinite product converges.
	\end{proof}
	We now use Claims~\ref{claim:UnifExpDec} and~\ref{claim:UnifCut} to implement an exploration argument which will imply that having no cuts in a long interval carries an exponentially small measure. This in turn implies exponential decay of $\rho_L$, $\rho_R$ and $\statMes$ (Item~\ref{item:expDecOfSeqWeights}). Fix $l$, $n$ large enough and $m\in[0,n+1]\cap\Z$ such that $[m-l,m]\subset[0,n+1]$.
	Define
	\[
		D_{m}(l) = \{[m-l,m]\cap\cut = \emptyset\} .
	\]
	We want to prove the following
	\begin{claim}\label{claim:expDecNoCut}
		There exist $L_2\geq 0$ and $c>0$ such that, for all $l\geq L_2$ and $n \geq m\geq l$,
		\[
			\subProb_n(D_{m}(l)) \leq e^{-cl} .
		\]
	\end{claim}
	\begin{proof}
		The idea is the following: look at the furthest point reached by $[m,n+1]$; call it $i_1$. With measure at least $\epsilon$, $e_{i_1-1}$ is a cut. If not, look at the furthest point reached by $[i_1,n+1]$, and so on and so forth. To make this precise, we introduce
		\begin{gather*}
			Y_1=X_{[m,n+1]}, Y_2=X_{[m-Y_1,m-1]},Y_3=X_{[m-Y_1-Y_2,m-Y_1-1]},\dots,\\
			S_k = \sum_{i=1}^{k} Y_i,\ \text{ so that }\ Y_{k}=X_{[m-S_{k-1},m-S_{k-2}-1]},\\
			T = \min\setof{k}{S_k\geq l}.
		\end{gather*}
		All these quantities are functions of the memory configuration $I$. Now, for any $1>\delta>0$,
		\begin{align*}
			\subProb_n(D_{m}(l))
			&=
			\subProb_n(D_{m}(l),T\geq \delta l) + \subProb_n(D_{m}(l),T< \delta l)\\
			&\leq
			(1-\epsilon)^{\delta l} + \subProb_n(T< \delta l),
		\end{align*}
		via the uniformity in Claim~\ref{claim:UnifCut}. Finally, for $t> 0$,
		\begin{align*}
			\subProb_n\bigl( e^{tS_{\delta l}} \bigr) &=
			\sum_{k_1,...,k_{\delta l}=0}^{\infty} e^{t\sum_{i=1}^{\delta l}k_i}\subProb_n(Y_1=k_1,...,Y_{\delta l}=k_{\delta l})\\
			&\leq \sum_{k_1,...,k_{\delta l}=0}^{\infty} e^{t\sum_{i=1}^{\delta l}k_i}\prod_{i=1}^{\delta l}Ce^{-c k_i}\\
			&= \left(\frac{C}{1-e^{-c/2}}\right)^{\delta l},
		\end{align*}
		for any $t\leq c/2$, where we used the uniform exponential decay property of the $Y_i's$ (Claim~\ref{claim:UnifExpDec}) in the inequality. This gives:
		\[
			\subProb_n(T<\delta l) = \subProb_n(S_{\delta l}>l) \leq e^{-cl}
		\]
		for some $c>0$ and $\delta$ small enough, via the application of the exponential version of Markov inequality.
	\end{proof}
	With~\eqref{eq:factorisation}, Claim~\ref{claim:expDecNoCut} implies exponential decay of $\rho_L$, $\rho_R$ and $\statMes$ (item~\ref{item:expDecOfSeqWeights}), as well as the bound $\subProb_{n}(A)\leq e^{-cn}$ (where $A$ is defined just above~\eqref{eq:factorisation}).

	\bigskip	
	To conclude the proof of Lemma~\ref{lem:decoupling}, we must still establish Item~\ref{item:totalVarEstimate} and show that $\statMes$ is indeed a probability measure. We start with the latter. As $\statMes$ is a positive measure, we have to prove that $\sum_{\bfx\in\calS^*} \statMes(\bfx)=1$. This will be done using a standard renewal argument. We will need the weights
	\begin{gather*}
	\rho_n^L = \sum_{b_L\in\bc_L,\bfx\in\alp^{n-1}} \rho_L(b_L\sqcup\bfx),\ \rho_n^R = \sum_{b_R\in\bc_R,\bfx\in\alp^{n-1}} \rho_L(\bfx\sqcup b_R),\ \nu_n = \sum_{\bfx\in\alp^n} \statMes(\bfx),
	\end{gather*}
	and the associated generating functions (recall \(\mu_n=\sum_{b_L,b_R}\sum_{\bfx\in\alp^n}\subProb_n(b_L\sqcup\bfx\sqcup b_R)\) from~\ref{hyp:subExpDecOfMass})
	\begin{gather*}
	\bbA(z) = \sum_{n=1}^{\infty}\mu_n z^n, \quad
	\bbB(z) = \sum_{n=1}^{\infty}\nu_n z^n, \quad
	\bbC_L(z) = \sum_{n=0}^{\infty} \rho_n^L z^n, \quad
	\bbC_R(z) = \sum_{n=0}^{\infty} \rho_n^R z^n.
	\end{gather*}
	Since 
	\begin{align*}
		\sum_{\bfx\in\calS^*} \statMes(\bfx) = \sum_{n=1}^{\infty}\sum_{\bfx\in\alp^n} \statMes(\bfx) = \sum_{n=1}^{\infty}\nu_n = \bbB(1),
	\end{align*}
	we only need to show that $\bbB(1)=1$. We will deduce this from a functional equation satisfied by the previously introduced generating functions:
	\begin{align}
		\bbA(z)
		&=
		\sum_{n=1}^{\infty} z^{n} \sum_{b_L,b_R} \sum_{\bfx\in\alp^n} \subProb_n(b_L\sqcup\bfx\sqcup b_R) \notag\\
		&=
		\sum_{n=1}^{\infty} z^{n} \subProb_n(A^c) + \sum_{n=1}^{\infty} z^{n} \sum_{b_L,b_R} \sum_{\bfx\in\alp^n} \subProb_n(b_L\sqcup\bfx\sqcup b_R,A) \notag\\
		&=
		g_{A^c}(z) + \sum_{k=1}^{\infty} \sum_{n=k}^{\infty} \sum_{\substack{l,r,l_1,\dots,l_k\geq 1\\ l+r+\sum l_i=n+2}} \rho_{l-1}^L z^{l-1} \rho_{r-1}^R z^{r-1} \prod_{i=1}^{k} \nu_{l_i}z^{l_i} \notag\\
		&=
		g_{A^c}(z) + \sum_{k=1}^{\infty} \bbC_L(z)\bbB(z)^k\bbC_R(z) \notag\\
		&=
		g_{A^c}(z) - \bbC_L(z)\bbC_R(z) + \frac{\bbC_L(z)\bbC_R(z)}{1-\bbB(z)}.
		\label{eq:reneq}
	\end{align}
	Now, denoting $r_{g_{A_c}}$, $r_{\bbA}$, $r_{\bbB}$, $r_{\bbC_{L}}$ and $r_{\bbC_{R}}$ the radii of convergence of, respectively, $g_{A_c}$, $\bbA$, $\bbB$, $\bbC_L$ and $\bbC_R$, \ref{item:expDecOfSeqWeights} and the exponential decay of $\subProb_n(A^c)$ imply that
	\begin{gather*}
		r_{g_{A_c}} > 1, \qquad
		r_{\bbB}>1, \qquad
		r_{\bbC_{L}} > 1,
		\qquad r_{\bbC_{R}} > 1.
	\end{gather*}
	Furthermore, Properties~\ref{hyp:subExpDecOfMass} implies that
	\[
		r_{\bbA} = 1.
	\]
	Together with~\eqref{eq:reneq}, this yields $\bbB(r_{\bbA})=1$ and thus $\bbB(1)=1$.
	
	\smallskip
	Finally, we prove Item~\ref{item:totalVarEstimate}. For \(f:\bc_L\times\alp^n\times\bc_R\to \R\) bounded,
	\begin{align*}
		\big|\sum f(\bfy)\subProb_n(\bfy)-\sum f(\bfy)&\decouplLaw_n(\bfy)\big|\\
		&\leq \big|\sum f(\bfy)\subProb_n(\bfy,A)-\sum f(\bfy)\decouplLaw_n(\bfy)\big| + \norm{f}_{\infty}\subProb_n(A^c)\\
		&\leq  \norm{f}_{\infty}Ce^{-cn}
	\end{align*}
	by exponential decay of $\subProb_n(A^c)$ (implied by Claim~\ref{claim:expDecNoCut}) and equation~\eqref{eq:factorisation}.
\end{proof}

Since all notations and estimates are provided here, we prove a few technical points which are not directly useful in this paper but which might be of use in later investigations.

\begin{lemma}
	\label{lem:finEnergBC}
	Under the assumption of Lemma~\ref{lem:decoupling}, there exists $\epsilon>0$ such that
	\[
		\frac{\subProb_n(b_L\sqcup\bfx_1^k,(k,k+1)\in \cut)}{\subProb_n(b_L\sqcup\bfx_1^k)} \geq \epsilon,
	\]
	uniformly in $b_L$, $k\geq 0$ and $\bfx_1^k$.
\end{lemma}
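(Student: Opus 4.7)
The plan is to reduce the conditional ratio in the statement to the unconditional ``cut mass'' bound already established in Claim~\ref{claim:UnifCut}, by extracting a past-independent expression from the numerator.

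I would first sum out the unconstrained memory variables $I_1,\dots,I_k$ in both the numerator and the denominator, since they do not enter the cut event at edge $(k,k+1)$; both sides then share the common prefactor $\subProb(b_L)\prod_{i=1}^k\subProb(x_i\given b_L\sqcup\bfx_1^{i-1})$, which cancels. On the cut event $I_m\leq m-k-1$ for $m>k$, each remaining weight $\Delta_{I_m}(x_m\given b_L\sqcup\bfx_1^{m-1})$ depends on its context only through its last $I_m\leq m-k-1$ characters, all of which lie inside $\bfx_{k+1}^{m-1}$; the same holds for $\Delta_{I_{n+1}}(b_R\given\cdot)$. Summing the cut-compatible memory configurations thus produces
\[
N^* = \sum_{\bfx_{k+1}^n,b_R}\prod_{i=k+1}^n a_{i-k-1}(x_i\given\bfx_{k+1}^{i-1})\cdot a_{n-k}(b_R\given\bfx_{k+1}^n),
\]
which is independent of $b_L\sqcup\bfx_1^k$, while the unconstrained analogue
\[
M(b_L\sqcup\bfx_1^k) = \sum_{\bfx_{k+1}^n,b_R}\prod_{i=k+1}^n \subProb(x_i\given b_L\sqcup\bfx_1^{i-1})\cdot\subProb(b_R\given b_L\sqcup\bfx_1^n)
\]
does depend on the past. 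The ratio in the statement equals $N^*/M(b_L\sqcup\bfx_1^k)$, and it suffices to show this is bounded below by a positive constant, uniformly in the prefix.

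I would then bound $N^*$ below by recognising it as the cut mass of the length-$(n{-}k)$ subchain with a fresh boundary: the telescoping in the proof of Claim~\ref{claim:UnifCut}, specialised to $i=1,\,j=n-k+1$ (so that the conditioning on the ``future'' memory $I_{j+1}^{n-k+1}$ is vacuous), gives
\[
N^* \geq \Bigl(\tfrac{\epsilon_0}{K}\Bigr)^{L_0}\prod_{\ell\geq L_0}(1-e^{-c\ell})\cdot \mu^\natural_{n-k},
\]
where $\mu^\natural_{n-k}$ is the total mass of the same subchain. The constant depends only on $K,L_0,c,\epsilon_0$.

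The main step is then to produce a matching uniform upper bound $M(b_L\sqcup\bfx_1^k)\leq C\cdot\mu^\natural_{n-k}$. I would split the product over $i$ at the threshold $k+L_0$: for $i>k+L_0$ the ratio exponential mixing~\ref{hyp:expMix} in its pointwise form~\eqref{eq:seqElExpMix} yields $\subProb(x_i\given b_L\sqcup\bfx_1^{i-1})\leq(1-e^{-c(i-k-1)})^{-1}a_{i-k-1}(x_i\given\bfx_{k+1}^{i-1})$, which removes the past-dependence at the cost of a convergent multiplicative constant $C_1=\prod_{\ell\geq L_0}(1-e^{-c\ell})^{-1}$; the same bound applied to the $b_R$-factor contributes a harmless additional constant; finally, for $k<i\leq k+L_0$ the crude summability bound $\subProb(x_i\given\cdot)\leq K$ from~\ref{hyp:unifSumability} gives an extra factor $K^{L_0}$. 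Combining these estimates and re-summing yields the claimed bound with $C=C_1K^{L_0+1}$. Putting the three steps together proves the lemma with an explicit $\epsilon=(\epsilon_0/K)^{L_0}\prod_{\ell\geq L_0}(1-e^{-c\ell})/C$.

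The main obstacle is the third step: the ``boundary block'' of $L_0$ indices where exponential mixing does not yet apply pointwise, so that replacing $\subProb$ by $K$ loses the corresponding $a$-factors present in $\mu^\natural_{n-k}$. Reinserting those factors uses finite-energy~\ref{hyp:azero} to bound $a_j(\cdot)\geq\epsilon_0\mathds{1}_{\{\cdot=s\}}$ at the distinguished letter of~\ref{hyp:azero}; one must check that the residual sum, after these substitutions, remains comparable to $\mu^\natural_{n-k}$, which is exactly the content of the uniform upper bound on $M$.
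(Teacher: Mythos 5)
Your first step is right and matches the way the paper's computation begins: summing the unconstrained memories $I_1,\dots,I_k$ and telescoping the constrained ones into the $a$-weights does reduce the ratio to $N^*/M(b_L\sqcup\bfx_1^k)$, and your step 2 (lower-bounding $N^*$ by the cut mass of the length-$(n-k)$ chain via the argument of Claim~\ref{claim:UnifCut} with vacuous future conditioning) is fine in spirit, up to bookkeeping of the absent left-boundary weight. The genuine gap is in your third step, which you yourself identify as the main one. After you bound the kernel factors at positions $k+1,\dots,k+L_0$ by $K$ (via~\ref{hyp:unifSumability}) and the remaining factors by $(1-e^{-c\ell})^{-1}a_\ell$ (via~\ref{hyp:expMix}), what you obtain is $K^{L_0}$ times an \emph{unweighted} sum over the letters $\bfx_{k+1}^{k+L_0}$, each term carrying only the later $a$-factors. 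This is not bounded by a constant times $\mu^\natural_{n-k}$: in $\mu^\natural_{n-k}$ (whichever normalisation you intend) those first $L_0$ letters are weighted by $a_0,\dots,a_{L_0-1}$ (resp.\ by kernels), and these admit no pointwise lower bound over the alphabet — hypothesis~\ref{hyp:azero} gives $a_0(s\given *)\geq\epsilon_0$ for a single distinguished letter only, and the weights may vanish for all others; when $\alp$ is infinite the quantity you obtain need not even be finite. The patch you sketch is circular: $a_j(\cdot)\geq\epsilon_0\mathds{1}_{\{\cdot=s\}}$ is a lower bound on $a_j$, useful for $N^*$, but it cannot reinsert weights into an unweighted sum over all letters; and "checking that the residual sum remains comparable to $\mu^\natural_{n-k}$" is precisely the unproved statement. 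Structurally, routing both numerator and denominator through a past-independent reference requires a uniform (Doeblin-type) comparability of conditional partition functions whose pasts differ in their last $L_0$ letters, which does not follow from~\ref{hyp:unifSumability}--\ref{hyp:azero} and is established nowhere in the appendix.

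The paper's proof avoids this detour entirely: it keeps the prefix $b_L\sqcup\bfx_1^k$ fixed and repeats the computation of~\eqref{eq:UnifExpDecComput} and Claim~\ref{claim:UnifCut}, comparing the cut-constrained sum to the unconstrained one \emph{with the same letters and the same kernels}. Each memory constraint at depth $\ell\geq L_0$ is removed at cost $(1-e^{-c\ell})$ through the pointwise bound $a_\ell(x_m\given\bfx_{k+1}^{m-1})\geq(1-e^{-c\ell})\subProb(x_m\given b_L\sqcup\bfx_1^{m-1})$, after which the factors recombine into $\subProb_n(b_L\sqcup\bfx_1^k,\dots)$, and only the first $L_0$ depths are paid for by a finite-energy factor of the type $(\epsilon_0/K)^{L_0}$; no upper bound on $M$ and no fresh-boundary partition function ever enters. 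To salvage your scheme you would have to prove the cross-past comparability of partition functions as a separate statement; otherwise you should switch to this in-place comparison.
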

\begin{proof}
	This follows form the same computation as in the argument leading to Claim~\ref{claim:UnifCut} in the proof of Lemma~\ref{lem:decoupling}.
\end{proof}

\begin{lemma}
	\label{lem:finEnergProc}
	If there exist $\delta>0$ and an element $s_0\in\alp$ such that 
	\[
		\subProb(s_0\given b)\geq \delta,
	\]
	for all $b\in\compl{\bc}_L$, then
	\[
		\statMes\bigl(\bfX=(s_0)\bigr)\geq \delta.
	\]
\end{lemma}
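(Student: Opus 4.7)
The plan is to unpack the definition~\eqref{eq:weights} of \(\statMes\) at \(n=1\) and observe that the sum collapses to a single term that is immediately bounded below by the hypothesis. Writing the definition explicitly for the one-letter word \((s_0)\),
\[
\statMes\bigl((s_0)\bigr) = \sum_{I\sim(1,1)} \Delta_{I_1}(s_0\given \emptyset).
\]
The first step will be to argue that the cluster-shape constraint \(I\sim(1,1)\) admits a unique \(I\): having two singleton clusters in the stick-percolation picture forces the edge separating them to be a cut, which via the encoding \(k\mapsto (k-I_k-\tfrac12,\,k+\tfrac12)\) translates to \(I_1 = 0\). Hence only one summand survives, and
\[
\statMes\bigl((s_0)\bigr) = \Delta_0(s_0\given\emptyset).
\]

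The second step is to recognize, directly from the definitions of \(\Delta_0\) and \(a_0\) preceding \eqref{eq:weights}, that
\[
\Delta_0(s_0\given\emptyset) = a_0(s_0\given *) = \inf_{l\in\compl{\bc}_L} \subProb(s_0\given l),
\]
which is at least \(\delta\) by the assumption of the lemma. Concatenating the two equalities with the lower bound yields the claim.

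There is no serious obstacle here: the lemma is essentially a one-line corollary of the setup in Section~\ref{app:Ren_Setting}. The only point that requires a little care is the bookkeeping of the stick-percolation/memory correspondence — in particular verifying that ``\(I\sim(1,1)\)'' really does pin \(I_1=0\), which is a direct reading of the construction described right after Figure~\ref{fig:StickPerco}. Once that matching is made, the hypothesis of the lemma is precisely the content of \(a_0(s_0\given *) \geq \delta\), and there is nothing else to do.
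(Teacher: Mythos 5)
Your proof is correct and follows essentially the same route as the paper, whose proof is the one-liner ``by definition of \(\statMes\), \(\statMes((s_0)) = a_0(s_0\given *) = \inf_{b\in\compl{\bc}_L}\subProb(s_0\given b)\geq \delta\)''; you simply make explicit the bookkeeping that \(I\sim(1,1)\) forces \(I_1=0\) so that the sum in \eqref{eq:weights} collapses to \(\Delta_0(s_0\given\emptyset)=a_0(s_0\given *)\).
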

\begin{proof}
	By definition of $\statMes$, $\statMes((s_0)) = a_0(s_0) = \inf_{b\in\compl{\bc}_L}\subProb(s_0\given b)\geq \delta$.
\end{proof}

Before ending this sub-section, we observe two facts about the boundary pieces:
\begin{remark}
	\label{rem:finEnergBC}
	\begin{itemize}
		\item The same argument as in Lemma~\ref{lem:finEnergProc} gives the same result for $\statMes$ replaced by $\rho_R$ and $b_R\in\bc_R$ instead of $s_0\in\alp$.
		\item If $\subProb(b_L)\geq \delta$ for some $b_L$, then $\rho_L(b_L)\geq \delta$ (using the definition of $\rho_L$).
	\end{itemize}
\end{remark}

\subsection{Application to Random Walks with Exponentially Decaying Memory}
\label{sec:RenMem}
In order to avoid confusion, we continue to use $|\cdot|$ for the length of a sequence and use $\norm{\cdot}$ for the norm in $\R^d$.
We now apply results of the previous section to the setting where $\bc_L,\calS,\bc_R$ come equipped with a \emph{displacement application}: i.e. a function $\displace:\bc_L\cup\bc_R\cup\calS\to\Z^d$. This naturally induces an application $\trajApp$ from $\bc_L\times\calS^n\times\bc_R$ to the space of trajectories of $n+2$-steps random walk in $\Z^d$: denoting $\displace(\bfx_1^{n})=\sum_{i=1}^{n}\displace(x_i)$ (and similarly for $b_L\sqcup\bfx\sqcup b_R$, etc.),
\[
\trajApp_{u}(b_L\sqcup\bfx_1^n\sqcup b_R) = \bigl(u,u+\displace(b_L),u+\displace(b_L\sqcup\bfx_1^1),\dots,u+\displace(b_L\sqcup\bfx_1^n\sqcup b_R)\bigr),
\]
where $u\in\Z^d$ is the starting point of the trajectory. We continue to denote $\subProb_n$ the push-forward of $\subProb_n$ by $\trajApp$. It will be convenient to denote 
\[
\tilde{S} = (\tilde{S}_m)_{m=0}^{n+1} \equiv\tilde{S}(u,b_L\sqcup\bfx_1^n\sqcup b_R)=\trajApp_{u}(b_L\sqcup\bfx_1^n\sqcup b_R).
\]
In turn, $\trajApp$ induces an application $\trajAppFact$ from $(\bc_L\times\calS^n\times\bc_R,\clusterSet_n)$ to the trajectories of random walk with $\leq n+2$ steps via (denote $\bfy_1^{n+2}=b_L\sqcup\bfx_1^n\sqcup b_R$)
\[
\trajAppFact_{u}(\bfy,(l_0,\dots,l_k))=(u,u+\displace(\bfy_1^{l_0}),u+\displace(\bfy_1^{l_0+l_1}),\dots,u+\displace(\bfy_{1}^{n+2}) ).
\]
Again, denote
\[
\bar{S} = (\bar{S}_m)_{m=0}^{k+1} \equiv \bar{S}(u,b_L\sqcup\bfx_1^n\sqcup b_R,(l_0,\dots,l_k))=\trajAppFact_{u}(b_L\sqcup\bfx_1^n\sqcup b_R,(l_0,\dots,l_k)).
\]
The goal of this section is to give properties of the push-forward measure of $\subProb_n$ by $\trajAppFact$, denoted $\traj{\subProb_n}$, under hypotheses~\ref{hyp:unifSumability},~\ref{hyp:expMix},~\ref{hyp:subExpDecOfMass} and~\ref{hyp:azero} and some additional properties, namely:

\begin{enumerate}[label=(P\arabic*)]
	\item \label{prop:expDecSteps} there exist $C,c>0$ such that
	\[
		\subProb(\norm{\displace(x)}=l\given b) \leq Ce^{-cl}
	\]
	uniformly in $b$ (in particular, there exists $\Cl{expDecSteps}>0$ such that $\subProb(\norm{\displace(x)}=l\given b)\leq e^{-\Cr{expDecSteps}l}$ for $l$ large enough);
	\item \label{prop:directed} directedness: $\displace(x) \cdot \eone >0$ for all $x\in\bc_L\cup\bc_R\cup\calS$. In this case, it makes sense to distinguish the displacement along the first coordinate from the others and to denote
	\[
	\tilde{S}_k = (\tilde{S}_k^{\parallel},\tilde{S}_k^{\perp}) \in \Z\times\Z^{d-1},
	\] and similarly for $\bar{S}$;
	\item \label{prop:aperiod} aperiodicity: there exists $s_0\in\calS$ with $\displace(s_0)=\eone$ such that $\subProb(s\given b)\geq \epsilon_1>0$ uniformly over $b\in\compl{\bc}$;
	\item \label{prop:irred} irreducibility: there exist $r>0$ and $s_i\in\calS$, $i=1,\dots,d-1$, with $\displace(s_i)=(r,q_i)$ such that $\subProb(s_i\given b)\geq \epsilon_2>0$ uniformly over $b\in\compl{\bc}$, where $(q_i)_j=\IF{i=j}$, $j=1,\dots,d-1$;
	\item \label{prop:trajSym} trajectory symmetry (under~\ref{prop:directed}, starting point $u=0$):
	\begin{multline*}
	\subProb_n\Bigl(\tilde{S}=\bigl(0_{d},(v_L^{\parallel},v_L^{\perp}),(v_1^{\parallel},v_1^{\perp}),\dots,(v_R^{\parallel},v_R^{\perp})\bigr)\Bigr) \\
	=
	\subProb_n\Bigl(\tilde{S}=\bigl(0_{d},(v_L^{\parallel},-v_L^{\perp}),(v_1^{\parallel},-v_1^{\perp}),\dots,(v_R^{\parallel},-v_R^{\perp})\bigr)\Bigr).
	\end{multline*}
\end{enumerate}

Recall that \(\calS^*=\bigcup_{k=1}^\infty \alp^n\) and see the beginning of Section~\ref{sec:DecouplingRandomSequences} for the definitions of
\(\p\), \(\bfX\), \(\mathcal{R}_L\), \(\bfB_{L/R}\), \(\decouplLaw\) and \(\decouplLaw_n\).
Denote the push-forward of $\decouplLaw$ (resp. $\decouplLaw_n$) by $\traj{\decouplLaw}$ (resp.\ $\traj{\decouplLaw_n}$). Notice that, by construction, $\traj{\decouplLaw_n}$ and $\traj{\subProb_n}$ are both measures on \(\mathrm{Traj}_n=\bigcup_{k=1}^{n}(\Z^d)^{2+k}\). Recall $\rho_L$, $\rho_R$ and $\statMes$ defined in the previous section. We use the same notations for their push-forward. The goal of this section is to prove the following
\begin{theorem}
	\label{thm:procToRW}
	If $\subProb$ satisfies hypotheses~\ref{hyp:unifSumability},~\ref{hyp:expMix},~\ref{hyp:subExpDecOfMass} and~\ref{hyp:azero}, and Property~\ref{prop:expDecSteps}, then \(\statMes\) is a probability measure on $\Z^d$ and:
	\begin{enumerate}
		\item \label{it:TVThmprocToRW} there exist $c>0$, $C\geq 0$ such that, for all \(n\) and any bounded \(f:\mathrm{Traj}_n\to\R\),
		\[
			\big|\sum_{v\in \mathrm{Traj}_n} f(v)\traj{\subProb_n}(v)-\sum_{v\in \mathrm{Traj}_n} f(v)\traj{\decouplLaw_n}(v) \big| \leq \norm{f}_{\infty}Ce^{-cn}.
		\]
		\item Let $\bfX$ be a random variable with law $\statMes$ and write $X=\displace(\bfX)\in\Z^{d}$ (and define similarly $B_L,B_R$ from $\bfB_L,\bfB_R$). There exist $c>0$, $C\geq 0$ such that
		\[
			\statMes(\|X\|=l) \leq Ce^{-cl }
			\quad\text{ and }\quad
		    \rho_{L/R}(\|B_{L/R}\|=l) \leq Ce^{-cl }.
		\]
	\end{enumerate}
	Given $S_0\in\Z^d$ and an i.i.d.\ sequence $X_1=\displace(\bfX_1),X_2=\displace(\bfX_2),\dots$ (with $\bfX_i\sim\statMes$), denote
	\[
	S_k=S_0+\sum_{i=1}^{k}X_i 
	\]
	and write \(S^\parallel = (S^\parallel_k)_{k\geq 0}\) and \(S^\perp = (S^\perp_k)_{k\geq 0}\).
	Under~\ref{prop:directed} we have:
	\begin{itemize}
		\item if~\ref{prop:aperiod} is satisfied, $S^{\perp}$ and $S^{\parallel}$ are aperiodic,
		\item if~\ref{prop:aperiod} and~\ref{prop:irred} are satisfied, \(S^{\perp}\) is also irreducible and $S^{\parallel}$ can attain every $k>S_0^{\parallel}$ with positive probability.
		\item If~\ref{prop:trajSym} is satisfied, $\statMes(X_1^{\perp}=u)=\statMes(X_1^{\perp}=-u)$ and
		\[
		\p_{(0,u)}\bigl(\exists n: S_n=(L,v) \bigr) 
		=
		\p_{(0,v)}\bigl(\exists n: S_n=(L,u) \bigr),
		\]
		where \(\p_{(k,u)} \) denotes the law of $S$ for $S_0=(k,u)$.
	\end{itemize}
\end{theorem}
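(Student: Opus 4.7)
The plan is to derive Theorem~\ref{thm:procToRW} as a collection of relatively direct consequences of Lemma~\ref{lem:decoupling} together with finite-energy inputs (Lemma~\ref{lem:finEnergProc} and Remark~\ref{rem:finEnergBC}). The fact that \(\statMes\) is a probability measure is already verified inside the proof of Lemma~\ref{lem:decoupling} via the renewal generating-function argument, so I would only need to cite it here. For Item~\ref{it:TVThmprocToRW}, the observation is that \(\traj{\subProb_n}\) and \(\traj{\decouplLaw_n}\) are simultaneous push-forwards by the same map \(\trajAppFact\), so for any bounded \(f\) on trajectories,
\[
\sum_v f(v)\bigl(\traj{\subProb_n}-\traj{\decouplLaw_n}\bigr)(v)
=
\sum_{\bfy,(l_0,\dots,l_k)} (f\circ\trajAppFact)(\bfy,(l_0,\dots,l_k))\,\bigl(\subProb_n-\decouplLaw_n\bigr)(\bfy,(l_0,\dots,l_k)).
\]
Since \(\|f\circ\trajAppFact\|_\infty=\|f\|_\infty\), Lemma~\ref{lem:decoupling}\ref{item:totalVarEstimate} yields the bound directly.

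For the exponential-tail statement of Item~2, I would use the deterministic inequality \(\|\displace(\bfX)\|\leq\sum_{i=1}^{|\bfX|}\|\displace(x_i)\|\) together with~\ref{prop:expDecSteps} and Lemma~\ref{lem:decoupling}\ref{item:expDecOfSeqWeights}. Concretely, partitioning on \(|\bfX|=k\) and choosing \(\alpha>0\) small,
\[
\statMes(\|X\|\geq l)\leq \sum_{k\leq\alpha l}\statMes(|\bfX|=k)\,\statMes\Bigl(\sum_{i=1}^{k}\|\displace(x_i)\|\geq l\Bigm| |\bfX|=k\Bigr) +\sum_{k>\alpha l}\statMes(|\bfX|=k),
\]
where the second sum is \(\leq Ce^{-c\alpha l}\) and the first sum is controlled by a Chernoff bound using the fact that each \(\|\displace(x_i)\|\) is stochastically dominated, under the weights defining \(\statMes\), by a random variable with tail \(e^{-\Cr{expDecSteps}\cdot}\). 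The arguments for \(\rho_L,\rho_R\) are identical. The hard part here is justifying the dominance statement cleanly, because \(\statMes\) is not a product of conditional laws of \(\subProb\); however \(\Delta_k(\cdot\mid\cdot)\leq \subProb(\cdot\mid\cdot)\) for every argument, so each factor in~\eqref{eq:weights} inherits the same single-step tail as~\ref{prop:expDecSteps}, and this is enough.

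For the lattice-type properties under~\ref{prop:directed}, the input is Lemma~\ref{lem:finEnergProc}, which upgrades the uniform lower bounds on \(\subProb(s_i\mid b)\) in~\ref{prop:aperiod}, \ref{prop:irred} to positive \(\statMes\)-masses \(\statMes(\bfX=(s_i))>0\). Hence the one-step distribution of the associated walk satisfies \(\statMes(X=\eone)>0\) (giving \(P(X^{\parallel}=1)>0\) and \(P(X^{\perp}=0)>0\)) and \(\statMes(X=(r,q_i))>0\) for each \(i=1,\dots,d-1\). Aperiodicity of \(S^{\parallel}\) follows from \(\gcd\bigl(\operatorname{supp}(X^{\parallel})\bigr)=1\), aperiodicity of \(S^{\perp}\) from \(P(X^{\perp}=0)>0\), and the spanning property needed for irreducibility of \(S^{\perp}\) (in the lattice sense underlying the LLT of~\cite{Ioffe-2015}) follows since \(\{q_i-q_j\}_{i,j}\cup\{q_i\}\) generates \(\Z^{d-1}\). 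The reachability claim for \(S^{\parallel}\) is immediate by iterating the \(X=\eone\) step.

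Finally, for Item under~\ref{prop:trajSym}, the symmetry of \(\statMes\) in the perpendicular coordinate is inherited by the push-forward map: the weights \(\Delta_k\) are defined through infima of \(\subProb\) over boundary conditions, and hence~\ref{prop:trajSym} passes through~\eqref{eq:weights} to give \(\statMes(X^{\perp}=u)=\statMes(X^{\perp}=-u)\) (and similarly for \(\rho_{L/R}\)). The hitting-probability identity then follows by the standard manipulation
\[
\pbf_{(0,u)}\bigl(\exists n:S_n=(L,v)\bigr)
=\pbf\Bigl(\exists n:S_n^{\parallel}=L,\,\textstyle\sum_{i\leq n}X_i^{\perp}=v-u\Bigr),
\]
combined with the fact that \(\sum_{i\leq n}X_i^{\perp}\stackrel{d}{=}-\sum_{i\leq n}X_i^{\perp}\) under i.i.d.\ symmetric steps. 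The main obstacle across the whole proof is the cleanest formulation of the tail bound in Item~2, since it requires handling the implicit dependence inherent in the definition of \(\Delta_k\) without losing the uniform exponential factor coming from~\ref{prop:expDecSteps}.
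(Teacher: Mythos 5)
Your proposal is correct and follows essentially the same route as the paper: everything is reduced to Lemma~\ref{lem:decoupling} (that \(\statMes\) is a probability measure, the comparison estimate, and the exponential tails in the word length \(|\bfX|\)), the tails in \(\norm{\cdot}\) are obtained by combining this with the uniform step estimate~\ref{prop:expDecSteps} via \(\Delta_k\leq\subProb\), and the aperiodicity/irreducibility/symmetry claims come from Lemma~\ref{lem:finEnergProc} and the symmetry of the weights \(a_k,\Delta_k\), exactly as in the paper. The only differences are cosmetic: the paper packages the tail bound as an exponential-moment (Cauchy--Schwarz) computation (Claim~\ref{claim:expDecRWstep}) rather than your partition-plus-Chernoff argument, and for Item~1 one should strictly speaking invoke the factorization~\eqref{eq:factorisation} (which identifies the joint laws of sequence and cluster structure on the event \(A\)) together with the exponential bound on \(\subProb_n(A^\comp)\), rather than the literal statement of Lemma~\ref{lem:decoupling}\ref{item:totalVarEstimate}, which only concerns functions of the sequence --- this is also how the paper itself reads it.
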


Before starting the proof, we make a small remark on the boundary conditions:
\begin{remark}\label{rem:BCtrivial}
	By Theorem~\ref{thm:procToRW}, for $n$ large enough, $\subProb_n(\,\cdot\,,b_L=(0,0),b_R=(0,0))\geq \frac{1}{2}\decouplLaw(\,\cdot\,,\bfB_L=(0,0),\bfB_R=(0,0))$. If $\subProb((0,0))\geq\delta_1>0$ and $a_0((0,0))\geq \delta_2>0$, then one can apply Remark~\ref{rem:finEnergBC} to obtain:
	\[\subProb_n(\cdot,b_L=(0,0),b_R=(0,0))\geq \frac{\delta_1\delta_2}{2} \p_{(0,0)}\bigl(\cdot,\exists k: S_k=(n,0) \bigr). \]
\end{remark}
\begin{proof}
	Using hypotheses~\ref{hyp:unifSumability},~\ref{hyp:expMix},~\ref{hyp:subExpDecOfMass} and~\ref{hyp:azero}, we can deduce from Lemma~\ref{lem:decoupling} that
	\begin{itemize}
		\item $\rho_L$, $\rho_R$ are positive measures on $\calS^*$ with finite total mass and satisfying
		\[
			\rho_{L/R}(|\bfB_{L/R}|=l) \leq Ce^{-cl} ;
		\]
		\item $\statMes$ is a probability measure on $\calS^*$ satisfying
		\[
			\statMes(|\bfX|=l) \leq Ce^{-\Cl{expDecSeq}l} ;
		\]
		\item Item~\ref{it:TVThmprocToRW} holds.
	\end{itemize}
	We thus only need to deduce exponential decay in $\norm{\cdot}$ from exponential decay in $\left|\cdot\right|$. Let us denote expectation under $\statMes$ by $\ebf$.
	Exponential decay in $\norm{\cdot}$ follows from
	\begin{claim}
	\label{claim:expDecRWstep}
		There exists $t_0>0$ such that
		\[
			\ebf\Bigl[e^{t \norm{X}}\Bigr]<\infty,
		\]
		for all $t\leq t_0$.
	\end{claim}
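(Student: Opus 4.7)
The plan is to re-run the stick-percolation estimates of Lemma~\ref{lem:decoupling} with an additional weight $e^{t\norm{V(x_k)}}$ attached to each letter $x_k$, and to verify that the exponential decay in the length $|\bfX|$ survives this perturbation. By the triangle inequality $\norm{V(\bfx)}\leq\sum_{i=1}^{|\bfx|}\norm{V(x_i)}$, it suffices to show that $\ebf\bigl[\prod_{i=1}^{|\bfX|}e^{t\norm{V(x_i)}}\bigr]<\infty$ for $t>0$ sufficiently small.

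I would introduce the weighted counterpart of the constant $K$ from~\ref{hyp:unifSumability}:
\[
M_t=\sup_{b\in\compl{\bc}_L}\sum_{s\in\alp}e^{t\norm{V(s)}}\,\subProb(s\given b),
\]
which by~\ref{prop:expDecSteps} is finite for every $t$ in some right-neighbourhood of $0$. Since $\Delta_{I_k}(s\given\bfx_1^{k-1})\leq a_{I_k}(s\given\bfx_1^{k-1})\leq\subProb(s\given b\sqcup\bfx_{k-I_k}^{k-1})$ for any $b\in\compl{\bc}_L$, this yields the uniform weighted summability bound
\[
\sum_{s\in\alp}e^{t\norm{V(s)}}\,\Delta_{I_k}(s\given\bfx_1^{k-1})\leq M_t,
\]
valid for all $\bfx$, $k$ and $I_k$, which is the weighted analogue of the bound $\sum_s\Delta_{I_k}(s\given\bfx)\leq K$ underlying the estimates of Lemma~\ref{lem:decoupling}.

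Replacing $K$ by $M_t$ and inserting the weight $e^{t\norm{V(x_k)}}$ into every summation over a letter $x_k$, the derivations of~\eqref{eq:UnifExpDecComput}--\eqref{eq:UnifExpDec}, Claim~\ref{claim:UnifExpDec}, Claim~\ref{claim:UnifCut} and Claim~\ref{claim:expDecNoCut} all carry through, producing rates $c(t),\epsilon(t),L_1(t)$ which depend continuously on $M_t$ and reduce at $t=0$ to the strictly positive constants furnished by the original claims. They therefore remain strictly positive for $t$ in a right-neighbourhood $[0,t_0)$ of $0$, and the weighted analogue of Claim~\ref{claim:expDecNoCut} yields, for some $c(t)>0$ and all $n\geq 1$,
\[
\sum_{\bfx\in\alp^n}\Bigl(\prod_{i=1}^n e^{t\norm{V(x_i)}}\Bigr)\statMes(\bfx)\leq C\,e^{-c(t)n}.
\]
Summing over $n$ gives $\ebf[e^{t\norm{V(\bfX)}}]<\infty$ for $t\in(0,t_0)$, as claimed. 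The main obstacle is checking that the rates extracted from Claims~\ref{claim:UnifCut} and~\ref{claim:expDecNoCut} remain strictly positive after the replacement $K\leadsto M_t$; this reduces to the continuous dependence of those rates on the uniform-summability constant together with the boundedness of $M_t$ near $t=0$.
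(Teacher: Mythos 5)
Your argument is correct, but it takes a genuinely different (and heavier) route than the paper. The paper never revisits the stick-percolation machinery: it writes $\norm{X}\leq\sum_{i=1}^{|\bfX|}\norm{\displace(\bfX_i)}$, splits off the length by Cauchy--Schwarz, $\ebf\bigl[e^{t\sum_{i=1}^{n}\norm{\displace(\bfX_i)}}\IF{|\bfX|=n}\bigr]\leq\bigl(\ebf\bigl[e^{2t\sum_{i=1}^{n}\norm{\displace(\bfX_i)}}\bigr]\,\statMes(|\bfX|=n)\bigr)^{1/2}$, bounds the first factor by $\Lambda(t,m_0)^{n}$ using only the uniform step tail~\ref{prop:expDecSteps} (expanding over which steps exceed a threshold $m_0$), and then beats this geometric growth with the exponential tail of $|\bfX|$ already provided by Lemma~\ref{lem:decoupling}, choosing first $m_0$ large and then $t$ small. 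Your proposal instead reruns Lemma~\ref{lem:decoupling} for the tilted kernel $e^{t\norm{\displace(s)}}\subProb(s\given b)$, and this is legitimate: the tilt cancels in the ratios of~\ref{hyp:expMix}, \ref{hyp:azero} persists because the tilt is $\geq 1$, \ref{hyp:unifSumability} holds with $K$ replaced by your $M_t$, finite precisely by~\ref{prop:expDecSteps}, and since the tilted $a_k$ and $\Delta_k$ are exactly $e^{t\norm{\displace(s)}}a_k$ and $e^{t\norm{\displace(s)}}\Delta_k$, the tilted analogue of $\statMes$ is exactly the weighted sum you need. The constants in Claims~\ref{claim:UnifExpDec}, \ref{claim:UnifCut} and~\ref{claim:expDecNoCut} depend on the kernel only through $\epsilon_0/M_t$ and the mixing rate, so they do stay positive for $t$ below the rate in~\ref{prop:expDecSteps}; note also that~\ref{hyp:subExpDecOfMass} may fail for the tilted kernel, but it is not used in the proof of Item~\ref{item:expDecOfSeqWeights}, so this is harmless. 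The trade-off: the paper's proof is a few lines and reuses the untilted tail as a black box, whereas yours requires re-verifying a page of estimates but delivers a slightly stronger conclusion, namely that the tilted word-weights decay exponentially in $|\bfX|$, i.e.\ joint exponential moments of $|\bfX|$ and $\norm{X}$.
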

\begin{proof}
	We can assume, without loss of generality, that \(t\geq 0\). In that case,
	\begin{align}
	\ebf\bigl[e^{t \norm{X}}\bigr]
	&\leq
	\sum_{n=1}^{\infty} \ebf\bigl[e^{t \sum_{i=1}^{n}\norm{\displace(\bfX_i)}}\IF{|\bfX|=n}\bigr]\notag \\
	&\leq
	\sum_{n=1}^{\infty} \sqrt{\ebf\bigl[e^{2t\sum_{i=1}^{n}\norm{X_i}}\bigr] \ebf\bigl[\IF{|\bfX|=n}\bigr]},
	\label{eq:CS}
	\end{align}
	by the Cauchy--Schwartz inequality. Then,
	\begin{align*}
	\ebf\bigl[e^{2t\sum_{i=1}^{n}\norm{X_i}}\bigr]
	&\leq
	\sum_{k=0}^{n} \sum_{\substack{A\subset\{1,...,n\}\\A=\{a_1,\ldots,a_k\}}} e^{2tm_0(n-k)} \ebf\bigl[e^{2t\sum_{i=1}^{k}\|X_{a_i}\|}\IF{\|X_{a_i}\|\geq m_0,i=1,\ldots,k}\bigr]\\
	&\leq
	\sum_{k=0}^{n} \sum_{\substack{A\subset\{1,...,n\}\\A=\{a_1,\ldots,a_k\}}} e^{2tm_0(n-k)} \sum_{\ell_1,\ldots,\ell_k=m_0}^{\infty} e^{2t\sum_{i=1}^{k}\ell_i} e^{-\Cr{expDecSteps} \sum_{i=1}^{k}\ell_i}\\
	&=
	\sum_{k=0}^{n} \binom{n}{k} e^{2t m_0 (n-k)} \Bigl(\sum_{\ell=m_0}^{\infty} e^{-(\Cr{expDecSteps}-2t)\ell} \Bigr)^k\\
	&=
	\Bigl(e^{2t m_0} + \frac{e^{-(\Cr{expDecSteps}-2t)m_0}}{1-e^{-(\Cr{expDecSteps}-2t)}} \Bigr)^n,
	\end{align*}
	for large enough \(m_0\) (the first inequality holds for any $m_0>0$, but we need $m_0$ to be large enough to use~\ref{prop:expDecSteps} (uniform exponential decay of the steps) in the second inequality). Plugging this into~\eqref{eq:CS} yields
	\begin{align*}
	\ebf\bigl[e^{t \norm{X}}\bigr]
	&\leq
	C\sum_{n=1}^{\infty} \Bigl(e^{2t m_0} + \frac{e^{-(\Cr{expDecSteps}-2t)m_0}}{1-e^{-(\Cr{expDecSteps}-2t)}}\Bigr)^{n/2} e^{-\Cr{expDecSeq} n/2}
	<
	\infty ,
	\end{align*}
	provided that $\bigl(e^{2t m_0} + \frac{e^{-(\Cr{expDecSteps}-2t)m_0}}{1-e^{-(\Cr{expDecSteps}-2t)}} \bigr) e^{-\Cr{expDecSeq}} < 1$, which is true for $t\in [0,t_0)$ for some $t_0=t_0(m_0)>0$, once $m_0$ is chosen large enough.
\end{proof}
The previous argument extends easily to obtain exponential decay in $\norm{\cdot}$ under $\rho_{L/R}$.

\bigskip
We now turn to the additional properties. The aperiodicity of $S^{\perp}$ follows immediately from~\ref{prop:aperiod}, since the latter gives $\statMes(X^{\perp}=0)\geq \epsilon_1>0$. The aperiodicity of $S^{\parallel}$ is done identically and so is the irreducibility of $S^{\perp}$ under~\ref{prop:irred}.

The symmetry is slightly less obvious. Start by observing that, by definition of $a_k(s|b_L\sqcup\bfx_1^n)$ and $\Delta_k(s|b_L\sqcup\bfx_1^n)$, \ref{prop:trajSym} implies that
\begin{align*}
a_k((v^{\parallel},v^{\perp}) \given b_L\sqcup\bfx_1^n)
&=
\sum_{\substack{s\in\alp:\\\displace(s)=(v^{\parallel},v^{\perp})}} a_k(s \given b_L\sqcup\bfx_1^n)\\
&=
\sum_{\substack{s\in\alp:\\\displace(s)=(v^{\parallel},-v^{\perp})}} a_k(s \given b_L\sqcup\bfx_1^n)
=
a_k((v^{\parallel},-v^{\perp}) \given b_L\sqcup\bfx_1^n),
\end{align*}
and, thus,
\[
\Delta_k((v^{\parallel},v^{\perp})|b_L\sqcup\bfx_1^n) = \Delta_k((v^{\parallel},-v^{\perp})|b_L\sqcup\bfx_1^n).
\]
Using this in the expansion of $\statMes\bigl(\displace(\bfX)=(v^{\parallel},v^{\perp})\bigr)$ as 
\[
\sum_{k}\sum_{v_1+\dots+v_k=v}\sum_{x_i:\displace(x_i)=v_i}\statMes(\bfX=(x_1,\dots,x_k)),
\] and using the definition of $\statMes$, one straightforwardly obtains
\[
	\statMes\bigl(\displace(\bfX)=(v^{\parallel},v^{\perp})\bigr) = \statMes\bigl(\displace(\bfX)=(v^{\parallel},-v^{\perp})\bigr) .
\]

Finally, for $L>0$ and $u,v\in\Z^{d-1}$,
\[
\p_{(0,u)}\bigl(\exists n: S_n=(L,v) \bigr) = \p_{(0,v)}\bigl(\exists n: S_n=(L,u) \bigr)
\] follows by summing over possible trajectories and applying the previous symmetry result.
\end{proof}


\section{Acknowledgments}
The authors thank the two referees for their careful reading and their questions and comments that substantially improved the readability of this work. They also acknowledge the support of the Swiss National Science Foundation through the NCCR SwissMAP.


\bibliographystyle{plain}
\bibliography{OV17}

\end{document}